\documentclass{article}
\usepackage[utf8]{inputenc}

\usepackage{pdfpages}
\usepackage[margin=1in]{geometry} 
\usepackage{amsmath,amsthm,amssymb}
\usepackage[utf8]{inputenc}
\usepackage{graphicx}
\usepackage{subfigure}

\newcommand{\R}{\mathbb{R}}
\usepackage[hidelinks]{hyperref}
\usepackage{float}
\usepackage{mathtools,cases}
\usepackage{bbm}
\usepackage{calrsfs}
\usepackage{breqn}
\usepackage{multirow}
\usepackage{xcolor}
\usepackage{comment}
\usepackage{soul}
\usepackage{bbold}

\usepackage[numbers]{natbib}

\newtheorem{theorem}{Theorem}[section]
\newtheorem{lemma}[theorem]{Lemma}

\newtheorem{corollary}[theorem]{Corollary}

\newtheorem{remark}[theorem]{Remark}

\newtheorem{definition}[theorem]{Definition}

\numberwithin{equation}{section}

\newcommand{\bl}{\color{blue}}

\usepackage{enumitem}

\usepackage{mathtools}
\usepackage{authblk}
\mathtoolsset{showonlyrefs}
\typeout{get arXiv to do 4 passes: Label(s) may have changed. Rerun}

\title{Reconciling rough volatility with jumps}

\author[1]{Eduardo Abi Jaber\thanks{eduardo.abi-jaber@polytechnique.edu. The first author is grateful for the financial support from the Chaires FiME-FDD, Financial Risks, Deep Finance \& Statistics and Machine Learning and systematic methods in finance at Ecole Polytechnique.}}
\author[2]{Nathan De Carvalho\thanks{nathan.decarvalho@engie.com. The second author is grateful for the financial support provided by Engie Global Markets. We would like to thank
Ryan McCrickerd and Andreas S{\o}jmark for fruitful discussions.}}
\affil[1]{Ecole Polytechnique, CMAP}
\affil[2]{Engie Global Markets $\&$ Université Paris Cité and Sorbonne Université, LPSM}

\begin{document}

\maketitle

\begin{abstract}
We reconcile rough volatility models and jump models using a class of \textit{reversionary Heston models} with fast mean reversions and large vol-of-vols.
Starting from hyper-rough Heston models with a Hurst index $H\in(-1/2,1/2)$, we derive a Markovian approximating class of one dimensional  \textit{reversionary Heston-type models}. Such proxies encode a trade-off between an exploding vol-of-vol and a fast mean-reversion speed controlled by a reversionary time-scale $\epsilon>0$ and an unconstrained parameter $H \in \mathbb{R}$. Sending  $\epsilon$ to $0$ yields convergence  of the \textit{reversionary Heston model} towards different explicit asymptotic regimes based on the value of the parameter $H$. In particular, for $H \le -1/2$, \textit{the reversionary Heston model} converges to a class of  Lévy jump processes of Normal Inverse Gaussian type.  Numerical illustrations  show that the \textit{reversionary Heston model} is capable of generating at-the-money skews similar to the  ones generated by rough, hyper-rough and jump models.
\end{abstract} 

\begin{description}
\item[Mathematics Subject Classification (2010):] 91G20, 60G22, 60G51 
\item[JEL Classification:] G13, C63, G10.   
    \item[Keywords:]  Stochastic volatility, Heston model, Normal Inverse Gaussian, rough Heston model,  Riccati equations
\end{description}

\newpage
\tableofcontents

\section*{Introduction}

% Modeling volatility is crucial in various financial applications such as derivatives pricing, risk management, and investment decision making. 

\noindent Since the 1987 financial crash, financial option markets have exhibited a notable implied volatility skew, especially for short-term maturities. This skew reflects the market's expectation of significant price movements on very short time scales in the underlying asset, which poses a challenge to traditional continuous models based on standard Brownian motion. To address this issue, the literature has developed several classes of models that capture the skewness in implied volatilities. Three prominent approaches  are:

% \noindent 
% Ever since the 1987 financial crash, implied volatility patterns have turned skewed so that financial option markets anticipate potential large price movements of the underlying especially on short time scales as illustrated by the explosion of the At-The-Money skew (atm-skew), which is not possible to account for using standard Brownian motion sample paths, for instance the standard Black-Scholes diffusion. To address this, three  main classes of models have been developed in the literature: 

\begin{itemize}
    \item conventional  one-factor stochastic volatility models boosted with large mean-reversion speed and vol-of-vol. This class of models have been  justified by  several empirical studies that have identified the presence of very fast mean-reversion in the S$\&$P volatility time series \cite{alizadeh2002range,chernov2003alternative, fouque2003multiscale,fouque2003short} and by the fact that they are able to correct conventional models  to reproduce the behavior of the at-the-money skew for short maturities  \cite{mechkov2015fast};
    
    \item jump diffusion models, especially the class of affine jump-diffusions for which valuation problems become (semi-)explicit using Fourier inversion techniques, see \cite{duffie_affine_jump_transforms_2003}. Such class of models incorporates  occasional and large  jumps to explain the   skew  observed implicitly on option markets, see \cite{tankov2003financial}, and \cite{bakshi_alternative_option_pricing_1997} for an empirical analysis of the impact of adding jumps to stochastic volatility diffusion on the implied volatility surface;
    
    \item rough volatility models, where the volatility process is driven by variants of the Riemann-Liouville fractional Brownian motion
    \begin{equation} \label{eq:RLfBM}
        W_t^{H} = \frac{1}{\Gamma \left( H+1/2 \right)}\int_0^t \left( t-s\right)^{H-1/2}dW_s, \quad t \geq 0,
    \end{equation}
    with $W$  a standard Brownian motion and $H \in \left(0,1/2\right)$  the Hurst index. Such models are  able to reproduce  the roughness of the spot variance's trajectories measured empirically \cite{gatheral2018volatility, bennedsen2016decoupling} together with  explosive behaviors of the at-the-money skew \cite{alos2006short, bayer2016pricing, el2019characteristic,  fukasawa2011asymptotic, abi2022characteristic}.
\end{itemize}

\noindent So far, in the mathematical finance community, jump diffusion models and rough volatility models have  often been treated as distinct approaches, and, in some cases, they have even been opposed to each other, see for instance  \cite[Section 5.3.1]{bayer2016pricing}. However, on the one side,  connections between rough volatility models and fast mean-reverting factors have  been established in \cite{abi2019lifting, abi2019markovian, abi2019multifactor}. On the other side, jump models have been related to fast regimes stochastic volatility models in \cite{mechkov2015fast,McCrickerd_2021}. In parallel, from the empirical point of view,  it can be very challenging for the human eye and for statistical estimators to distinguish between roughness, fast mean-reversions and jump-like behavior, as shown in \cite{abi2019multifactor, cont2022rough, garcin2022long}. \\

\noindent The above suggests that rough volatility and jump models may not be that different after all. Our main motivation is to establish for the first time in the literature  a connection between rough volatility and jump models through conventional volatility models with fast mean-reverting regimes. \\
 
 \noindent We aim to reconcile these two classes of models through the use of the celebrated conventional Heston model \cite{heston1993} but with a parametric specification which encodes a trade-off between a fast mean-reversion and a large vol-of-vol.  We define the \textit{reversionary Heston models}  as follow:
    \begin{align} \label{eq:reversionaryHestonInstantaneousSpot}
      & dS_t^\epsilon = S_t^\epsilon \sqrt{V_t^{\epsilon}} \left( \rho dW_t + \sqrt{1-\rho^2} dW_t^\perp \right), \quad S_0^\epsilon=S_0,\\ \label{eq:reversionaryHestonInstantaneousVol}
      & dV^\epsilon_t =  \left( \epsilon^{H- \frac{1}{2}} \theta - \epsilon^{-1} \left( V_t^{\epsilon}-V_0 \right) \right) dt + \epsilon^{H-\frac{1}{2}} \xi \sqrt{V_t^{\epsilon}}dW_t, \quad V^\epsilon_0=V_0,
    \end{align}
    where $\left( W, W^\perp \right)$ is a two-dimensional Brownian motion, $\theta \ge 0$, $S_0, \xi, V_0 >0$, $\rho \in [-1,1]$. The two crucial parameters here are the \textit{reversionary time-scale} $\epsilon>0$ and $H \in \mathbb{R}$. Such parametrizations nest as special cases the fast regimes extensively studied by \citet{fouque2003multiscale, feng2010shortMaturityAsymptotics}, see also \cite[Section 3.6]{fouque2000derivatives}, which correspond to the case $H=0$; %within the fast mean-reverting $\left( \epsilon, H \right)$-parametric Ornstein-Uhlenbeck factor (see Remark \ref{rk:fBMproxy} below) and the reversionary Heston model \eqref{eq:reversionaryHestonInstantaneousSpot}-\eqref{eq:reversionaryHestonInstantaneousVol} degenerates into the fast mean-reverting Heston model from \citet{feng2010shortMaturityAsymptotics} for $\theta=0$ and $H=0$. 
    and also the regimes studied in \cite{mechkov2015fast,McCrickerd_2021} for the case $H = -1/2$. Letting the parameter $H\in \mathbb R$ free in \eqref{eq:reversionaryHestonInstantaneousVol}  introduces more flexibility in practice    and leads to better fits with stable calibrated parameters across time as recently shown in \cite{abijaber2022joint, abijaber2022quintic}. In theory, it allows for a better understanding of the impact of the scaling in $H$ on the limiting behavior of the model as $\epsilon\to 0$  as highlighted in this paper.
    \\

\noindent In a nutshell, we show that:

\begin{enumerate}[label=(\roman*)]
    \item 
        {for $H>-1/2$, the \textit{reversionary Heston model}  can be constructed as a proxy of rough and hyper-rough Heston models  where $H \in (-1/2,1/2]$ plays  a similar role to that of}  the Hurst index,
        \item 
        for $H\leq -1/2$, as $\epsilon \to 0$,
        the \textit{reversionary Heston model}  converges towards Lévy jump processes of Normal Inverse Gaussian type with distinct regimes for $H=-1/2$ and $H<-1/2$ respectively,  
        \item 
        the \textit{reversionary Heston model} is capable of generating implied volatility surfaces and  at-the-money skews similar to the ones generated by rough, hyper-rough and jump models, and comes arbitrarily close to the at-the-money skew scaling as $\tau^{-0.5}$ for small $\tau$, contrary to widespread understanding. 
    \end{enumerate}

\noindent Our results allow for a  reconciliation between rough and jump models as they  suggest that jump models and (hyper-)rough volatility models are complementary, and do not overlap. For $H>-1/2$, the \textit{reversionary Heston model} can be interpreted as an engineering proxy of rough and hyper-rough volatility models, while for $H\leq -1/2$, it corresponds to  an approximation of jump models for small enough $\epsilon$. Asymptotically, jump models actually start at $H=-1/2$ (and below) in the Reversionary Heston model, the very first value of the Hurst index for which hyper-rough volatility models can no-longer be defined. \\

\noindent More precisely,  our argument is structured as follows.
First, in Section~\ref{hyperRoughToReversionary}, we show how the reversionary Heston model \eqref{eq:reversionaryHestonInstantaneousSpot}-\eqref{eq:reversionaryHestonInstantaneousVol} can be obtained as a Markovian and semimartingale proxy of rough and hyper-rough Heston models \cite{el2019characteristic,jusselin2020no} with Hurst index $H \in (-1/2,1/2)$. This is achieved using the resolvent of the first kind of  the shifted fractional kernel. 

\noindent Second, in Section~\ref{S:jointchar}, we derive the joint conditional characteristic functional of the log-price $\log S^{\epsilon}$ and the integrated variance $\bar{V}^\epsilon := \int_0^{\cdot} V^\epsilon_s ds$ in the model  \eqref{eq:reversionaryHestonInstantaneousSpot}--\eqref{eq:reversionaryHestonInstantaneousVol} in terms of a solution to a system of time-dependent Riccati ordinary differential equations; see Theorem~\ref{T:charfuneps}. Compared to the literature, we  provide a  novel and  concise proof for the existence and uniqueness of a global solution to such Riccati equations using the variation of constant formulas.

\noindent Finally, in Section~\ref{S:convjumps}, we establish the convergence of the log-price and the integrated variance $(\log S^{\epsilon},\bar V^{\epsilon})$ in  the reversionary Heston model \eqref{eq:reversionaryHestonInstantaneousSpot}-\eqref{eq:reversionaryHestonInstantaneousVol} towards a Lévy jump process $(X,Y)$, as $\epsilon$ goes to $0$. More precisely, we show that the limit $(X, Y)$ belongs to the class of Normal Inverse Gaussian - Inverse Gaussian (NIG-IG) processes which we construct from its Lévy exponent and we connect such class to first hitting-time representations in the same spirit of \citet{barndorff_nig_1997}. Our main 
Theorem~\ref{T:charfunlimit} provides the convergence of the finite-dimensional distributions of the joint process $(\log S^{\epsilon}, \bar V^{\epsilon})$  through the  study of the limiting behavior  of the Riccati equations and hence the  characteristic functional given in Theorem~\ref{T:charfuneps}. Interestingly, the limiting behavior disentangles three different asymptotic regimes based on the values of $H$. The convergence of the integrated variance process  is even strengthened  to a functional weak convergence on the Skorokhod space   of càdlàg paths on $[0,T]$ endowed with the $M_1$ topology. We stress that the usual $J_1$ topology is not useful here, since jump processes cannot be obtained as limits of continuous processes in the $J_1$ topology. \\

\noindent \textbf{Related Literature.} 
 Convergence of the reversionary Heston models towards jump processes: our results clarify and extend the results of \cite{mechkov2015fast,McCrickerd_2021}, derived for the case $H=-1/2$, that establish and make clear the precise limiting connection between the Heston log-price process and the normal inverse-Gaussian (NIG) process of \cite{barndorff_nig_1997}. Connections between the long time behavior of the Heston log-price process and NIG distribution were first exposed  in \cite{forde2011large, keller2011moment} and were the main motivations behind the work of \citet{mechkov2015fast}.\\
\noindent Relevance of fast regimes in practice: {the pricing of options near maturity is challenging because of the very steep slope of smiles observed on the market and \citet{fouque2003multiscale} showed that stochastic volatility should embed both a fast regime Ornstein-Uhlenbeck factor (see Remark \ref{rk:fBMproxy} below) from which approximations of option prices can be derived using a singular perturbation expansion, and a slowly varying factor to be able to match options with long maturities. On the other hand, \citet{feng2010shortMaturityAsymptotics} considers a Heston model with a fast mean-reverting volatility and uses large deviation theory techniques to derive an approximation price for out-of-the-money vanilla options when the maturity is small, but large compared to the characteristic time-scale of the stochastic volatility factor. More recently an Ornstein-Uhlenbeck process with the same parametrization as in \eqref{eq:reversionaryHestonInstantaneousVol} has been used to construct  the Quintic stochastic volatility model \cite{abijaber2022quintic} to achieve remarkable joint fits of SPX and VIX implied volatilities, outperforming its rough and path-dependent counterparts as shown empirically in \cite{abijaber2022joint}.}\\

\noindent \textbf{Notations.} For $p\geq 1$, we denote by $L^p_{loc}$ the space of measurable functions $f:\R_+\to \R$ such that $\int_0^T |f(s)|^p ds<\infty$, for all $T>0$. We will denote by $\sqrt{x}$ the principal square root of $x \in \mathbb{C}$, i.e.~its argument lies within $(-\pi/2,\pi/2]$. 

\section{From rough Heston to reversionary Heston} \label{hyperRoughToReversionary}

In this section, we show how reversionary Heston models \eqref{eq:reversionaryHestonInstantaneousSpot}-\eqref{eq:reversionaryHestonInstantaneousVol} can be seen as proxies of rough and hyper-rough Heston models whenever $H>-1/2$. 

\subsection{Rough and hyper-rough Heston}
Let $\left( W, W^\perp \right)$ be a two-dimensional Brownian motion defined on a filtered probability space $\left( \Omega, \mathcal F, \left( \mathcal{F}_t \right)_{t \geq 0}, \mathbb Q \right)$ which satisfies the usual conditions, where $\mathbb Q$ is the risk-neutral probability. Set $B := \rho W + \sqrt{1-\rho^2} W^\perp$ with $\rho \in \left[ -1, 1 \right]$.  We take as a starting point a stochastic volatility model for an underlying asset $P$ in terms of a time-changed Brownian motion:
\begin{equation} \label{HyperRoughSpot}
    dP_t = P_t dB_{\bar{U}_t},  \quad P_0>0,
\end{equation}
 for some non-decreasing continuous process $\bar{U}$. If $\bar{U}_t = \int_0^t U_s ds$, then $U$ would correspond to the spot variance and $\bar{U}$ plays the role of the integrated variance. The hyper-rough Volterra Heston model introduced in \cite{jusselin2020no} and studied further in \cite[Section 7]{abi2021weak} assumes that the dynamics of the integrated variance is of the form
\begin{equation} \label{HyperRoughIntegratedVariance}
    \bar{U}_t = \bar{G}_0(t) + \xi \int_0^t K_H(t-s)  W_{\bar{U}_s} ds,
\end{equation}
for a suitable continuous function $\bar{G}_0$, and $\xi > 0$, and $K_H$ is the fractional kernel
\begin{equation}\label{eq:fractionalkernel}
    K_H(t)  =  t^{H-1/2}, \quad t >0,
\end{equation}
 for $H \in (-1/2,1/2]$. The lower bound $H>-1/2$ ensures the $L^1_{loc}$ integrability of the kernel $K_H$ so that the stochastic convolution appearing in \eqref{HyperRoughIntegratedVariance} is well-defined. Furthermore if the kernel happens to be in $L^2_{loc}$, the following lemma ensures the existence of a spot variance process.

\begin{lemma}[Existence of spot variance]\label{existenceSpotVariance}
    Let $K \in L^2_{loc}$ and $g_0 \in L^1_{loc}$. Assume there exists a non-decreasing adapted process $\bar{U}$ and a Brownian motion $W$ such that
    \begin{equation} \label{VolterraIntegratedVariance}
        \bar{U}_t  = \int_0^t g_0(s)ds + \int_0^t K(t-s) W_{\bar{U}_s} ds,
    \end{equation}
    with $\sup_{t\le T} \mathbb{E} \left[ \left| \bar{U}_t \right| \right] < \infty$, for all $T>0$. Then, $\bar{U}_t = \int_0^t U_s ds$, where $U$ is a non-negative weak solution to the following stochastic Volterra equation
    \begin{equation} \label{VolterraSpotVariance}
        U_t = g_0(t) + \int_0^t K(t-s) \sqrt{U_s}dW_s,  \quad \mathbb{Q} \otimes dt-a.e.
    \end{equation}
    Conversely, assume there exists a non-negative weak solution $U$ to the stochastic Volterra equation \eqref{VolterraSpotVariance} such that $\sup_{t\le T} \mathbb{E} \left[ U^2_t \right] < \infty$, for all $T>0$, then $\bar{U}$ solves \eqref{VolterraIntegratedVariance}.
\end{lemma}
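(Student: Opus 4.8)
The plan is to identify the spot variance $U$ as the (a.e.\ defined) Lebesgue density of the non-decreasing clock $\bar U$, to read off its stochastic Volterra equation by integrating the equation for $\bar U$ by parts against a time-changed Brownian motion, and to obtain the converse by running the same computation in reverse. First I would observe that $K \in L^2_{loc} \subset L^1_{loc}$ and that $s \mapsto W_{\bar U_s}$ is a.s.\ locally bounded — it is $W$ evaluated along the non-decreasing path $\bar U$, which is finite for every $t$ because $\sup_{t \le T}\mathbb E|\bar U_t| < \infty$ — so the convolution $\int_0^\cdot K(\cdot - s) W_{\bar U_s}\, ds$ has a.s.\ continuous paths; hence \eqref{VolterraIntegratedVariance} forces $\bar U$ to be continuous with $\bar U_0 = 0$. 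In the time-change framework underlying \eqref{HyperRoughSpot}--\eqref{VolterraIntegratedVariance} (in the sense of \cite{jusselin2020no, abi2021weak}), $M_t := W_{\bar U_t}$ is then a continuous local martingale with $M_0 = 0$ and $\langle M \rangle_t = \bar U_t$, and \eqref{VolterraIntegratedVariance} reads $\bar U_t = \int_0^t g_0(s)\, ds + \int_0^t K(t-s) M_s\, ds$.

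The crux, and the only step where $K \in L^2_{loc}$ is genuinely used, is to prove that $\bar U$ is absolutely continuous. Set $\kappa(u) := \int_0^u K(r)\, dr$, a continuous function of locally finite variation with $\kappa(0) = 0$. An integration by parts in $s$ (the covariation term vanishes because $\kappa(t - \cdot)$ has finite variation) gives $\int_0^t K(t-s) M_s\, ds = \int_0^t \kappa(t-s)\, dM_s$, a well-defined stochastic integral since $\int_0^t \kappa(t-s)^2\, d\langle M \rangle_s \le \|K\|_{L^1(0,t)}^2\, \bar U_t < \infty$ a.s. Writing $\kappa(t-s) = \int_s^t K(v-s)\, dv$ and applying the stochastic Fubini theorem — whose integrability requirement is met because $\mathbb E \int_0^t \big( \int_s^t K(v-s)^2\, dv \big)\, d\langle M \rangle_s \le \|K\|_{L^2(0,t)}^2\, \mathbb E[\bar U_t] < \infty$, which is precisely where $K \in L^2_{loc}$ meets the first-moment bound on $\bar U$ — I would interchange the two integrals to get $\int_0^t \kappa(t-s)\, dM_s = \int_0^t Z_v\, dv$, with $Z_v := \int_0^v K(v-s)\, dM_s$ defined for a.e.\ $v$. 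Therefore $\bar U_t = \int_0^t U_s\, ds$ with $U_v := g_0(v) + Z_v$, and since $\bar U$ is non-decreasing, $\int_a^b U_v\, dv = \bar U_b - \bar U_a \ge 0$ for all $a < b$, so $U \ge 0$ $\,\mathbb Q \otimes dt$-a.e.

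Since $M$ is a continuous local martingale with $\langle M \rangle_t = \int_0^t U_s\, ds$, the representation theorem for continuous local martingales with absolutely continuous bracket yields a Brownian motion $\widehat W$, possibly on an enlarged space to accommodate $\{ U_s = 0 \}$, such that $M_t = \int_0^t \sqrt{U_s}\, d\widehat W_s$; substituting into $U_v = g_0(v) + \int_0^v K(v-s)\, dM_s$ shows that $(U, \widehat W)$ is a non-negative weak solution of \eqref{VolterraSpotVariance}. For the converse I would run this backwards: given a non-negative weak solution $U$ with $\sup_{t \le T}\mathbb E[U_t^2] < \infty$ — the moment bound ensuring that all stochastic integrals and interchanges below are licit — set $N_t := \int_0^t \sqrt{U_s}\, dW_s$, a continuous local martingale with $\langle N \rangle_t = \int_0^t U_s\, ds =: \bar U_t$; integrate $U_s = g_0(s) + \int_0^s K(s-u)\, dN_u$ over $s \in [0,t]$; apply the same stochastic Fubini and integration-by-parts identities to obtain $\bar U_t = \int_0^t g_0(s)\, ds + \int_0^t K(t-u) N_u\, du$; and invoke Dambis--Dubins--Schwarz to write $N_u = \widetilde W_{\bar U_u}$ for a Brownian motion $\widetilde W$, which turns this into \eqref{VolterraIntegratedVariance}.

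I expect the absolute continuity step to be the main obstacle. It is the heart of the statement — it fails in the hyper-rough regime $K \notin L^2_{loc}$, where $\bar U$ need not be absolutely continuous and no genuine spot variance exists — and the delicate point is checking the hypotheses of the stochastic Fubini theorem knowing only that $\bar U$ is non-decreasing with finite first moment; the resolution is to integrate $K^2$ against Lebesgue measure in the time variable $v$ first, where $L^2_{loc}$ integrability is exactly what one needs, before integrating against $d\langle M \rangle = d\bar U$.
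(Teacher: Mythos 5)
Your proposal is correct and reconstructs, in full detail, exactly the argument that the paper itself compresses into a one-line citation of stochastic Fubini's theorem via \cite[Lemma~2.1]{abi2021weak}: time-changed Brownian motion $M = W_{\bar U}$, integration by parts to trade $\int_0^t K(t-s)M_s\,ds$ for $\int_0^t \kappa(t-s)\,dM_s$, stochastic Fubini using $K \in L^2_{loc}$ together with the first-moment bound on $\bar U$ to produce the density $U$, and martingale representation (with DDS for the converse). The only small remark is that in the converse direction a first-moment bound on $U$ already suffices for the Fubini interchange, so the $L^2$ hypothesis is a safety margin rather than a sharp requirement, but that does not affect the validity of your argument.
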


\begin{proof}
   This is obtained by an application of stochastic Fubini's theorem, see \cite[Lemma 2.1]{abi2021weak}.
\end{proof}

\noindent Going back to the fractional case, if we restrict $H$  in $\left( 0, 1/2 \right]$, then we have $K_H \in L^2_{loc}$. For $\bar{G}_0(t) := \int_0^t g_0(s) ds$, a direct application of Lemma \ref{existenceSpotVariance} yields that the model \eqref{HyperRoughSpot}-\eqref{HyperRoughIntegratedVariance} is  equivalent to the rough Heston model of \citet{el2019characteristic} written in spot-variance form
\begin{align}
    & dP_t = P_t\sqrt{U_t}dB_t, \\
    & U_t = g_0(t) + \int_0^t K_H(t-s) \xi \sqrt{U_s} dW_s,
\end{align}
for some initial input curve $g_0: \mathbb{R}_+ \rightarrow \mathbb{R}$ ensuring the non-negativity of $V$. Two notable specifications of such admissible input curves  are given by \cite[Example 2.2]{abi2019markovian} and read 
\begin{equation}
  \begin{cases}
     g_0 \text{ continuous and non-decreasing with } g_0 \ge 0,\\
     \text{or}\\
     g_0(t) = U_0 + \theta \int_0^t K_H(s)ds, \text{ for some } U_0, \theta \ge 0.
 \end{cases}
\end{equation}
  \\
Moreover, for $H\in (0,1/2]$ the sample paths of the spot variance $U$ are locally Hölder continuous of any order strictly less than $H$, and consequently rougher than those of the standard Brownian motion, which corresponds to the  case $H=1/2$, justifying the denomination  `rough model'.   The hyper-rough appellation corresponds to the case $H\in (-1/2,0]$ for which the process $\bar U$ is continuous but no longer absolutely continuous. Indeed, in this case, one can show that the trajectories of $U$ are nowhere differentiable, see \cite[Proposition 4.6]{jusselin2020no}.\\

\noindent A key advantage of rough and hyper-rough Heston models is the semi-explicit knowledge of the characteristic function of the log-price modulo a deterministic Riccati Volterra convolution equation, as they belong to the class of Affine Volterra processes \cite{abi2019affine,abi2021weak}. More precisely,   
for any $u=(u_1,u_2) \in \mathbb C^2$  satisfying  $$\Re(u_1)=0,\quad \Re(u_2) \leq 0,$$ 
	the joint Fourier--Laplace transform of $(\log P, \bar U)$  is given by
	$$\mathbb E\left[ \exp\left( u_1\log P_T + u_2 \bar U_T  \right) \right] = \exp\left(  u_1 \log P_0 +   \int_0^T  R(\psi_H(T-s))d\bar G_0(s) \right), $$
	for all $T \geq 0$, where $\psi_H$ is the continuous solution to the following fractional Riccati--Volterra equation
\begin{align}\label{eq:VolterraRiccati}
	\psi_H(t)&=\int_0^t K_H(t-s)  R(\psi_H(s))ds, \quad t \geq 0, \nonumber \\
	R(x)&=\frac 12 (u_1^2 -u_1) +u_2 +  \rho \xi  u_1 x + \frac {\nu^2} 2 x^2,
	\end{align}
see \cite[Section 7]{abi2021weak}.
This allows for fast pricing and calibration via Fourier inversion techniques. Compared to the conventional Heston model where the characteristic function is explicit, the solution to the Riccati Volterra equation is not explicitly known and must be approximated numerically. Several numerical schemes have been proposed including the Adams scheme in \cite{el2019characteristic}, the multi-factorial approximation of rough volatility models in \cite{abi2019multifactor} or the hybrid scheme based on fractional power series expansion in \cite{pages2020roughNotSoTough}.

\subsection{Deriving  reversionary Heston as a proxy: $\epsilon$-shifting  the singularity} \label{ss:proxy_derivation}
In both regimes,  rough and hyper-rough, with the exception of $H=\frac 12 $, the model is non-Markovian, non-semimartingale with singular kernels. From a practitioner standpoint, it  is therefore natural to look  for  Markovian approximations by suitable smoothing  of the singularity of the fractional kernel \eqref{eq:fractionalkernel} sitting at the origin.
In this section, we show how we can build a Markovian semi-martingale proxy of hyper-rough models. This is achieved using a two-step procedure. \\

\textbf{First step: recover semimartingality} by smoothing out the singularity of the fractional kernel $K_H$. 
We fix $\epsilon>0$, and  we  consider the shifted fractional kernel 
\begin{equation}
    K_{H, \epsilon}(t) := \left( t+ \epsilon \right)^{H-\frac{1}{2}}, \quad t> 0, 
\end{equation}
and the corresponding `integrated variance' $\bar U^{\epsilon}$ given by 
\begin{equation}     \bar{U}^{\epsilon}_t = \int_0^t g_0^{\epsilon}(s) ds + \xi \int_0^t K_{H, \epsilon}(t-s)  W_{\bar{U}^{\epsilon}_s} ds,
\end{equation}
with   
$$g_0^{\epsilon}(t) = U_0 + \theta \int_0^t K_{H,\epsilon}(s)ds. $$
Note that now $K_{H, \epsilon}$ is in $L^2_{loc}$ for any value of $H$, so that an application of Lemma~\ref{existenceSpotVariance} yields that $\bar U^{\epsilon} = \int_0^{\cdot} U^{ \epsilon}_s ds$ where the spot variance $U^{\epsilon}$ solves the equation 
\begin{align}
     dP^{\epsilon}_t &= P^{\epsilon}_t \sqrt{U^{\epsilon}_t}dB_t \\
     U^{\epsilon}_t &= U_0 + \int_0^t K_{H,\epsilon}(t-s) \theta ds  + \int_0^t K_{H,\epsilon}(t-s) \xi \sqrt{U^{\epsilon}_s} dW_s.
\end{align}
Moreover, since $K_{H,\epsilon}$ is continuously differentiable on $[0,T]$, denoting by $K'_{H,\epsilon}$ its derivative, we get that $U^{\epsilon}$ is a semimartingale with the following dynamics 
\begin{align}\label{eq:Ueps}
    dU_t^{\epsilon} =   \left(K_{H,\epsilon} (0)\theta + \int_0^t K'_{H,\epsilon}(t-s) dZ_s^{\epsilon} \right)dt + K_{H,\epsilon} (0) \xi \sqrt{U_t^{\epsilon}} dW_t,
\end{align}
with
\begin{align}
dZ^{\epsilon}_t  = \theta dt + \xi \sqrt{U_t^{\epsilon}} dW_t. 
\end{align}

\textbf{Second step: recover a Markovian proxy thanks to the resolvent of the first kind.}  The only non-Markovian term in \eqref{eq:Ueps} is the term $\int_0^t K'_{H,\epsilon}(t-s) dZ_s^{\epsilon}$ appearing in the drift. Using the resolvent of the first kind of $K_{H,\epsilon}$ we will re-express this term in terms of a functional of the past of the process $U^{\epsilon}$.  For a kernel $K$, a {resolvent of the first kind} is a  measure $L$ on $\R_+$ of locally bounded variation such that
\begin{equation} \label{res_L}
\int_{[0,t]} K(t-s)L(ds) = 1, \quad t\geq 0,
\end{equation}
see \cite[Definition~5.5.1]{GripenbergVolterraIntegral}.  A resolvent of the first kind does not always exist.  We will make use of the notations $(f*g)(t)=\int_{0}^tf(t-s)g(s)ds$ and $(f*L)(t)=\int_{[0,t]}f(t-s)L(ds)$.

\begin{lemma}
    Fix $\epsilon>0$ and $ H \in (-1/2,1/2)$. The kernel $K_{H, \epsilon}$ admits a resolvent of the first kind $L_{\epsilon}$  of the form
\begin{equation} \label{ResolventFirstKind}
    L_{\epsilon} (dt)= \frac{\delta_0 (dt)}{K_{H,\epsilon}(0)} + \ell_{\epsilon}(t)dt,
\end{equation}
with $\ell_{\epsilon}$ a locally integrable function.
Moreover,  the function $( K_{H, \epsilon}' * L_{\epsilon})$ is continuously differentiable and it holds, for all $t\geq 0$, that
\begin{align}\label{eq:K'Z}
    \int_0^t K'_{H,\epsilon}(t-s) dZ_s^{\epsilon}  =  - \left( \frac{1}{2}-H \right) \epsilon^{-1} \left( U_t^{\epsilon}-U_0\right)  + \int_0^t \left( K_{H, \epsilon}' * L_{\epsilon} \right)'(t-s)   \left( U_s^{\epsilon} - U_0\right) ds.
\end{align}
\end{lemma}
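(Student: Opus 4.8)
The plan is to first establish existence of the resolvent of the first kind $L_\epsilon$ and its stated structure, then to differentiate the convolution identity to obtain the explicit form of $(K'_{H,\epsilon}*L_\epsilon)$, and finally to plug this into an integration-by-parts rewriting of $\int_0^t K'_{H,\epsilon}(t-s)\,dZ^\epsilon_s$.

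\textbf{Step 1: Existence and structure of $L_\epsilon$.} The kernel $K_{H,\epsilon}(t)=(t+\epsilon)^{H-1/2}$ is continuous on $[0,\infty)$ with $K_{H,\epsilon}(0)=\epsilon^{H-1/2}>0$, and it is continuously differentiable with $K'_{H,\epsilon}(t)=(H-\tfrac12)(t+\epsilon)^{H-3/2}$, which is locally integrable (in fact locally bounded). By \cite[Theorem 5.5.5]{GripenbergVolterraIntegral} (or the standard theory of Volterra equations with nice kernels), a kernel that is continuous, nonzero at the origin, and with locally integrable derivative admits a resolvent of the first kind; writing $L_\epsilon(dt)=K_{H,\epsilon}(0)^{-1}\delta_0(dt)+\ell_\epsilon(t)\,dt$ and substituting into \eqref{res_L}, the normalization of the atom is forced by evaluating at $t=0$, and $\ell_\epsilon$ is then characterized as the solution of the renewal-type equation $\int_0^t K_{H,\epsilon}(t-s)\ell_\epsilon(s)\,ds = 1 - K_{H,\epsilon}(t)/K_{H,\epsilon}(0)$, whose right-hand side vanishes at $0$ and is continuously differentiable; standard Volterra theory gives a locally integrable (indeed continuous) solution $\ell_\epsilon$. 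This yields \eqref{ResolventFirstKind}.

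\textbf{Step 2: Smoothness of $K'_{H,\epsilon}*L_\epsilon$ and its derivative.} Using \eqref{ResolventFirstKind},
\[
(K'_{H,\epsilon}*L_\epsilon)(t) = \frac{K'_{H,\epsilon}(t)}{K_{H,\epsilon}(0)} + (K'_{H,\epsilon}*\ell_\epsilon)(t).
\]
Both terms are continuously differentiable on $[0,\infty)$: the first because $K'_{H,\epsilon}$ is smooth away from the singularity at $-\epsilon$, hence $C^\infty$ on $[0,\infty)$; the second because $K'_{H,\epsilon}$ is continuous and $\ell_\epsilon$ locally integrable, so the convolution is $C^1$ with derivative $K'_{H,\epsilon}(0)\ell_\epsilon(t)+(K''_{H,\epsilon}*\ell_\epsilon)(t)$. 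The key algebraic point, which pins down the coefficient $-(\tfrac12-H)\epsilon^{-1}$ appearing in \eqref{eq:K'Z}, is that
\[
(K'_{H,\epsilon}*L_\epsilon)(0) = \frac{K'_{H,\epsilon}(0)}{K_{H,\epsilon}(0)} = \frac{(H-\tfrac12)\epsilon^{H-3/2}}{\epsilon^{H-1/2}} = -\Big(\tfrac12 - H\Big)\epsilon^{-1}.
\]

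\textbf{Step 3: Rewriting the non-Markovian drift term.} Recall from the construction that $U^\epsilon_t - U_0 = (K_{H,\epsilon}*dZ^\epsilon)(t) = \int_0^t K_{H,\epsilon}(t-s)\,dZ^\epsilon_s$. Convolving with the resolvent $L_\epsilon$ and using \eqref{res_L} (i.e. $K_{H,\epsilon}*L_\epsilon = 1$, the constant function) together with associativity of convolution gives $Z^\epsilon_t = \int_0^t (U^\epsilon_s - U_0)\,L_\epsilon(ds)$ in the appropriate sense, which inverts $Z^\epsilon$ in terms of $U^\epsilon$. Now I write $\int_0^t K'_{H,\epsilon}(t-s)\,dZ^\epsilon_s = (K'_{H,\epsilon}*dZ^\epsilon)(t)$, insert the identity $dZ^\epsilon = d(L_\epsilon * (U^\epsilon - U_0))$ — more precisely $(U^\epsilon-U_0) = K_{H,\epsilon}*dZ^\epsilon$ so $L_\epsilon*(U^\epsilon-U_0)=(L_\epsilon*K_{H,\epsilon})*dZ^\epsilon = 1*dZ^\epsilon = Z^\epsilon$ — to get $K'_{H,\epsilon}*dZ^\epsilon = K'_{H,\epsilon}*(L_\epsilon*d(U^\epsilon-U_0)) = (K'_{H,\epsilon}*L_\epsilon)*d(U^\epsilon-U_0)$. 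Since $K'_{H,\epsilon}*L_\epsilon$ is $C^1$ and $(U^\epsilon-U_0)$ vanishes at $0$, an integration by parts in the Stieltjes sense yields
\[
(K'_{H,\epsilon}*L_\epsilon)*d(U^\epsilon-U_0)(t) = (K'_{H,\epsilon}*L_\epsilon)(0)\,(U^\epsilon_t - U_0) + \int_0^t (K'_{H,\epsilon}*L_\epsilon)'(t-s)\,(U^\epsilon_s - U_0)\,ds,
\]
and substituting the value of $(K'_{H,\epsilon}*L_\epsilon)(0)$ from Step 2 gives exactly \eqref{eq:K'Z}.

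\textbf{Main obstacle.} The delicate point is the manipulation in Step 3: the object $dZ^\epsilon$ is a semimartingale differential, not a function, so I must be careful that the convolution identities $K'_{H,\epsilon}*(L_\epsilon*dZ^\epsilon) = (K'_{H,\epsilon}*L_\epsilon)*dZ^\epsilon$ and the final integration by parts hold in the stochastic–Fubini / Stieltjes sense. This requires knowing $U^\epsilon$ (equivalently $Z^\epsilon$) has suitable integrability so that stochastic Fubini applies, and that $K'_{H,\epsilon}*L_\epsilon$ being of locally bounded variation (indeed $C^1$) licenses the pathwise integration by parts against $t\mapsto U^\epsilon_t-U_0$. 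The analytic regularity inputs (continuity of $K_{H,\epsilon}$, local integrability of $K'_{H,\epsilon}$, $C^1$-ness of $K'_{H,\epsilon}*L_\epsilon$) are all straightforward consequences of the explicit power-law form once $\epsilon>0$ shifts the singularity off $[0,\infty)$; the bookkeeping of which convolutions are ordinary and which involve the random measure $dZ^\epsilon$ is where care is needed.
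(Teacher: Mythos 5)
Your proof is correct, but it takes a genuinely different route from the paper's. The paper first establishes the \emph{purely deterministic} kernel identity
\[
K'_{H,\epsilon} = \big(K'_{H,\epsilon}*L_\epsilon\big)(0)\,K_{H,\epsilon} + \big(K'_{H,\epsilon}*L_\epsilon\big)' * K_{H,\epsilon},
\]
obtained by applying the fundamental theorem of calculus to $K'_{H,\epsilon}*L_\epsilon$, convolving the result with $K_{H,\epsilon}$, and using $L_\epsilon*K_{H,\epsilon}\equiv 1$; it then applies $*\,dZ^\epsilon$ to both sides of this deterministic identity, and since $K_{H,\epsilon}*dZ^\epsilon = U^\epsilon-U_0$ the result drops out. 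The only stochastic manipulation is one application of (stochastic) Fubini to reassociate the triple convolution $\big((K'_{H,\epsilon}*L_\epsilon)'*K_{H,\epsilon}\big)*dZ^\epsilon = (K'_{H,\epsilon}*L_\epsilon)'*(K_{H,\epsilon}*dZ^\epsilon)$. You instead \emph{invert} the stochastic Volterra relation: convolving $U^\epsilon-U_0 = K_{H,\epsilon}*dZ^\epsilon$ with $L_\epsilon$ to recover $Z^\epsilon = L_\epsilon*(U^\epsilon-U_0)$, then substituting to get $K'_{H,\epsilon}*dZ^\epsilon = (K'_{H,\epsilon}*L_\epsilon)*d(U^\epsilon-U_0)$, and finally performing a Stieltjes/Itô integration by parts against the semimartingale $U^\epsilon-U_0$. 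Both routes hinge on the same computation $(K'_{H,\epsilon}*L_\epsilon)(0)=K'_{H,\epsilon}(0)/K_{H,\epsilon}(0)=-(1/2-H)\epsilon^{-1}$, and both are valid. The advantage of the paper's ordering is that all convolution algebra is done on deterministic kernels before $dZ^\epsilon$ appears, so one needs no machinery beyond a single stochastic Fubini; your route requires justifying associativity and differentiation involving the random measure $d(U^\epsilon-U_0)$ (e.g.\ that $d\big(L_\epsilon*(U^\epsilon-U_0)\big) = L_\epsilon*d(U^\epsilon-U_0)$), which is the extra delicacy you rightly flag. A minor further difference: for existence of $L_\epsilon$, the paper invokes complete monotonicity of $K_{H,\epsilon}$ on $H\in(-1/2,1/2)$ via \cite[Theorem 5.5.4]{GripenbergVolterraIntegral}, which also gives $\ell_\epsilon$ completely monotone, a slightly sharper conclusion than the renewal-equation argument you sketch.
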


\begin{proof}
First, the existence of the resolvent is justified as follows. Given $ H \in (-\frac{1}{2},\frac{1}{2})$, $K_{H, \epsilon}$ is a positive completely monotone kernel\footnote{Recall that a function $f$ is completely monotone if it is infinitely differentiable on $(0,\infty)$ such that $(-1)^n f^{(n)} \geq 0$, for all $n \geq 0$.} on $[0,T]$ so that an  application of \cite[Theorem 5.5.4]{GripenbergVolterraIntegral} yields the existence of a resolvent of the first kind in the form \eqref{ResolventFirstKind} with $\ell_{\epsilon}$ a completely monotone function. Convolving \eqref{ResolventFirstKind} with $K'_{H,\epsilon}$ one obtains that 
$$ (K'_{H,\epsilon} * L_{\epsilon})(t)  = \frac{K'_{H,\epsilon}(0)}{K_{H,\epsilon}(0)} +  (K'_{H,\epsilon} * \ell_{\epsilon})(t).$$
Since $K_{H,\epsilon}$ is twice continuously differentiable on $[0,T]$ and $\ell_{\epsilon}$ is integrable, it follows that $(K'_{H,\epsilon} * \ell_{\epsilon})$ is continuously differentiable and so is $(K'_{H,\epsilon} * L_{\epsilon})$. Noting that $(1*f)(.) := \int_0^. f(s) ds$ and applying the fundamental theorem of calculus on the function $(K_{H, \epsilon}' * L_{\epsilon})$ yields
\begin{equation} \notag
    \left( K_{H, \epsilon}' * L_{\epsilon} \right) = \left( K_{H, \epsilon}' * L_{\epsilon} \right)(0)+ 1 * \left( K_{H, \epsilon}' * L_{\epsilon} \right)',
\end{equation}
convolving on the right-hand side by $K_{H, \epsilon}$ combined with the associativity of the convolution operation and the fact that $(L_{\epsilon}*K_{H, \epsilon})=1$ yields:
\begin{equation} \notag
K_{H, \epsilon}' * 1 = K_{H, \epsilon}' *( L_{\epsilon}*K_{H, \epsilon}) = (K_{H, \epsilon}' * L_{\epsilon})*K_{H, \epsilon} = \left( K_{H, \epsilon}' * L_{\epsilon} \right)(0) (1*K_{H, \epsilon}) + 1 * \left( K_{H, \epsilon}' * L_{\epsilon} \right)'*K_{H, \epsilon}.
\end{equation}
And thus, we obtain almost everywhere with regards to the Lebesgue measure that:
\begin{equation}
K_{H, \epsilon}' = \left( K_{H, \epsilon}' * L_{\epsilon} \right)(0) K_{H, \epsilon} + (K_{H, \epsilon}' * L_{\epsilon})' * K_{H, \epsilon}.
\end{equation}
In addition, using \eqref{ResolventFirstKind}, we notice that
\begin{equation} \notag
    \left( K_{H, \epsilon}' * L_{\epsilon} \right)(0)= \frac{K_{H, \epsilon}'(0)}{K_{H, \epsilon}(0)} = \left( H-\frac{1}{2} \right) \epsilon^{-1}.
\end{equation}
Combining the above, we obtain that 
\begin{align}
    \int_0^t K'_{H,\epsilon}(t-s) dZ_s^{\epsilon}  =  \left( H-\frac{1}{2} \right) \epsilon^{-1} \int_0^t K_{H,\epsilon}(t-s) dZ_s^{\epsilon}    + \left((K_{H, \epsilon}' * L_{\epsilon})' * (K_{H, \epsilon}*dZ_t^{\epsilon})\right)_t, 
\end{align}
which yields \eqref{eq:K'Z}, after recalling that $U^{\epsilon} - U_0 = \int_0^{\cdot} K_{H,\epsilon}(\cdot-s) dZ_s^{\epsilon}$.
\end{proof}

\noindent With the help of the resolvent of the first kind, we were able to recover in the first term of \eqref{eq:K'Z} the first order mean-reversion scale  of the fractional kernel. The second term in \eqref{eq:K'Z} depends on the whole past trajectory of $U^{\epsilon}$.\\

\noindent We can now derive our Markovian proxy of the hyper-rough Heston model as follows: plugging the expression \eqref{eq:K'Z} in the drift of \eqref{eq:Ueps}, recalling that $K_{H,\epsilon}(0)=\epsilon^{H-1/2}$ and dropping the non-Markovian term $\left( \left( K_{H, \epsilon}' * L_{\epsilon} \right)_{.}' * \left( U^{\epsilon}_{\cdot} - U_0^{\epsilon} \right) \right)_t$, we arrive to the Markovian process:
\begin{equation} \label{ProxyRoughHeston}
    d \tilde V_t^{\epsilon} = \left(  \epsilon^{H-\frac{1}{2}} \theta  -  \left( \frac{1}{2}-H \right) \epsilon^{-1} \left( \tilde V_t^{\epsilon} - U_0 \right) \right) dt + \epsilon^{H-\frac{1}{2}} \xi \sqrt{\tilde V_t^{\epsilon}} dW_t, \quad \tilde V_0 = U_0.
\end{equation}

\noindent Finally, {re-scaling the mean-reversion speed from $\left( \frac{1}{2}-H \right) \epsilon^{-1}$ to $\epsilon^{-1}$ }leads to our reversionary Heston model  \eqref{eq:reversionaryHestonInstantaneousSpot}--\eqref{eq:reversionaryHestonInstantaneousVol} {where the parameter $H$ becomes unconstrained}.  In the following section, we illustrate numerically the fact that such reversionary Heston model can be seen as a proxy of rough and hyper-rough Heston models. 

\begin{remark}\label{rk:fBMproxy}
    Such proxy approximation can directly be applied to the Riemann-Liouville fractional Brownian motion defined in \eqref{eq:RLfBM} to get the proxy:
    \begin{equation}
        d W_t^{H,\epsilon} = -  \left( \frac{1}{2}-H \right) \epsilon^{-1} W_t^{H,\epsilon} dt + \epsilon^{H-\frac{1}{2}} dW_t, \quad W_0^{H,\epsilon} = 0,
    \end{equation}
    whose solution is explicitly given by
    \begin{equation}
        W_t^{H,\epsilon} = \epsilon^{H-\frac{1}{2}} \int_0^t e^{- \left( \frac{1}{2}-H \right) \epsilon^{-1} (t-s)} dW_s,
    \end{equation}
    \noindent which is a mean-reverting Ornstein-Uhlenbeck process as long as $H<\frac{1}{2}$, while the value $H=\frac{1}{2}$ yields back the standard Brownian motion.\\ First, such Ornstein-Uhlenbeck process has been recently used to construct  the Quintic stochastic volatility model \cite{abijaber2022quintic} to achieve remarkable joint fits of SPX and VIX implied volatilies, outperforming even its rough and path-dependent counterparts as shown empirically in \cite{abijaber2022joint}. \\
    {\noindent Furthermore, notice that the case $H=0$ degenerates into the fast scale volatility factor from \citet{fouque2003multiscale}, with $m=0$, $\nu=1$ and their time-scale is twice the reversionary time-scale $\epsilon$, and whose auto-correlation under the invariant distribution is given by
    \begin{equation*}
        \mathbb E \left[ W_t^{0,\epsilon} W_s^{0,\epsilon} \right] = e^{-\frac{\left|t-s\right|}{2\epsilon}}.
    \end{equation*}
    Consequently, the reversionary time-scale $\epsilon$ sets the speed of decay of the auto-correlation function of $W^{H,\epsilon}$.}
\end{remark}

\subsection{Numerical illustration}

\noindent In Section~\ref{ss:proxy_derivation}, we derived the \textit{Reversionary Heston} model \eqref{eq:reversionaryHestonInstantaneousSpot}--\eqref{eq:reversionaryHestonInstantaneousVol} as a semi-martingale and Markovian proxy of rough and hyper-rough Heston models such that the parameter $H$, stemming naturally from the Hurst index, is unconstrained. We now illustrate numerically  its ability to reproduce similar shapes of implied volatility surfaces and at-the-money skews for different values of the Hurst parameter, when calibrating both the parameter $H$ and the reversionary time-scale $\epsilon$.\\

\noindent For this, we first generate implied volatility surfaces of the hyper-rough and rough Heston model via the Fourier-Cosine series expansion technique from \cite{fangCosineExpansion}, where we used the fractional Adams scheme described in \cite{el2019characteristic} on the fractional Riccati equation \eqref{eq:VolterraRiccati} to compute the characteristic function of the (hyper-)rough Heston models. Three target smiles are generated with a (hyper-)rough Heston having parameters
\begin{equation} \label{eq:roughHestonParams}
    \rho=-0.7, \quad \theta=0.02, \quad \xi=0.3, \quad U_0=0.02,
\end{equation}
for $ H \in \left\{ 0.1, 0, -0.05 \right\}$, where we used the parametrization $\bar{G}_0(t) := \int_0^t g_0(s) ds$, with $g_0(t) = U_0 + \theta \int_0^t K_H(s)ds$.

\noindent For each of these smiles, we  calibrate the parameters $\left( \hat \epsilon, \hat H \right)$ of the reversionary Heston model \eqref{eq:reversionaryHestonInstantaneousSpot}-\eqref{eq:reversionaryHestonInstantaneousVol},  while fixing the other parameters equal to those of the hyper-rough Heston's, by minimizing a weighted loss
\begin{equation} \label{eq:calibrationLoss}
    \sum_{i,j} w_{i,j} \left( C_{\textit{hyper-rough Heston}}(T_i,k_j)-C^{\hat\epsilon, \hat H}_{\textit{reversionary Heston}}(T_i,k_j) \right)^2,
\end{equation}
where, for an initial underlying value $S_0=1$, at given maturity $T$ and strike $K$, we define the log-moneyness $k := \log \left( \frac{K}{S_0} \right)$, while $C_{\textit{hyper-rough Heston}}(T,k)$ and $C^{\hat\epsilon, \hat H}_{\textit{reversionary Heston}}(T,k)$ denote respectively the call prices computed in the hyper-rough Heston model for different values of the Hurst index and parameters \eqref{eq:roughHestonParams} and in the reversionary Heston model with parameters $\left( \hat \epsilon, \hat H \right)$ and \eqref{eq:roughHestonParams}. The reversionary Heston prices are also obtained by Fourier-Cosine expansion of the characteristic function. In contrast to the rough Heston models, the characteristic function is known explicitly, see Corollary \ref{C:marginal} below.  
After calibration, we obtain the following parameters
\begin{table}[H] \label{tab:calibrated_H_eps}
\centering
\begin{tabular}{|c|cc|}
\hline
\textbf{Target (hyper-)rough Heston} & \multicolumn{2}{c|}{\textbf{Calibrated reversionary Heston}} \\ \hline
$H$                          & \multicolumn{1}{c|}{$\hat \epsilon$}                 & $\hat H$        \\ \hline
0.1                          & \multicolumn{1}{c|}{0.10183756}         & -0.29183935        \\ \hline
0                            & \multicolumn{1}{c|}{0.06258637}         & -0.33057822        \\ \hline
-0.05                        & \multicolumn{1}{c|}{0.05932449}         & -0.38692275        \\ \hline
\end{tabular}
\caption{Calibrated values of $\left( \hat \epsilon, \hat H \right)$ of the reversionary Heston model to the hyper-rough Heston volatility surfaces. The other parameters are fixed as in \eqref{eq:roughHestonParams}.}
\end{table}

\noindent The resulting at-the-money skews between 1 week and 1 year are shown on Figure~\ref{fig:ATMskewsRoughHestonRevHeston}. The implied volatility surfaces for the case $H=0.1$ is illustrated on Figure~\ref{fig:smilesRoughHestonRevHeston}. The fit of the smiles for $H=0$ and $H=-0.05$ are deferred to Appendix \ref{ss:additional_plots}, see Figures \ref{fig:smilesRoughHestonH0RevHeston} and \ref{fig:smilesRoughHestonHminus0dot05RevHeston}. The graphs show that the reversionary Heston model seems to be able to generate similar shapes of the implied volatility surfaces of rough and hyper-rough models and very steep skews even in the hyper rough regimes $H\leq 0$.

\begin{figure}[H]
\begin{center}
\includegraphics[scale=0.3,angle=0]{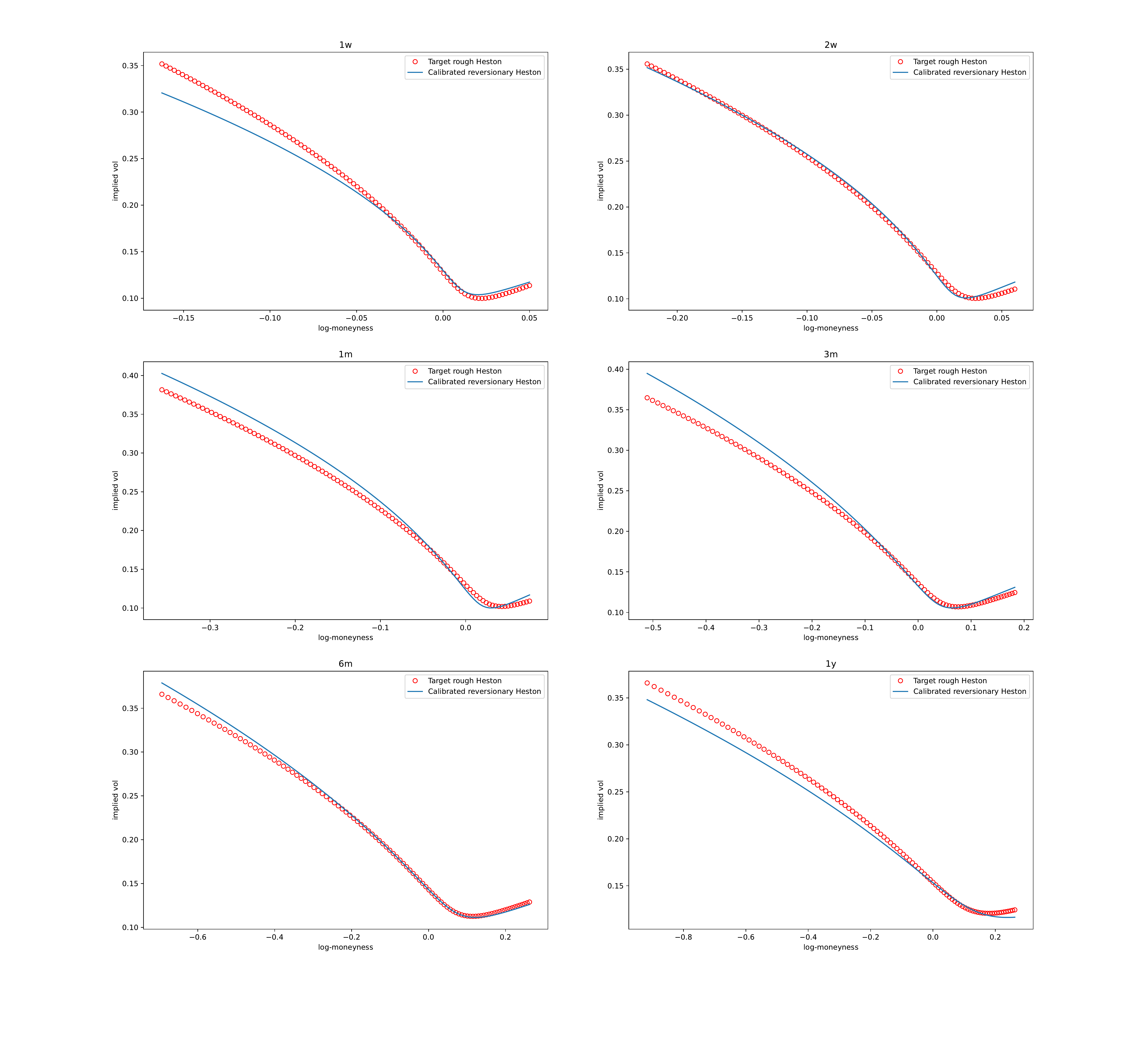}
\vspace*{-0.2in}
\caption{Smiles comparison between target rough Heston with parameters \eqref{eq:roughHestonParams}, with $H=0.1$, and reversionary Heston with calibrated parameters from the first row of Table \eqref{tab:calibrated_H_eps} for different maturities from one week to one year.}
\label{fig:smilesRoughHestonRevHeston}
\end{center}
\end{figure}

\begin{figure}[H]
\begin{center}
\includegraphics[width=6.2in,angle=0]{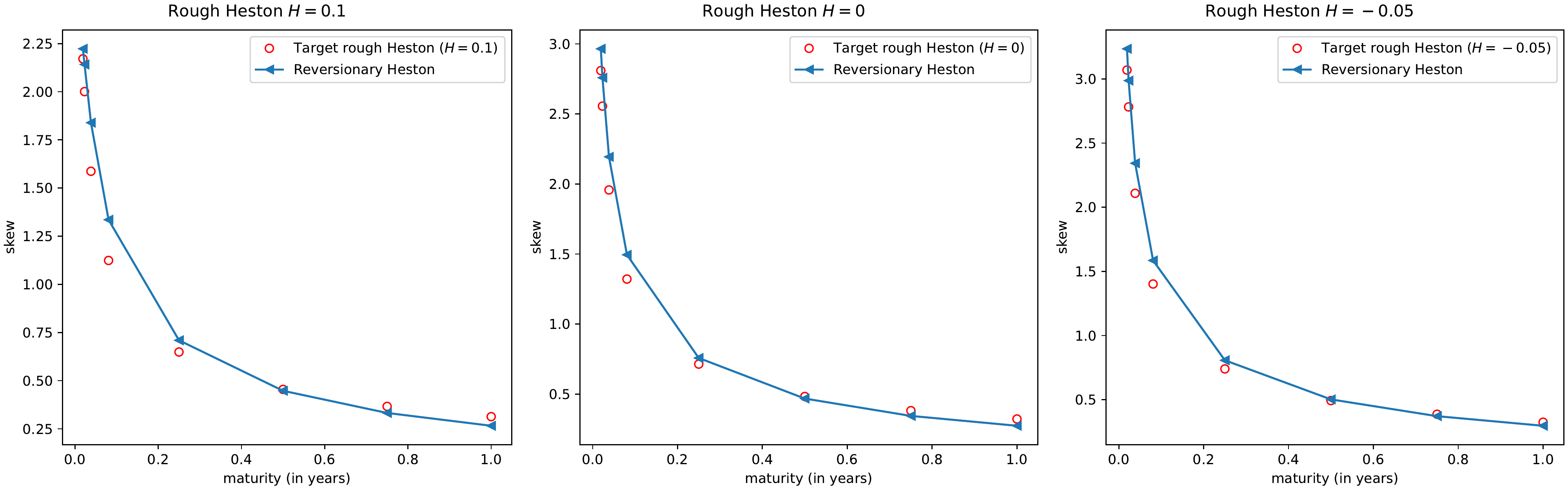}
\caption{Resulting at-the-money skew $\left\{\left| \partial_{k} \sigma_{\textit{implicit}} \left( k,T \right) \right|_{k=0}\right\}_{T}$ comparison between target rough Heston with parameters \eqref{eq:roughHestonParams} and reversionary Heston with calibrated parameters given in Table \eqref{tab:calibrated_H_eps} for different maturities from one week to one year.}
\label{fig:ATMskewsRoughHestonRevHeston}
\end{center}
\end{figure}

\noindent The whole section emphasizes how the reversionary Heston model \eqref{eq:reversionaryHestonInstantaneousSpot}--\eqref{eq:reversionaryHestonInstantaneousVol} can be seen as an \textit{engineering proxy} of the hyper-rough Heston model: it is obtained as a semi-martingale and Markovian proxy model such that for specific values of the couple $\left( \epsilon, H \right)$, it is able to mimic the skew behavior of hyper-rough Heston.

\begin{remark}[An engineering proxy of rough models]
    For a fixed $\epsilon > 0$, we can bound the error between call prices given by the reversionary Heston and the hyper-rough Heston in the same spirit of \cite[Proposition~4.3]{abi2019multifactor}: there exists a constant $c>0$ independent of $\epsilon$ such that
    \begin{equation}
        \left| C^{H}_{\textit{hyper-rough Heston}}(T,k) - C^{\hat H, \epsilon}_{\textit{reversionary Heston}}(T,k) \right| \leq c \int_0^T \left| K_H(s) - \hat  K_{\hat  H, \epsilon} (s) \right| ds,
    \end{equation}
    where  $K_H$ is the fractional kernel given in \eqref{eq:fractionalkernel} and  $\hat K_{\tilde H, \epsilon}(t) := \epsilon^{\hat H-1/2}e^{-\frac{t}{\epsilon}}$. Note that for fixed $H$, as $\epsilon \to 0$,  $\hat K_{H,\epsilon}$ does not converge to $K_H$ in $L^1$. In this sense, we cannot expect the reversionary model to converge towards the rough model as $\epsilon \to 0$, which can alternatively be seen as the impact of dropping the non-Markovian term in \eqref{eq:K'Z}. More precisely, we will explore in Section \ref{S:convjumps} what neglecting the non-Markovian term in \eqref{eq:K'Z} entails asymptotically for the reversionary Heston with $H \in (-1/2,1/2]$: it will converge to Black-Scholes, see Corollary~\ref{Cor:revHestonMarginals} and Figure~\ref{F:cf_convergence_all_regimes} (upper plots).  Having said that, one would like to calibrate $(\hat H, \epsilon)$ as done in Table~\ref{tab:calibrated_H_eps} above, or alternatively by mimizing the $L^1$-norm between the kernels $K_H$ and $\hat K_{\hat H, \epsilon}$, in order to make the reversionary model the closest possible to the rough model.
\end{remark}

\section{The joint characteristic functional of reversionary Heston}\label{S:jointchar}
{The following theorem provides the joint conditional characteristic functional of the log-price $\log S^{\epsilon}$} and the integrated variance $\bar{V}^\epsilon := \int_0^{\cdot} V^\epsilon_s ds$ in the model  \eqref{eq:reversionaryHestonInstantaneousSpot}--\eqref{eq:reversionaryHestonInstantaneousVol} in terms of a solution to a system of {time-dependent Riccati  ordinary differential equations.}

\begin{theorem}[Joint characteristic functional]\label{T:charfuneps}
     Let $f,g: [0,T]\to  \mathbb{C}$ be measurable and bounded functions  such that
\begin{equation}\label{eq:condfg}
    \Re g + \frac{1}{2} \left((\Re f)^2 -\Re f\right)  \le 0.
\end{equation}
Then, the joint conditional characteristic functional of $(\log S^{\epsilon}, \bar V^{\epsilon})$ is given by 
\begin{equation} \label{genericConditionalExpectation}
 \mathbb{E} \left[ \left. \exp\left(\int_t^T f({T-s}) d\log S^\epsilon_s + \int_t^T g({T-s}) d \bar{V}_s^\epsilon\right) \right| \mathcal{F}_t \right] = \exp\left(\phi_{\epsilon} \left( T-t \right) + \epsilon^{\frac{1}{2}-H} \psi_{\epsilon} \left( T-t \right) V^\epsilon_t \right), \quad t \leq T,
\end{equation}
where $(\phi_{\epsilon}, \psi_{\epsilon})$ is the solution to the following system of time-dependent Riccati equations 
\begin{align}
\phi_\epsilon'(t) & = \left( \theta + \epsilon^{-H-\frac{1}{2}} V_0 \right) \psi_\epsilon(t), \quad \phi_{\epsilon}(0) = 0, 
 \label{eq:Ric1}\\
   \psi_\epsilon'(t) & =  \epsilon^{H-\frac{1}{2}} \frac{\xi^2}{2}  \psi_\epsilon ^2(t) +  \left( \rho \xi \epsilon^{H - \frac{1}{2}} f(t) - \epsilon^{-1} \right) \psi_\epsilon(t) \nonumber  \\ 
   &\qquad \qquad + \epsilon^{H - \frac{1}{2}} \left( g({t}) + \frac{f^2({t})-f({t})}{2} \right), \qquad  \psi_\epsilon(0) = 0. \label{eq:Ric2}
\end{align}
\end{theorem}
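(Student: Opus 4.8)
The plan is a verification (Feynman--Kac / martingale) argument for the exponential-affine ansatz on the right-hand side of \eqref{genericConditionalExpectation}. Assuming first that the Riccati system \eqref{eq:Ric1}--\eqref{eq:Ric2} admits a unique global solution $(\phi_\epsilon,\psi_\epsilon)$ on $[0,T]$, set
\[
M_t:=\exp\!\Big(A_t+\phi_\epsilon(T-t)+\epsilon^{\frac12-H}\psi_\epsilon(T-t)\,V^\epsilon_t\Big),\qquad A_t:=\int_0^t f(T-s)\,d\log S^\epsilon_s+\int_0^t g(T-s)\,d\bar V^\epsilon_s .
\]
I would show that $M$ is a true $(\mathcal F_t)$-martingale on $[0,T]$; granting that, $\mathbb E[M_T\mid\mathcal F_t]=M_t$, and since $\phi_\epsilon(0)=\psi_\epsilon(0)=0$ we have $M_T=\exp(A_T)$, so dividing by the $\mathcal F_t$-measurable factor $\exp(A_t)$ gives \eqref{genericConditionalExpectation} (and in particular shows the left-hand side is finite under \eqref{eq:condfg}).

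I would first settle global well-posedness of the Riccati system; this is where \eqref{eq:condfg} enters and where the argument is meant to be short. Equation \eqref{eq:Ric2} is a scalar Riccati ODE $\psi_\epsilon'=a\psi_\epsilon^2+b(t)\psi_\epsilon+c(t)$ with \emph{constant} leading coefficient $a=\epsilon^{H-1/2}\xi^2/2>0$ and bounded measurable $b(t)=\rho\xi\epsilon^{H-1/2}f(t)-\epsilon^{-1}$, $c(t)=\epsilon^{H-1/2}\big(g(t)+\tfrac{f^2(t)-f(t)}2\big)$; Carathéodory's theorem gives a unique maximal solution, so it suffices to rule out explosion on $[0,T]$. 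Completing the square in $\Im\psi_\epsilon$ one gets $\Re\psi_\epsilon'\le a(\Re\psi_\epsilon)^2+\Re b(t)\,\Re\psi_\epsilon+\tilde\gamma(t)$, where, using $\rho^2\le1$ and \eqref{eq:condfg}, $\tilde\gamma(t):=\epsilon^{H-1/2}\big(\Re g(t)+\tfrac{(\Re f(t))^2-\Re f(t)}2\big)\le0$; a Grönwall comparison with $\Re\psi_\epsilon(0)=0$ then forces $\Re\psi_\epsilon\le0$ on the existence interval. Feeding this into $\tfrac{d}{dt}|\psi_\epsilon|^2=2\big(a\Re\psi_\epsilon+\Re b\big)|\psi_\epsilon|^2+2\Re(\overline{\psi_\epsilon}\,c)\le(2\|\Re b\|_\infty+1)|\psi_\epsilon|^2+\|c\|_\infty^2$ and applying Grönwall bounds $|\psi_\epsilon|$ on $[0,T]$, so there is no explosion; the same bound can be read off the variation-of-constants representation $\psi_\epsilon(t)=\int_0^t e^{-(t-s)/\epsilon}\big(a\psi_\epsilon^2(s)+\rho\xi\epsilon^{H-1/2}f(s)\psi_\epsilon(s)+c(s)\big)\,ds$. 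Finally $\phi_\epsilon(t)=(\theta+\epsilon^{-H-1/2}V_0)\int_0^t\psi_\epsilon(s)\,ds$ solves \eqref{eq:Ric1} by direct integration, and uniqueness is inherited from the Carathéodory step.

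The identity itself is an Itô computation. From \eqref{eq:reversionaryHestonInstantaneousSpot}, $d\log S^\epsilon_t=\sqrt{V^\epsilon_t}\,(\rho\,dW_t+\sqrt{1-\rho^2}\,dW_t^\perp)-\tfrac12 V^\epsilon_t\,dt$ and $d\bar V^\epsilon_t=V^\epsilon_t\,dt$. Apply Itô's formula to $M_t=\exp(A_t+\Psi(t,V^\epsilon_t))$ with $\Psi(t,v)=\phi_\epsilon(T-t)+\epsilon^{1/2-H}\psi_\epsilon(T-t)v$ (affine in $v$, absolutely continuous in $t$, so Itô applies): the martingale part of $d(A_t+\Psi(t,V^\epsilon_t))$ is $(\rho f(T-t)+\xi\psi_\epsilon(T-t))\sqrt{V^\epsilon_t}\,dW_t+\sqrt{1-\rho^2}\,f(T-t)\sqrt{V^\epsilon_t}\,dW_t^\perp$, whose quadratic variation equals $\big(f^2+2\rho\xi f\psi_\epsilon+\xi^2\psi_\epsilon^2\big)(T-t)\,V^\epsilon_t\,dt$. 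In $M_t^{-1}\,dM_t$ the $dt$-coefficient not multiplying $V^\epsilon_t$ vanishes exactly when $\phi_\epsilon'(T-t)=(\theta+\epsilon^{-H-1/2}V_0)\psi_\epsilon(T-t)$, i.e. \eqref{eq:Ric1}, while the coefficient of $V^\epsilon_t$ --- which equals $\big(g-\tfrac12 f\big)+\tfrac12 f^2+\rho\xi f\psi_\epsilon+\tfrac12\xi^2\psi_\epsilon^2-\epsilon^{-H-1/2}\psi_\epsilon-\epsilon^{1/2-H}\psi_\epsilon'$ evaluated at $T-t$ --- vanishes exactly when $\psi_\epsilon$ solves \eqref{eq:Ric2}. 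Hence, with the solution from the previous step, $M$ is a continuous local martingale.

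It remains to upgrade $M$ to a true martingale, and this (together with the no-explosion estimate) is the main technical point, since the local martingale $M$ has unbounded exponent. From the first step $\Re\psi_\epsilon\le0$ and $\Re\phi_\epsilon$ is bounded on $[0,T]$, so $|M_t|\le e^{\|\Re\phi_\epsilon\|_\infty}\exp(\Re A_t)$. Rewriting $\Re A_t$ via $B:=\rho W+\sqrt{1-\rho^2}\,W^\perp$ and $d\log S^\epsilon=\sqrt{V^\epsilon}\,dB-\tfrac12 V^\epsilon\,dt$, one gets $\exp(\Re A_t)=\mathcal E_t\,\exp\!\Big(\int_0^t\big(\Re g(T-u)+\tfrac{(\Re f(T-u))^2-\Re f(T-u)}2\big)V^\epsilon_u\,du\Big)$, where $\mathcal E_t:=\exp\!\big(\int_0^t\Re f(T-u)\sqrt{V^\epsilon_u}\,dB_u-\tfrac12\int_0^t(\Re f(T-u))^2 V^\epsilon_u\,du\big)$ is the stochastic exponential of $\int_0^{\cdot}\Re f(T-u)\sqrt{V^\epsilon_u}\,dB_u$; by \eqref{eq:condfg} the last exponential factor is $\le1$, hence $|M_t|\le e^{\|\Re\phi_\epsilon\|_\infty}\,\mathcal E_t$. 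Since $\Re f$ is bounded and $V^\epsilon$ is a (non-explosive) CIR process, $\mathcal E$ is a true martingale on $[0,T]$ by the standard affine-martingale criterion (e.g.\ via Girsanov: the tilted $V^\epsilon$ remains a well-posed, non-explosive CIR diffusion). Therefore, along a localizing sequence $\tau_n$, the martingales $M_{t\wedge\tau_n}$ are dominated by the uniformly integrable family $\{\mathcal E_{t\wedge\tau_n}\}_n$ and converge to $M_t$, so $M$ is a true martingale and \eqref{genericConditionalExpectation} follows as in the first paragraph.
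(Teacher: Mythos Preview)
Your proposal is correct and follows essentially the same route as the paper: establish global existence for the Riccati system by first showing $\Re\psi_\epsilon\le 0$ (the paper does this via variation of constants rather than a comparison/Gr\"onwall argument, but the substance is identical) and then bounding $|\psi_\epsilon|$, carry out the It\^o computation to see that $M$ is a local martingale, and finally dominate $|M|$ by the stochastic exponential $\mathcal E$ to upgrade to a true martingale. The only cosmetic difference is that the paper observes $\Re\phi_\epsilon\le 0$ directly (since $\Re\psi_\epsilon\le 0$), so no constant $e^{\|\Re\phi_\epsilon\|_\infty}$ is needed in the domination, and it cites an explicit lemma for the martingality of $\mathcal E$ and for the ``local martingale dominated by a true martingale'' step.
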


\begin{proof}
    The proof is given in Section~\ref{S:proofchar} below.
\end{proof}
\noindent Before proving the result, we note that in the case $f$ and $g$ are constant, one recovers the usual formula for the characteristic function of the \citet{heston1993}  model, where the solution $(\phi_{\epsilon}, \psi_{\varepsilon})$ of \eqref{eq:Ric1}-\eqref{eq:Ric2} is explicit as stated in the following corollary.

\begin{corollary}[Explicit marginals]\label{C:marginal}
 Let $u,v \in \mathbb R$ and set 
$f(t) =i u$ and $g(t)=i v$, for all $t\in  [0,T]$. Then, the solution $(\phi_{\epsilon},\psi_{\epsilon})$  to the Riccati equations \eqref{eq:Ric1}-\eqref{eq:Ric2} is explicitly given by 
\begin{align*}
    \phi_{\epsilon}(t) &= \left( \epsilon^{-\frac{1}{2}-H} \theta + \epsilon^{-1-2H} V_0\right) \xi^{-2} \left( \left( 1 - i \rho \epsilon^{H+\frac{1}{2}} \xi u - d \right) t - 2 \epsilon \ln \left( \frac{1 - ge^{- \epsilon^{-1} td}}{1-g}\right) \right), \\
    \psi_{\epsilon}(t) &= \epsilon^{-H-\frac{1}{2}}\xi^{-2} \left( 1 - i \rho \epsilon^{H+\frac{1}{2}} \xi u - d \right) \frac{1 - e^{- \epsilon^{-1} td }}{1 - ge^{- \epsilon^{-1} td }} ,
\end{align*}
with
\begin{align}\label{eq:g_variable}
    g & := \frac{1 - i \rho \epsilon^{H+\frac{1}{2}} \xi u - d}{1 - i \rho \epsilon^{H+\frac{1}{2}} \xi u + d}, \quad 
    d  := \sqrt{\left( 1 - i \rho \epsilon^{H+\frac{1}{2}} \xi u \right)^2 - 2 \left( \epsilon^{H+\frac{1}{2}} \xi \right)^2 \left( i v - \frac{u^2 + iu}{2} \right)}, \quad \Re d > 0.
\end{align}
 Consequently, the conditional joint characteristic function  of $(\log S_T^{\epsilon}, \int_t^T V^{\epsilon}_s ds )$ is given by 
\begin{equation} \label{eq:CHreversionaryHeston}
 \mathbb{E} \left[ \left. \exp\left( iu \log {S^\epsilon_T} + iv\int_t^T   {V}_s^\epsilon ds\right) \right| \mathcal{F}_t \right] = \exp\left(iu\log S^{\epsilon}_t +\phi_{\epsilon} \left( T-t \right) + \epsilon^{\frac{1}{2}-H} \psi_{\epsilon} \left( T-t \right) V^\epsilon_t \right), \quad t \leq T.
\end{equation}
\end{corollary}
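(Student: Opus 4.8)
The plan is to use the fact that the Heston specification is the integrable case of the Riccati system \eqref{eq:Ric1}--\eqref{eq:Ric2}: when $f\equiv iu$ and $g\equiv iv$ are constant, equation \eqref{eq:Ric2} becomes a scalar Riccati ODE with \emph{constant} coefficients, $\psi_\epsilon' = a\psi_\epsilon^2 + b\psi_\epsilon + c$ with $\psi_\epsilon(0)=0$, where $a = \tfrac12\epsilon^{H-1/2}\xi^2$, $b = i\rho\xi\epsilon^{H-1/2}u - \epsilon^{-1}$ and $c = \epsilon^{H-1/2}\big(iv - \tfrac{u^2+iu}{2}\big)$; once $\psi_\epsilon$ is known in closed form, $\phi_\epsilon$ follows from \eqref{eq:Ric1} by a single quadrature, $\phi_\epsilon(t) = (\theta + \epsilon^{-H-1/2}V_0)\int_0^t\psi_\epsilon(s)\,ds$.

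First I would factor the quadratic $az^2+bz+c$. A direct computation of the discriminant gives the identity $b^2 - 4ac = \epsilon^{-2}d^2$, with $d$ the quantity in \eqref{eq:g_variable}; since its branch is pinned down by $\Re d>0$, the roots are $\psi_\pm = \epsilon^{-H-1/2}\xi^{-2}\big(1 - i\rho\epsilon^{H+1/2}\xi u \pm d\big)$, so that $\psi_-$ is exactly the prefactor in the claimed formula and $g = \psi_-/\psi_+$. Rewriting \eqref{eq:Ric2} as $\psi_\epsilon' = a(\psi_\epsilon-\psi_+)(\psi_\epsilon-\psi_-)$, separating variables via the partial fraction decomposition $\tfrac{1}{(z-\psi_+)(z-\psi_-)} = \tfrac{1}{\psi_+-\psi_-}\big(\tfrac{1}{z-\psi_+}-\tfrac{1}{z-\psi_-}\big)$, integrating from $0$ with $\psi_\epsilon(0)=0$, and using $a(\psi_+-\psi_-)=\epsilon^{-1}d$, one solves for $\psi_\epsilon$ and, after multiplying numerator and denominator by $e^{-\epsilon^{-1}td}$, obtains the stated $\psi_\epsilon(t) = \psi_-\,\tfrac{1-e^{-\epsilon^{-1}td}}{1-ge^{-\epsilon^{-1}td}}$.

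Next I would carry out the quadrature for $\phi_\epsilon$. Writing $\tfrac{1-e^{-\alpha s}}{1-ge^{-\alpha s}} = \tfrac1g + \tfrac{g-1}{g}\cdot\tfrac{1}{1-ge^{-\alpha s}}$ with $\alpha = \epsilon^{-1}d$ and evaluating $\int_0^t(1-ge^{-\alpha s})^{-1}\,ds$ through the substitution $w=e^{-\alpha s}$ yields a closed form in $t$ and $\ln\tfrac{1-ge^{-\alpha t}}{1-g}$; the elementary relations $g-1 = -2d/(A+d)$ and $(g-1)/g = -2d/(A-d)$, with $A := 1 - i\rho\epsilon^{H+1/2}\xi u$, together with $\alpha=\epsilon^{-1}d$, collapse the various $\epsilon$-powers and reproduce precisely the claimed $\phi_\epsilon$.

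There is no genuine conceptual obstacle here; the only real work is the bookkeeping of the $\epsilon$-exponents and of the algebra tying $A$, $d$ and $g$ together (and fixing the square-root branch). To guard against sign slips I would close the argument by substituting the candidate pair $(\phi_\epsilon,\psi_\epsilon)$ back into \eqref{eq:Ric1}--\eqref{eq:Ric2}, checking it solves the system with $\phi_\epsilon(0)=\psi_\epsilon(0)=0$; uniqueness for that system, established in the proof of Theorem~\ref{T:charfuneps}, then identifies it as the solution. Finally, \eqref{eq:CHreversionaryHeston} is obtained by applying Theorem~\ref{T:charfuneps} with $f\equiv iu$, $g\equiv iv$ — condition \eqref{eq:condfg} holding with equality since $\Re(iv) + \tfrac12\big((\Re(iu))^2-\Re(iu)\big)=0$ — upon noting $\int_t^T iu\,d\log S^\epsilon_s = iu(\log S^\epsilon_T - \log S^\epsilon_t)$ and $\int_t^T iv\,d\bar V^\epsilon_s = iv\int_t^T V^\epsilon_s\,ds$, and multiplying through by the $\mathcal F_t$-measurable factor $e^{iu\log S^\epsilon_t}$.
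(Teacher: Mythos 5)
Your proposal is correct and follows the same approach as the paper, which simply delegates the explicit Riccati integration to Gatheral's textbook (Chapter 2) and then invokes Theorem~\ref{T:charfuneps}; you have spelled out the standard constant-coefficient Riccati separation-of-variables computation that the cited reference performs. I verified the key identities: the discriminant scaling $b^2-4ac=\epsilon^{-2}d^2$, the root expressions $\psi_\pm=\epsilon^{-H-1/2}\xi^{-2}(A\pm d)$ with $A:=1-i\rho\epsilon^{H+1/2}\xi u$, the relation $a(\psi_+-\psi_-)=\epsilon^{-1}d$, the partial-fraction collapse $(A-d)\tfrac{g-1}{g\alpha}=-2\epsilon$, and the final application of Theorem~\ref{T:charfuneps} with the $\mathcal F_t$-measurable factor $e^{iu\log S^\epsilon_t}$ — all check out.
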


\begin{proof}
    For the explicit derivation of the formulas, see for example  \cite[Chapter 2]{volSurfaceGatheral}. An application of Theorem~\ref{T:charfuneps} yields the result.
\end{proof}

\begin{remark}
Such formulas for $\phi_{\epsilon}$ avoid branching issues as described in \cite{LittleHestonTrap}.
\end{remark}

\noindent The rest of the section if dedicated to the proof of Theorem~\ref{T:charfuneps}. We first study the existence of a solution to time-dependent Riccati ODEs for which equation \eqref{eq:Ric2} is a particular case, and provide some of their properties in Section~\ref{S:riccatiexistence}. We complete the proof of Theorem~\ref{T:charfuneps} in Section~\ref{S:proofchar}. 

\subsection{Time-dependent Riccati ODEs: existence and uniqueness}\label{S:riccatiexistence}

 In this section, we consider a generic class of time-dependent Riccati equations that encompass 
 equation \eqref{eq:Ric2},  in the form 
\begin{equation} \label{eq:IVP}
    {\psi'}(t)  = 
a(t) \psi^2(t) + b(t) \psi(t) + c(t), \quad \psi(0) = u_0, \quad t\le T,
\end{equation}
 with $u_0 \in \mathbb C$ and $a,b,c:[0,T]\to \mathbb C$ three measurable and bounded functions. We say that $\psi:[0,t^*] \to \mathbb C$ for some $t^* \in \left( 0,T \right]$ is a local extended solution to \eqref{eq:IVP} with some initial condition $\psi(0)=u_0 \in \mathbb C$ if, almost everywhere on $[0,t^*]$, it is continuously differentiable and satisfies the relations in \eqref{eq:IVP}. The extended solution is global if $t^*=T$.\\  

\noindent The presence of the squared non-linearity in \eqref{eq:IVP} precludes the application of the celebrated Cauchy-Lipschitz theorem and  can lead to explosive solutions in finite time. Compared to the related literature on similar Riccati equations  \cite[Lemma 2.3 and Section B]{Fili2009AffineDiffusion}, we provide a concise and simplified proof for the existence and uniqueness of a global extended solution to the Riccati equation \eqref{eq:IVP} using a variation of constant formula under the following assumption on the coefficients $(a,b,c)$ and the initial condition $u_0$:
\begin{equation}\label{eq:assumptionscoeff}
    \Im(a(t)) = 0,  \quad a(t) >0, \quad  \Re\left( c(t) \right) + \frac{\Im \left( b(t) \right)^2}{4a(t)} \le 0, \quad   \Re(u_0)\leq 0, \quad t\leq T. 
\end{equation}

\noindent The following theorem gives the existence and uniqueness of a solution to the Riccati equation \eqref{eq:IVP}.

\begin{theorem}[Existence and uniqueness for the Riccati] \label{existenceTheorem}
Let $u_0\in \mathbb C$ with $\Re(u_0)\leq 0$ and $a,b,c:[0,T]\to \mathbb C$ be measurable and bounded functions satisfying \eqref{eq:assumptionscoeff}. Then,  there exists a unique extended solution $\psi:[0,T]\to \mathbb C$ to the Riccati equation \eqref{eq:IVP} such that 
\begin{align}\label{eq:psireal}
   \Re (\psi(t))\leq 0, \quad t\leq T,
\end{align}
and
\begin{align}\label{eq:psibound}
    \sup_{t\leq T} |\psi(t)|<\infty.
\end{align}
\end{theorem}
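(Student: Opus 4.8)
The plan is to combine a \emph{local existence} argument with an \emph{a priori bound} that prevents blow-up, which together yield global existence; uniqueness follows from a Gronwall estimate. The key device is the linearization of the Riccati equation via a variation of constants / Lagrange transformation. Concretely, I would first establish the a priori estimate \eqref{eq:psireal}: suppose $\psi$ is a local extended solution on $[0,t^*]$ and write $\psi = p + iq$ with $p,q$ real. From \eqref{eq:IVP} one gets $p'(t) = a(t)(p^2(t)-q^2(t)) + \Re(b(t))p(t) - \Im(b(t))q(t) + \Re(c(t))$. The idea is to show $p$ cannot cross $0$ from below: if $p(t_0)=0$ at some first time, then $p'(t_0) = -a(t_0)q^2(t_0) - \Im(b(t_0))q(t_0) + \Re(c(t_0))$, and completing the square in $q(t_0)$ gives $p'(t_0) \le -a(t_0)\big(q(t_0)+\tfrac{\Im b(t_0)}{2a(t_0)}\big)^2 + \Re(c(t_0)) + \tfrac{\Im(b(t_0))^2}{4a(t_0)} \le 0$ by assumption \eqref{eq:assumptionscoeff}. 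A careful version of this barrier argument (e.g.\ applied to $p(t)-\delta e^{Ct}$ for small $\delta>0$ and large $C$, then letting $\delta\to 0$, to handle the measure-zero exceptional set and the non-strict inequality) shows $\Re\psi \le 0$ on the whole interval of existence.

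The a priori bound \eqref{eq:psibound} then comes almost for free on a \emph{finite} horizon: since $a>0$ and $\Re\psi\le 0$, the real part of $a\psi^2 = a(p^2-q^2) + 2iapq$ satisfies $a(p^2 - q^2) \le a p^2 \le a(p^2+q^2)$, but more usefully, writing the equation in ``energy'' form I would control $|\psi|$ by noting that the dangerous quadratic term $a\psi^2$ has real part that, when $\Re\psi \le 0$, cannot drive $|\psi|$ to $+\infty$: specifically, consider $\frac{d}{dt}|\psi|^2 = 2\Re(\bar\psi \psi') = 2a\,\Re(\bar\psi\psi^2) + 2\Re(b)|\psi|^2 + 2\Re(\bar\psi c)$, and $\Re(\bar\psi\psi^2) = \Re(|\psi|^2\psi) = |\psi|^2 p \le 0$. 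Hence $\frac{d}{dt}|\psi|^2 \le 2\Re(b)|\psi|^2 + 2|c||\psi| \le (2\|\Re b\|_\infty + 1)|\psi|^2 + \|c\|_\infty^2$, and Gronwall on $[0,T]$ gives $\sup_{t\le T}|\psi(t)| \le M$ for an explicit $M = M(T,\|b\|_\infty,\|c\|_\infty,|u_0|)$. This uniform bound is exactly what upgrades any maximal local solution to a global one: the standard continuation argument shows a maximal solution defined on $[0,t^*)$ with $t^*<T$ would force $|\psi(t)|\to\infty$ as $t\uparrow t^*$, contradicting the bound.

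For \emph{local existence}, since the coefficients are only measurable and bounded (not continuous), I would not invoke Peano directly but instead use the variation-of-constants reformulation: given a candidate bounded $\psi$, define the linear ODE $\eta'(t) = \big(a(t)\psi(t) + b(t)\big)\eta(t) + c(t)$, $\eta(0)=u_0$, whose unique (Carathéodory) solution is explicit via integrating factors; a fixed point $\psi=\eta$ of the map $\psi \mapsto \eta$ on a small ball in $L^\infty([0,t^*])$, for $t^*$ small, solves \eqref{eq:IVP}. The contraction estimate uses the local Lipschitz bound $|a\psi_1 - a\psi_2|\le \|a\|_\infty|\psi_1-\psi_2|$ on the ball, standard for Riccati. \emph{Uniqueness} on $[0,T]$: if $\psi_1,\psi_2$ are two global extended solutions, then $\Delta:=\psi_1-\psi_2$ satisfies $\Delta' = \big(a(\psi_1+\psi_2)+b\big)\Delta$ with $\Delta(0)=0$; since both solutions are bounded by $M$, the coefficient $a(\psi_1+\psi_2)+b$ is in $L^\infty([0,T])$, so Gronwall gives $\Delta\equiv 0$.

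The main obstacle is the a priori sign estimate $\Re\psi\le 0$, and specifically making the barrier argument rigorous when the coefficients are merely measurable and the inequality in \eqref{eq:assumptionscoeff} is non-strict — one must be careful that $p$ stays $\le 0$ almost everywhere in the Carathéodory sense, which is why the perturbation-by-$\delta e^{Ct}$ trick (choosing $C$ large enough to absorb $\|b\|_\infty$ and the quadratic term's linearization near the barrier, using boundedness of the candidate solution) is the cleanest route. Everything else (the finite-horizon bound via Gronwall, local existence via contraction, uniqueness via Gronwall) is routine once that sign is in hand.
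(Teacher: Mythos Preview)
Your proposal is correct and covers all the necessary ingredients, but the paper takes a cleaner route for the two a priori estimates. For the sign property \eqref{eq:psireal}, instead of a barrier/first-crossing argument with a $\delta e^{Ct}$ perturbation (which, as you note, needs care because the coefficients are only measurable and the derivative exists only a.e.), the paper simply views the equation for $p=\Re\psi$ as a \emph{linear} ODE in $p$ with coefficient $a\Re\psi+\Re b$ (treated as known) and source term $d(s)=-a(q+\tfrac{\Im b}{2a})^2+\Re c+\tfrac{(\Im b)^2}{4a}\le 0$; the variation-of-constants formula then gives $p(t)$ as a sum of two manifestly nonpositive terms, with no perturbation or crossing-time analysis needed. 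For the bound \eqref{eq:psibound}, the paper reuses the same trick on the full complex equation: write $\psi' = (a\psi+b)\psi + c$, apply variation of constants, take moduli, and use $\Re\psi\le 0$ together with $a>0$ to bound $|e^{\int_s^t(a\psi+b)}|=e^{\int_s^t(a\Re\psi+\Re b)}\le e^{\int_s^t\Re b}$, yielding $|\psi(t)|\le C(|u_0|+\int_0^T|c|)$ directly. Your energy estimate on $|\psi|^2$ and your barrier argument both work, but the paper's two applications of variation of constants are shorter and avoid the Carath\'eodory-regularity subtleties you flagged. Local existence (Carath\'eodory/Gripenberg in the paper, contraction mapping for you) and uniqueness via Gronwall are handled equivalently.
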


\begin{proof} {For the existence part, we proceed in two steps. First, we  start by arguing the existence of a local solution using 
Carathéodory's theorem.  For this we rely on \cite[Chapter 12, Section 2]{GripenbergVolterraIntegral},  using the notations therein (see equation (1.7) for example), we consider the integral equation
\begin{equation} \label{eq:GripenbergEquation}
    \psi(t) = \psi(0) + \int_0^t g \left(t,s,\psi(s)\right)ds, \quad t \geq 0,
\end{equation}
where the operator $g$ is defined by
\begin{equation*}
    g(t,s,\psi(s)):= a(s) \psi(s)^2 + b(s) \psi(s) + c(s).
\end{equation*}
 Let $D$ be an open, connected subset of $\mathbb R^+ \times \mathbb C$ that contains $\left( 0, \psi(0) \right)$, and $C\left( D, \mathbb C \right)$ be the set of continuous applications valued from $D$ to $\mathbb C$. Define
\begin{equation*}
    T_{\infty} := \sup \left\{ t \in \mathbb R^+ | C_{\psi(0),D} \left( [0,T), \mathbb C \right) \neq \emptyset \right\},
\end{equation*}
where
\begin{equation*}
    C_{\psi(0),D} \left( [0,T), \mathbb C \right) := \left\{ \phi \in C\left( D, \mathbb C \right) | \phi(0) = \psi(0) \text{ and } \left( t, \phi(t)\right) \in D \text{ for } t \in [0,T)\right\}.
\end{equation*}
\noindent An application of \cite[Theorem 2.6]{GripenbergVolterraIntegral} yields the existence of a unique non-continuable solution to \eqref{eq:GripenbergEquation} which means that $\left( t, \psi(t) \right) \in D$ on the interval $[0,T_{\infty})$ and that either $T_{\infty} = T$ or $\underset{t \rightarrow T_{\infty}}{\text{lim}} |\psi(t)| = \infty$.  Indeed, the assumptions (i) to (v) of \cite[Theorem 2.6]{GripenbergVolterraIntegral} are readily satisfied by boundedness and integrability of $a$, $b$ and $c$ and the fact that $g$ does not depend on $t$ and satisfies the Carathéodory conditions.}
\noindent {Second,} we argue that 

\begin{equation}\label{eq:suppsit+}
      \sup_{t\leq T_{\infty}} |\psi(t)|<\infty,
\end{equation} 
which would then yield $T_{\infty}=T$ and the existence of a global solution $\psi$.  Let $t< T_{\infty}$. We start by showing that $\Re(\psi(t)) \leq 0$.  Indeed, taking real parts in \eqref{eq:IVP}, $\psi_{\text{\bold r}}:=\Re \left( \psi \right)$ satisfies the following equation on $[0,T_{\infty})$:
    \begin{equation} \notag
       \psi_{\text{\bold r}}' (s)= \left\{ a(s) \Re \left( \psi(s) \right) + \Re \left( b(s) \right) \right\} \psi_{\text{\bold r}}(s)+ d(s),
    \end{equation}
where $d(s) = - a(s) \left( \Im\left( \psi(s) \right) + \frac{\Im \left( b(s) \right)}{2a(s)} \right)^2 + \Re \left( c(s) \right) + \frac{\Im^2\left( b(s) \right)}{4a(s)} \le 0$ thanks to condition \eqref{eq:assumptionscoeff}, after a completion of squares. The variation of constant for $\psi_{\text{\bold r}}$ then yields
    \begin{equation} \notag
       \psi_{\text{\bold r}}(t) = e^{\int_0^t \left( a(u) \Re \left( \psi(u) \right) + \Re\left( b(u) \right) \right) du} \Re(u_0)  + \int_0^t   d(s)  e^{\int_s^t \left( a(u) \Re \left( \psi(u) \right) + \Re \left( b(u) \right) \right) du}ds \le 0,
    \end{equation}
 since the exponential is positive and $d(s)\leq 0$, and $\Re(u_0)\leq 0$ by assumption.  This shows that $\Re(\psi)\leq 0,  $ on $[0,T_{\infty})$. Finally, an application of a similar variation of constants formula on  equation \eqref{eq:IVP} leads to 
  \begin{equation} \notag
       \psi(t) = e^{\int_0^t \left( a(u) \left( \psi(u) \right) + \Re\left( b(u) \right) \right) du} u_0  + \int_0^t  e^{\int_s^t \left( a(u) \left( \psi(u) \right) +  \left( b(u) \right) \right) du} c(s) ds,
    \end{equation}
so that taking the module together with the triangle inequality and the fact that $\Re(\psi)\leq 0 $ on $[0,T_{\infty})$, yields 
\begin{align*}
|\psi(t)| &\leq  \left |e^{\int_0^t \left( a(u)  \psi(u)  + \left( b(u) \right) \right) du} \right | |u_0|  + \int_0^t  \left| e^{\int_s^t \left( a(u)  \psi(u)+  \left( b(u) \right) \right) du}\right|  |c(s)| ds\\
&=    e^{\int_0^t \left( a(u)  \Re \left( \psi(u) \right) + \Re\left( b(u) \right) \right) du}  |u_0|  + \int_0^t  e^{\int_s^t \left( a(u)  \Re\left( \psi(u) \right) +  \Re\left( b(u) \right) \right) du} |c(s)|ds\\
&\leq  e^{\int_0^t  \Re\left( b(u) \right)  du}  |u_0|  + \int_0^T  e^{\int_s^t  \Re\left( b(u) \right)  du}  |c(s)|ds \\
&\leq C\left( |u_0| + \int_0^T |c(s)|ds\right),
\end{align*}
where $C=\sup_{s,s'\in [0,T]^2} e^{\int_s^{s'} \Re(b(u)) du}$  does not depend on $t$ 
 and is finite by boundedness of $b$. 
This shows \eqref{eq:suppsit+} as needed. Combining the above we obtain the existence of a solution $\psi$ on $[0,T]$ satisfying \eqref{eq:psireal} and \eqref{eq:psibound}.

{
\noindent To argue uniqueness, assume there are two such extended solutions $\psi_1$ and $\psi_2$ that satisfy \eqref{eq:IVP}. Then,
\begin{equation*}
    \left(\psi_2-\psi_1\right)'(t) = 
\left( a(t) \left(\psi_2+\psi_1\right)(t) + b(t) \right) \left(\psi_2-\psi_1\right)(t), \quad \left(\psi_2-\psi_1\right)(0) = 0, \quad t\le T,
\end{equation*}
which yields
\begin{equation*}
    \left|\psi_2-\psi_1\right|(t) \le 
\int_0^t \left| a(s) \left(\psi_2+\psi_1\right)(s) + b(s) \right| \left|\psi_2-\psi_1\right|(s) ds \leq c \int_0^t \left|\psi_2-\psi_1\right|(s) ds ,  \quad t\le T,
\end{equation*}
for some $c>0$ by boundedeness of $(\psi_1,\psi_2, a, b)$ using \eqref{eq:suppsit+}, so that 
the uniqueness is obtained  from Gronwall's lemma.
}
\end{proof}

\subsection{Proof of Theorem~\ref{T:charfuneps}}\label{S:proofchar}
We first argue the existence of a solution to the system of Riccati equations \eqref{eq:Ric1}-\eqref{eq:Ric2}. Let us rewrite the Riccati ODE from \eqref{eq:Ric2} as
\begin{equation} \label{eq:reformulation_ODE_psi_eps}
    \psi_\epsilon'(t) = a_\epsilon \psi_\epsilon ^2(t) + b_\epsilon(t) \psi_\epsilon(t) +c_\epsilon(t), \quad \psi_\epsilon(0) = 0, \quad t \le T,
\end{equation}
where we defined
\begin{equation}\label{eq:abc}
\begin{cases} 
    a_\epsilon & := \epsilon^{H-\frac{1}{2}} \frac{\xi^2}{2}\\
    b_\epsilon(t) & := \epsilon^{H-\frac{1}{2}} \rho \xi f(t) -  \epsilon^{-1} \\
    c_\epsilon(t) & := \epsilon^{H-\frac{1}{2}} \left[ g(t) + \frac{f^2(t) - f(t)}{2} \right].
\end{cases}
\end{equation}
Since condition \eqref{eq:condfg} ensures
$$
\Re\left( c_{\epsilon} \right) + \frac{\Im \left( b_{\epsilon} \right)^2}{4a_{\epsilon}} = \epsilon^{H-1/2} \left( \Re g + \frac{1}{2} \left( \left(\Re f\right)^2 - \Re f \right) + \left( \rho^2 - 1 \right) \left( \Im f \right)^2 \right) \leq 0,
$$
then conditions \eqref{eq:assumptionscoeff} are readily satisfied and consequently Theorem \ref{existenceTheorem} yields the existence and uniqueness of a solution $\psi_\epsilon:[0,T]\to \mathbb C$
to the Riccati ODE \eqref{eq:Ric2} such that
$$
\Re(\psi_\epsilon(t)) \leq 0,\quad t \leq T.
$$
 The function  $\phi_{\epsilon}$ defined  in integral form as
\begin{equation*}
    \phi_{\epsilon}(t) = \left( \theta + \epsilon^{-H-\frac{1}{2}} V_0 \right) \int_0^t \psi_\epsilon(s)ds, \quad t \le T,
\end{equation*}
solves  \eqref{eq:Ric1}. 

\noindent We now prove the expression for the charateristic functional \eqref{genericConditionalExpectation}. Define the following process $M$:
\begin{align*}
M_t &= \exp(U_t),  \\
U_t &=  \phi_{\epsilon} \left( T-t \right) + \epsilon^{\frac{1}{2}-H} \psi_{\epsilon} \left( T-t \right) V^\epsilon_t +    \int_0^t f(T-s) d\log S^\epsilon_s + \int_0^t g(T-s) d \bar{V}_s^\epsilon.
\end{align*}
In order to obtain \eqref{genericConditionalExpectation}, it suffices to show that $M$ is a martingale. Indeed, if this is the case, and after observing that the terminal value of $M$  is  given by
$$ M_T = \exp\left(  \int_0^T f(T-s) d\log S^\epsilon_s + \int_0^T g(T-s) d \bar{V}_s^\epsilon\right),$$
recall that $\phi_{\epsilon}( 0 )=\psi_{\epsilon}( 0 )=0$,
we obtain 
\begin{align*}
 \mathbb E\left[ \exp\left(  \int_0^T f(T-s) d\log S^\epsilon_s + \int_0^T g(T-s) d \bar{V}_s^\epsilon\right) \bigg|\mathcal F_t\right] =  \mathbb E\left[  M_T  \big| \mathcal F_t\right] = M_t = \exp\left(U_t\right),
\end{align*}
which yields \eqref{genericConditionalExpectation}. We now argue that $M$ is a martingale. We first show that $M$ is a local martingale using Itô formula. The dynamics of $M$ read
\begin{align*}
    dM_t = M_t \left(dU_t + \frac 1 2 d\langle U\rangle_t \right),
\end{align*}
with 
\begin{align*}
     dU_t = & \left\{ {\bl - } \phi_{\epsilon}'(T-t) {\bl + } \left( \theta + \epsilon^{-\frac{1}{2}-H} V_0 \right) \psi_\epsilon(T-t)  + \left( - \epsilon^{\frac{1}{2}-H} \psi_\epsilon '(T-t) - \epsilon^{-\frac{1}{2}-H} \psi_\epsilon(T-t) + g(T-t) - \frac{f(T-t)}{2}\right) V^\epsilon_t \right\} dt\\
    & + \left( \xi \psi_\epsilon(T-t) + \rho f(T-t) \right) \sqrt{V^\epsilon_t}dW_t+\sqrt{1-\rho^2}f(T-t)\sqrt{V^\epsilon_t}dW^\perp_t.
\end{align*}
This yields that the drift in $dM_t/M_t$ is given by 
\begin{align*}
    & {\bl - } \phi_\epsilon'(T-t) {\bl + } \left( \theta + \epsilon^{-\frac{1}{2}-H} V_0 \right) \psi_\epsilon(T-t) \\
    & + \left( - \epsilon^{\frac{1}{2}-H} \psi_\epsilon'(T-t) + \frac{\xi^2}{2} \left( \psi_\epsilon (T-t) \right)^2 + \left( \rho \xi f(T-t) - \epsilon^{-\frac{1}{2}-H} \right) \psi_\epsilon(T-t) + g(T-t) + \frac{f^2(T-t) - f(T-t)}{2}\right) V^\epsilon_t
\end{align*}

\noindent which is equal to $0$ from the Riccati equations \eqref{eq:Ric1} and \eqref{eq:Ric2}. This shows that $M$ is a local martingale. 
{To argue that $M$ is a true martingale, we note that $\Re(\psi_{\epsilon})\leq 0$ which implies $\Re(\phi_{\epsilon})\leq 0$, so that 
\begin{align*}
    \Re(U_t) &\leq \int_0^t \Re(f(T-s)) d\log S^\epsilon_s + \int_0^t \Re(g(T-s)) d \bar{V}_s^\epsilon \\
    &= \int_0^t \left( \Re(g(T-s)) 
 -\frac 12\Re(f(T-s))\right)  {V}_s^\epsilon ds + \int_0^t   \Re(f(T-s))  \sqrt{{V}_s^\epsilon}dB_s \\
 &\leq  -\frac 12 \int_0^t  
\Re(f(T-s))^2 {V}_s^\epsilon ds + \int_0^t   \Re(f(T-s))  \sqrt{{V}_s^\epsilon}dB_s=:\tilde U_t,
\end{align*}
where the last inequality follows from \eqref{eq:condfg}.  It follows that 
\begin{align*}
    |M_t|=\exp(\Re(U_t)) \leq \exp(\tilde U_t),
\end{align*}
where the process $\exp(\tilde U)$ is a true martingale, see \cite[Lemma 7.3]{abi2019affine}. This shows that  $M$ is a true martingale, being a local martingale bounded by a true martingale, see \cite[Lemma 1.4]{jarrow2018continuous}, which concludes the proof. 
} 

%{\bl Given that $\phi_{\epsilon} \left( T-t \right) + \epsilon^{\frac{1}{2}-H} \psi_{\epsilon} \left( T-t \right) x$ has its real part non-positive for all $x \in \mathbb{R}_+$, then $M$ is uniformly bounded and hence $M$ is a true martingale.} {\ma il reste les deux autres termes en $\int$ dans $M$...} {\bl Ah oui en effet... $f$ et $g$ étant bornées, il nous faudrait un critère de type Novikov pour conclure, pas sûr d'avoir les détails en tête ... Sinon on peut essayer d'obtenir $\mathbb{E} \left[ \langle M_t \rangle_t \right] = \mathbb{E} \left[ \int_0^t M_s^2 b^2(s,V^\epsilon_s) ds \right]$ est finie, avec $dM_t = a(t, V^\epsilon_t) dt + b(t, V^\epsilon_t) dWt$ ? j'y réfléchis !} 

\section{From reversionary Heston to jump processes}\label{S:convjumps}

In this section, we establish the convergence of the log-price and the integrated variance $(\log S^{\epsilon},\bar V^{\epsilon})$ in  the reversionary Heston model \eqref{eq:reversionaryHestonInstantaneousSpot}-\eqref{eq:reversionaryHestonInstantaneousVol} towards a Lévy jump process $(X,Y)$, as $\epsilon$ goes to $0$. More precisely, the limit $(X, Y)$ belongs to the class of Normal Inverse Gaussian - Inverse Gaussian (NIG-IG) processes defined as follows.

\begin{definition}[NIG-IG process]\label{D:NIGIG} Fix  $\alpha \geq |\beta| \geq 0$, $\delta,\lambda > 0$ and $\mu \in \R$. We say that $\left( X_t, Y_t \right)_{t \geq 0}$ is a Normal Inverse Gaussian - Inverse Gaussian (NIG-IG) process with parameters $\left( \alpha,\beta,\delta,\mu,  \lambda\right)$ if it is a two-dimensional homogeneous Lévy process with càdlàg sample paths, starting  from $(X_0,Y_0)=(0,0)$ almost surely, with Lévy exponent $\eta$ defined by
\begin{equation} \label{eq:eta_definition}
    \eta(u,v) := \left[ i \mu u + \delta \left( \sqrt{\alpha^2 - \beta^2} -\sqrt{\alpha^2 - 2i\lambda v -\left( \beta + iu \right)^2} \right) \right], \quad u, v \in \mathbb{R},
\end{equation}
i.e.~the joint characteristic function is given by
\begin{equation} \label{eq:charnigigprocess}
    \mathbb{E}\left[ \exp \left( iu X_t + iv Y_t \right) \right] = \exp \left( \eta(u,v) t \right), \quad u, v \in \mathbb{R}, \quad t \leq T.
\end{equation}
\end{definition}

\noindent In order to justify the existence of such a class of Lévy processes, one needs to justify that $\eta$ given in \eqref{eq:eta_definition} is indeed the logarithm of a characteristic function associated to an infinitely divisible distribution, see  \cite[Corollary 11.6]{sato_levy_processes}.  
 This is the object of the following lemma, which also provides the link with first-hitting times and subordinated processes. 

\begin{lemma}[Representation using subordination]\label{L:representation_subordination}
    Let $\alpha \geq |\beta| \geq 0$, $\delta,\lambda > 0$, $\mu \in \R$  and   $(\widetilde W, \widetilde W^{\perp})$ be a two dimensional Brownian motion. Let $(\Lambda_t)_{t\in [0,T]}$ be the first hitting-time process defined as
    \begin{equation}\label{eq:firstpassagetime}
    \Lambda_t :=  \inf \left\{ s \geq 0 : \sqrt{\alpha^2-\beta^2}s + \widetilde W_{s} \geq \delta t \right\}, \quad t \in \left[ 0,T 
    \right],
\end{equation}
and define $Z$ as the following shifted subordinated process 
\begin{align}\label{eq:subord}
    Z_t =\mu t + \beta \Lambda_t +  \widetilde W^{\perp}_{\Lambda_t}, \quad t \in \left[ 0,T 
    \right].
\end{align}
Then,
\begin{align}\label{eq:charfunsub}
    \mathbb E\left[ \exp\left( iuZ_t+ iv \lambda \Lambda_t \right) \right]  = \exp(\eta(u,v)t), \quad u,v \in \mathbb R, \quad t \in \left[ 0,T 
    \right].
\end{align}
 In particular, $\eta$ given by \eqref{eq:eta_definition} is the logarithm of the characteristic function  of the joint random variable $(Z_1,\lambda \Lambda_1)$ which is infinitely divisible.
\end{lemma}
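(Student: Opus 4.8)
The plan is to prove the characteristic function identity \eqref{eq:charfunsub} directly, from which the infinite-divisibility claim follows immediately by evaluating at $t=1$ and using that $(Z_t,\lambda\Lambda_t)$ is a L\'evy process (so its time-$1$ law is infinitely divisible). The key structural fact is that $\Lambda$ is a subordinator: it is the first-passage process of the Brownian motion with drift $\gamma := \sqrt{\alpha^2-\beta^2}\ge 0$, namely $s\mapsto \gamma s + \widetilde W_s$, evaluated at the increasing level $\delta t$. Such first-passage processes are the classical inverse Gaussian subordinators, whose Laplace exponent is well known: for $q\ge 0$,
\[
\mathbb{E}\big[e^{-q\Lambda_t}\big] = \exp\!\left(-\delta t\big(\sqrt{2q+\gamma^2}-\gamma\big)\right).
\]
I would either cite this (e.g. via the optional stopping theorem applied to the exponential martingale $\exp(\kappa(\gamma s+\widetilde W_s)-\tfrac12\kappa^2 s)$ with $\kappa$ chosen so that $\kappa\gamma - \tfrac12\kappa^2 = -q$, giving $\kappa = \gamma - \sqrt{\gamma^2+2q}$) or prove it in one line that way. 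This already pins down $\Lambda$ as an IG subordinator and shows $\lambda\Lambda_t$ contributes the "$2i\lambda v$" term.

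Next I would condition on the path of $\Lambda$. Given $\mathcal{F}^{\Lambda}$ (equivalently, given $\widetilde W$ up to the relevant hitting time), the process $\widetilde W^{\perp}$ is an independent Brownian motion, so $Z_t = \mu t + \beta\Lambda_t + \widetilde W^{\perp}_{\Lambda_t}$ is, conditionally, Gaussian with mean $\mu t + \beta\Lambda_t$ and variance $\Lambda_t$. Hence
\[
\mathbb{E}\big[e^{iuZ_t}\,\big|\,\Lambda\big] = \exp\!\left(iu\mu t + iu\beta\Lambda_t - \tfrac12 u^2 \Lambda_t\right),
\]
and therefore
\[
\mathbb{E}\big[e^{iuZ_t + iv\lambda\Lambda_t}\big] = e^{iu\mu t}\,\mathbb{E}\!\left[\exp\!\Big(\big(iu\beta - \tfrac12 u^2 + iv\lambda\big)\Lambda_t\Big)\right].
\]
The exponent multiplying $\Lambda_t$ is $-q$ with $q = \tfrac12 u^2 - iu\beta - i\lambda v$; plugging into the Laplace-exponent formula gives $\exp\big(iu\mu t - \delta t(\sqrt{2q+\gamma^2}-\gamma)\big)$, and one checks $2q + \gamma^2 = u^2 - 2iu\beta - 2i\lambda v + \alpha^2 - \beta^2 = \alpha^2 - 2i\lambda v - (\beta+iu)^2$ and $\gamma = \sqrt{\alpha^2-\beta^2}$, which is exactly $\eta(u,v)t$ as in \eqref{eq:eta_definition}. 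A small care point: the Laplace identity is a priori for real $q\ge 0$, but here $q$ is complex with $\Re q = \tfrac12 u^2 \ge 0$; I would justify the extension by analytic continuation in $q$ over $\{\Re q > 0\}$ (both sides are analytic there since $\Lambda_t \ge 0$ makes the left side well-defined and bounded) together with continuity up to the boundary, or simply observe that the exponential martingale argument with complex $\kappa$ goes through provided $\Re(\kappa^2)$ is controlled — the cleanest route is analytic continuation.

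For the final sentence: since $(Z,\lambda\Lambda)$ is built by deterministic operations (drift, subordination by a L\'evy subordinator, independent Brownian time-change) it is a two-dimensional L\'evy process started at the origin with c\`adl\`ag paths, hence $(Z_1,\lambda\Lambda_1)$ is infinitely divisible and \eqref{eq:charfunsub} at $t=1$ identifies $\eta$ as the logarithm of its characteristic function; by \cite[Corollary 11.6]{sato_levy_processes} this is exactly what is needed to legitimize Definition~\ref{D:NIGIG}. The only genuine obstacle is the rigorous treatment of the complex Laplace exponent (the analytic-continuation / domain-of-validity bookkeeping); everything else is the standard inverse-Gaussian-subordinator computation and conditioning on the subordinator. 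I would present the martingale derivation of the IG Laplace transform, then the conditioning step, then the algebraic identification of $\eta$, and finally the one-line infinite-divisibility remark.
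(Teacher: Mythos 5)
Your proposal is correct and takes essentially the same route as the paper's proof: condition on $\Lambda_t$, observe that $Z_t$ given $\Lambda_t$ is Gaussian with mean $\mu t+\beta\Lambda_t$ and variance $\Lambda_t$, and apply the tower property together with the inverse Gaussian Laplace transform to land on $\eta(u,v)t$. The only cosmetic differences are that the paper cites rather than derives the IG (and, when $\alpha=|\beta|$, Lévy) Laplace transform---its Definition~\ref{D:definitions_ig_levy_nig_ig} already states the formula for complex arguments with non-positive real part, which sidesteps the analytic-continuation bookkeeping you flag---and it concludes infinite divisibility by verifying directly that the NIG-IG family is closed under independent convolution rather than appealing to the Lévy-process structure of $(Z,\lambda\Lambda)$.
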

\begin{proof}
Fix $t \in \left[ 0,T \right]$. By construction, it is well-known that  $\Lambda_t$ has an Inverse Gaussian distribution if $\alpha > \left| \beta \right|$ with parameters $\textit{IG}\left( \frac{\delta t}{\sqrt{\alpha^2 - \beta^2}}, \delta^2 t \right)$, and in the drift-free case $\alpha = \left| \beta \right|$, $\Lambda_t$ follows a Lévy distribution with parameters $\textit{Lévy}\left( 0, \delta^2 t \right)$ (see \cite{barndorff_nig_1997} and Definition \ref{D:definitions_ig_levy_nig_ig} in the Appendix). Now conditional on $\Lambda_t$, $Z_t$ is Gaussian with parameters $\mathcal N \left( \mu t + \beta \Lambda_t, \Lambda_t \right) $ and using the tower property of conditional expectation, we get for the first case that
    \begin{align*}
        \mathbb E \left[\exp\left( iu Z_t + iv \lambda \Lambda_t \right) \right] & = \mathbb E \left[ \mathbb E \left[ \left. \exp\left( iu Z_t \right) \right| \Lambda_t \right] \exp \left( iv \lambda \Lambda_t \right) \right] \\
        & = \mathbb E \left[ \exp \left( iu \left( 
        \mu t + \beta \Lambda_t \right) - \frac{\Lambda_t u^2}{2} + iv \lambda \Lambda_t \right) \right] \\
        & = \exp \left( iu \mu t \right) \mathbb E \left[ \exp \left( \left( iu \beta - \frac{u^2}{2} + iv \lambda \right) \Lambda_t \right) \right] \\
        & = \exp \left( iu \mu t + \delta t \sqrt{\alpha^2 - \beta^2} \left( 1 - \sqrt{1-\frac{2}{\alpha^2 - \beta^2} \left( iu \beta - \frac{u^2}{2} + iv \lambda \right)} \right) \right)\\
        \text{i.e. }\mathbb{E} \left[ \exp \left( iuZ_t+iv\lambda \Lambda_t \right) \right] & = \exp \left( i u \mu t + \delta t \left( \sqrt{\alpha^2 - \beta^2} -\sqrt{\alpha^2 - 2i\lambda v -\left( \beta + iu \right)^2} \right)\right) = \exp(\eta(u,v)t), \quad u,v \in \mathbb R,
    \end{align*}
where we used Definition \ref{D:definitions_ig_levy_nig_ig} to get the fourth equality, noting that $\Re \left( i \left(u \beta + v \lambda \right) - \frac{u^2}{2} \right) \leq 0$. Similar computations yield the result for the case $\alpha = \left| \beta \right|$. Furthermore, we will say that that the random variable $(Z_1,\lambda \Lambda_1)$ follows a NIG-IG distribution with parameters $(\alpha,\beta,\mu, \delta,\lambda)$ (see Definition \ref{D:definitions_ig_levy_nig_ig} in the Appendix). Such distribution is  infinitely divisible because if $\left( X_1, Y_1 \right), \cdots, \left( X_m, Y_m \right)$ are independent NIG-IG random variables with common parameters $\left( \alpha, \beta, \lambda \right)$ and individual $\left( \mu_i, \delta_i \right)$, for $i =1, \cdots, m$, then $\left( X, Y \right) :=  \left( \sum_{i=1}^mX_i, \sum_{i=1}^mY_i \right)$ is again NIG-IG-distributed with parameters $\left( \alpha, \beta, \sum_{i=1}^m\mu_i, \sum_{i=1}^m\delta_i, \lambda \right)$.
% \noindent Finally, an application of \cite[Corollary 11.6]{sato_levy_processes} yields the NIG-IG process from Definition \ref{D:NIGIG} is indeed the concatenation of the pair the Normal Inverse Gaussian Lévy process $Z_t := \mu t + \beta \Lambda_t + \tilde W_{\Lambda_t}^\perp$ and the $\lambda$-multiplied Inverse Gaussian Lévy process $\lambda \times \Lambda_t$, up to identity in law.
\end{proof}

\noindent The appellation NIG-IG for the couple $(X,Y)$ in Definition~\ref{D:NIGIG} is justified as follows:
  \begin{itemize}
\item  
    $Y$ is an Inverse Gaussian process first derived by \citet{schrodinger_1915} which can be checked either by recovering the Inverse Gaussian distribution with parameters $\textit{IG}\left( \frac{\lambda \delta}{\sqrt{\alpha^2 - \beta^2}}, \lambda \delta^2 \right)$ after setting $u=0$ in \eqref{eq:charnigigprocess}; or by using  the representation as a first passage-time in \eqref{eq:firstpassagetime}. It is worth pointing that, 
 for $\alpha = |\beta|$, one recovers the well-known Lévy distribution for the  first-passage of a Brownian motion  with parameters $\textit{Lévy}\left( 0, \lambda \delta^2 \right)$. The \textit{Lévy} distribution can be seen as a special case of the \textit{Inverse Gaussian} distribution. 
    \item 
    $X$ is the celebrated Normal Inverse Gaussian process of \citet{barndorff_nig_1997}, with parameters $\textit{NIG}\left( \alpha, \beta, \mu, \delta \right)$, which can be checked by setting $v=0$ in \eqref{eq:charnigigprocess} or by using the representation as subordinated Brownian motion with an Inverse Gaussian subordinator as in \eqref{eq:subord}.
\end{itemize}

\noindent In addition, we allow in Definition~\ref{D:NIGIG} the parameter $\alpha$ to be equal to $\infty$ in the following sense:

\begin{remark}[Normal process]\label{R:Norma}
Considering the set of parameters
    \begin{equation*}
        (\alpha,\beta,\delta,\mu, \lambda) =  \left( \alpha, 0, \sigma^2 \alpha, \mu, 1 \right),
    \end{equation*}
    a second order Taylor expansion, as $\alpha \to \infty$, of the square root yields
    \begin{equation*}
        \mathbb{E}\left[ \exp \left( iu X_t + iv Y_t \right) \right] = \exp \left( \left[ i \mu u - \sigma^2 \left(\frac{u^2}{2} + iv\right)\right] t \right), \quad u, v \in \mathbb{R}, \quad t \leq T,
    \end{equation*}
    which is equivalent to the normal-deterministic process defined by
    \begin{equation*}
         \left( X_t, Y_t \right)_{t \in [0,T]} =  \left( \mu t + \sigma \widetilde W_t, \sigma^2 t \right)_{t \in [0,T]} \overset{d}{=} \left( \mu t + \widetilde W_{\sigma^2 t}, \sigma^2 t \right)_{t \in [0,T]},
    \end{equation*}
{where $\widetilde W$ is an $\mathcal{F}$-adapted Brownian motion.} We will consider that such (degenerate) process is a particular case of Definition~\ref{D:NIGIG} with parameters denoted by
    $$\left. \left( \alpha, 0, \sigma^2 \alpha, \mu, 1 \right)\right|_{\alpha \to \infty}.$$
\end{remark}

\noindent We are now in place to state our main convergence theorem. Theorem~\ref{T:charfunlimit} provides the convergence of the finite-dimensional distributions of the joint process $(\log S^{\epsilon}, \bar V^{\epsilon})$  through the  study of the limiting behavior of the  characteristic functional given in Theorem~\ref{T:charfuneps}. Interestingly, the limiting behavior disentangles three different asymptotic regimes based on the values of $H$ that can be seen intuitively on the level of the Riccati equation \eqref{eq:Ric2} as follows. Applying the variation of constants on $\psi_\epsilon$ from equations \eqref{eq:reformulation_ODE_psi_eps} and \eqref{eq:abc}, we get:
\begin{align}
    \psi_\epsilon(t) & = \epsilon^{H+1/2} \int_0^t K_\epsilon(t-s) F(s,\psi_{\epsilon}(s)) ds, \quad t\leq T,\label{eq:varconstantpsi} \\
    F(s,u) & := \frac{\xi^2}{2} u^2 + \rho \xi f(s) u + g(s) + \frac{f^2(s) - f(s)}{2}, \label{eq:varconstantF}
\end{align}
with $K_{\epsilon}$ the kernel defined by \begin{equation}\label{eq:kernelexp}
    K_\epsilon(t) =  {\epsilon}^{-1} e^{-{\epsilon}^{-1}t}, \quad  t\geq 0.
\end{equation}
Assuming that $\psi_{\epsilon}$ converges to some $\psi_0$ and observing that $K_{\epsilon}$ plays the role of the Dirac delta as $\epsilon \to 0$, one expects $\int_0^t K_\epsilon(t-s) F(s,\psi_{\epsilon}(s)) ds \to F(t,\psi_0(t))$ in \eqref{eq:varconstantpsi}, the  pre-factor $\epsilon^{H+1/2}$ suggests then  three different limiting regimes with respect to $H$ that can be characterized through the functions $F$  and  $\psi_0$: 
\begin{equation}
  \begin{cases}
    \psi_0(t) = 0,  &\mbox{if } H>-1/2,   \label{eq:psi0}\\
     \psi_0(t)= F(t,\psi_0(t)),  \quad  \Re(\psi_0(t))\leq 0,  &\mbox{if } H=-1/2,   \\
     0 = F(t,\psi_0(t)),  \quad 
 \quad 
 \, \, \; \,
 \Re(\psi_0(t))\leq 0,   &\mbox{if } H<-1/2.
 \end{cases}
\end{equation}
The function  $\psi_0$ in \eqref{eq:psi0} is even explicitly given by 
\begin{equation}
 \psi_0(t)=  \begin{cases}
    0, \quad &\mbox{if } H>-1/2,   \label{eq:psi00}\\
     \xi^{-2} \left( 1 - \rho \xi f(t) - \sqrt{\left( 1 - \rho \xi f(t) \right)^2 - 2 \xi^2 \left( g(t) + \frac{f^2(t)-f(t)}{2} \right)} \right),  &\mbox{if } H=-1/2,   \\
      - \xi^{-1} \left( \rho f(t) + \sqrt{f(t) \left( 1 - \left( 1- \rho^2 \right) f(t) \right) - 2 g(t)} \right),  &\mbox{if } H<-1/2,
 \end{cases}
\end{equation}
see Lemma~\ref{L:lemmaExistenceRootWithNegRealPart} below. {Furthermore, the convergence of the integrated variance process is strengthened to a functional weak convergence on the Skorokhod space  $D$ of real-valued càdlàg paths on $[0,T]$ endowed with the strong $M_1$ topology, see Section~\ref{ss:SM1_topology_remainder} below. Such topology is weaker and less restrictive than the commonly used uniform or $J_1$ topologies which share the property that a jump in a limiting
process can only be approximated by jumps of comparable size at the same time or, respectively, at nearby times. On the contrary, the $M_1$ topology of \citet{skorokhod_1956_m1} captures approximations of unmatched jumps, which in our case, will allow us to prove the convergence of the stochastic process $\bar V^{\epsilon}$  with continuous sample trajectories towards a Lévy process with càdlàg sample trajectories. The statement is now made rigorous in the following theorem.}

\begin{theorem}[Convergence towards NIG-IG processes]\label{T:charfunlimit}
    Let $f,g:[0,T]\to \mathbb C$ be bounded and measurable such that $\Re f = \Re g = 0$ and such that $\psi_0$ defined in \eqref{eq:psi00} has bounded variations. Then, based on the value of $H$, we obtain different explicit asymptotic formulas for the characteristic functional given in Theorem~\ref{T:charfuneps}:
    \begin{equation}\label{eq:convchar}
        \lim_{\epsilon \to 0} \mathbb{E} \left[ \exp\left(\int_0^T f(T-s) d\log {S^\epsilon_s} + \int_0^T g(T-s) d \bar{V}_s^\epsilon\right) \right]=\exp\left( \phi_0(T)\right),
    \end{equation}
    with 
     \begin{equation}
\phi_0(T) := \begin{cases}
    V_0 \int_0^T h(s) ds, \quad &\mbox{if } H>-1/2,   \label{eq:phi1} \\
    \left( \theta + V_0 \right) \xi^{-2} \left( T - \int_0^T \left( \rho \xi f(s) + \sqrt{\left( 1 - \rho \xi f(s)\right)^2 - 2 \xi^2 h(s)} \right) ds \right),  &\mbox{if } H=-1/2, \\
    -\theta \xi^{-1} \int_0^T \left( \rho f(s) + \sqrt{\rho^2 f^2(s) - 2 h(s)} \right) ds, \quad &\mbox{if } H<-1/2,
    \end{cases}
    \end{equation}
where $h(s): =g(s)+ \frac{f^2(s)-f(s)}{2}$.
In particular for $\rho \in (-1,1)$, as $\epsilon \to 0$,   the finite-dimensional distributions of the joint process $(\log \frac{S^{\epsilon}}{S_0},\bar V^{\epsilon})$ converge to the finite-dimensional distributions of a NIG-IG process $(X,Y)$ in the sense of Definition~\ref{D:NIGIG} with the following parameters depending on the value of $H$:
 \begin{equation}\label{eq:paramlimit}
(\alpha,\beta,\delta,\mu, \lambda) := \begin{cases}
    \left.\left( \alpha, 0, V_0 \alpha, - \frac{V_0}{2}, 1 \right)\right|_{\alpha \to \infty},  \quad &\mbox{if } H>-1/2, \\
 \left( \frac{1}{2} \frac{\sqrt{(\xi-2\rho)^2 + 4(1-\rho^2)}}{\xi (1-\rho^2)}, -\frac{1}{2} \frac{\xi - 2 \rho}{\xi \left( 1 - \rho^2 \right)}, \sqrt{1-\rho^2} (\theta+V_0)\xi^{-1}, - \rho \left(\theta + V_0\right) \xi^{-1}, \frac{1}{1-\rho^2}  \right) ,  &\mbox{if } H=-1/2, \\
\left( \frac{1}{2(1-\rho^2)}, - \frac{1}{2(1-\rho^2)}, \sqrt{1-\rho^2} \theta \xi^{-1}, - \rho \theta \xi^{-1}, \frac{1}{1-\rho^2} \right), \quad &\mbox{if } H<-1/2,
\end{cases}
\end{equation}
where $\theta$, $S_0$, $\xi$ and $V_0$ are the same from \eqref{eq:reversionaryHestonInstantaneousSpot}-\eqref{eq:reversionaryHestonInstantaneousVol}. Furthermore, the process $\bar V^{\epsilon}$ converges weakly towards $Y$ on the space $(D, SM_1)$, as $\epsilon \to 0$.
\end{theorem}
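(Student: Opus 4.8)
The strategy is to prove the convergence of finite-dimensional distributions first, using the explicit characteristic functional from Theorem~\ref{T:charfuneps}, and then upgrade the one-dimensional marginal convergence of $\bar V^\epsilon$ to functional weak convergence in $(D, SM_1)$. The first step is an ODE-asymptotics argument: I would plug the coefficients \eqref{eq:abc} into the variation-of-constants representation \eqref{eq:varconstantpsi} and analyse $\psi_\epsilon$ as $\epsilon\to 0$, separating the three regimes $H>-1/2$, $H=-1/2$, $H<-1/2$. The key point is that $K_\epsilon$ in \eqref{eq:kernelexp} is an approximate identity (its mass is $1$ and it concentrates at $0$), so $\int_0^t K_\epsilon(t-s)F(s,\psi_\epsilon(s))\,ds$ should track $F(t,\psi_0(t))$, while the prefactor $\epsilon^{H+1/2}$ forces the self-consistent equations in \eqref{eq:psi0}. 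I would make this rigorous by first establishing a uniform bound $\sup_{\epsilon\le 1}\sup_{t\le T}|\psi_\epsilon(t)|<\infty$ (this follows from the variation-of-constants estimates already used in the proof of Theorem~\ref{existenceTheorem}, since $\Re(\psi_\epsilon)\le 0$ controls the exponential factors), then showing $\psi_\epsilon\to\psi_0$ pointwise (or in $L^1$) on $[0,T]$; the explicit root formula \eqref{eq:psi00} and the sign condition $\Re(\psi_0)\le 0$ are exactly what Lemma~\ref{L:lemmaExistenceRootWithNegRealPart} is invoked to supply. Once $\psi_\epsilon\to\psi_0$ is in hand, $\phi_\epsilon(T)=(\theta+\epsilon^{-H-1/2}V_0)\int_0^T\psi_\epsilon(s)\,ds$ converges to the claimed $\phi_0(T)$ in \eqref{eq:phi1} — here one must be careful with the blow-up factor $\epsilon^{-H-1/2}V_0$, which is precisely why $V_0$ drops out of the $H<-1/2$ limit (the product $\epsilon^{-H-1/2}\int_0^T\psi_\epsilon$ stays bounded only because $\psi_\epsilon = O(\epsilon^{H+1/2})$ there) and survives otherwise.

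Given the limit $\exp(\phi_0(T))$ of the characteristic functional, I would then identify it with the NIG-IG exponent. Taking $f,g$ piecewise constant on a partition $0=t_0<\dots<t_n=T$ recovers the joint characteristic function of the increments of $(\log S^\epsilon/S_0, \bar V^\epsilon)$ over the partition; the limit factorizes into a product over subintervals because $\phi_0(T)=\int_0^T(\dots)ds$ is additive in $T$ and the integrand depends only on the local values of $(f,g)$. Matching the per-interval exponent against $\eta$ in \eqref{eq:eta_definition} — essentially a square-completion identity relating $\sqrt{(1-\rho\xi f)^2 - 2\xi^2 h}$ (resp. $\sqrt{\rho^2f^2 - 2h}$) to $\sqrt{\alpha^2 - 2i\lambda v - (\beta+iu)^2}$ with $f=iu$, $g=iv$ — pins down the parameters in \eqref{eq:paramlimit}. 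This also gives the independence and stationarity of increments, so the limit is genuinely the Lévy process of Definition~\ref{D:NIGIG}; the restriction $\rho\in(-1,1)$ is needed so that $\alpha>|\beta|$ strictly (the boundary $\rho=\pm1$ degenerates to the Lévy-subordinator case, or to a purely deterministic $Y$).

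The functional upgrade to $(D, SM_1)$ for $\bar V^\epsilon$ is where I expect the main obstacle, and I would treat it via the standard two-part criterion: (i) convergence of finite-dimensional distributions, which is already established above (take $f=0$ and $g$ piecewise constant to get the f.d.d.\ of $\bar V^\epsilon$ alone, matching the IG process $Y$); and (ii) $M_1$-tightness of $\{\bar V^\epsilon\}_{\epsilon>0}$ on $D([0,T])$. For tightness I would use the characterisation of relative compactness in the $M_1$ topology in terms of the $M_1$ oscillation modulus together with control of the sup-norm; since $\bar V^\epsilon_t=\int_0^t V^\epsilon_s\,ds$ is nondecreasing and continuous, its $M_1$ oscillation modulus is controlled simply by the increments $\bar V^\epsilon_{t}-\bar V^\epsilon_{s}$, and monotone processes are automatically "well-behaved" for $M_1$ (no excursions to cancel), so tightness should reduce to a uniform bound on $\mathbb E[\bar V^\epsilon_T]$ and an equi-smallness estimate on $\mathbb E[\bar V^\epsilon_{s+\eta}-\bar V^\epsilon_s]$ uniform in $\epsilon$. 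The delicate part is that $\mathbb E[V^\epsilon_t]$ does \emph{not} stay bounded pointwise in $t$ as $\epsilon\to 0$ near $t=0$ (it develops the spike that becomes the jump), so I cannot bound the modulus naively; instead I would exploit that $\mathbb E[\bar V^\epsilon_t]=V_0 t + \theta\,\epsilon^{H-1/2}\!\int_0^t\!\int_0^s e^{-\epsilon^{-1}(s-r)}dr\,ds$ (from \eqref{eq:reversionaryHestonInstantaneousVol} after taking expectations and solving the linear ODE) is explicitly computable and converges, uniformly on $[0,T]$, to $V_0 t + (\text{const})\cdot t$ plus a term that is bounded and monotone — and monotone convergence of distribution functions at continuity points, plus the fact that $SM_1$-convergence for monotone limits follows from f.d.d.\ convergence at a dense set plus convergence of the total mass $\bar V^\epsilon_T$, closes the argument. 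I would cite Whitt's treatment of the $M_1$ topology (or the excerpt's own Section~\ref{ss:SM1_topology_remainder}) for the precise tightness criterion, and remark that the $J_1$ topology genuinely fails, as already noted, because $\bar V^\epsilon$ is continuous while $Y$ has jumps.
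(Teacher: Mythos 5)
Your plan follows the same skeleton as the paper: asymptotic analysis of $\psi_\epsilon$ via variation of constants, passage to the limit in $\phi_\epsilon$, identification of the NIG-IG exponent by taking piecewise-constant $(f,g)$ and using the additivity of $\phi_0(T)$ over subintervals, and $M_1$-tightness of $\bar V^\epsilon$ exploiting monotonicity. Two genuine gaps remain, one minor and one central.

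The minor one: your explanation of the $H<-1/2$ limit for $\phi_\epsilon$ attaches the right mechanism to the wrong regime. For $H<-1/2$ one has $-H-1/2>0$, hence $\epsilon^{-H-1/2}\to 0$; the $V_0$ term drops out simply because its coefficient vanishes while $\int_0^T\psi_\epsilon$ stays bounded and tends to $\int_0^T\psi_0$. The ``blow-up compensated by $\psi_\epsilon=O(\epsilon^{H+1/2})$'' cancellation you describe is what happens for $H>-1/2$, where $\epsilon^{-H-1/2}\to\infty$ and the bound from Lemma~\ref{L:scaledRiccatiSolutionGoesToZero} supplies exactly the compensating decay, so $V_0$ survives.

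The central one: the heart of the theorem --- establishing $\int_0^T\psi_\epsilon\to\int_0^T\psi_0$ for $H\le -1/2$ --- is glossed over with the heuristic that $K_\epsilon$ is an approximate identity. That heuristic does not convert readily to a proof, because $\psi_\epsilon$ appears \emph{nonlinearly inside} the convolution (it is a self-referential fixed point, not a convolution of a fixed function against $K_\epsilon$). The paper's actual device is to set $\Delta_\epsilon(t):=(\psi_\epsilon(t)-\psi_0(t))e^{t/\epsilon}$, subtract the fixed-point identity ($\psi_0=F(\cdot,\psi_0)$ when $H=-1/2$, $0=F(\cdot,\psi_0)$ when $H<-1/2$), apply variation of constants to the resulting linear ODE with coefficient $\beta_\epsilon(t)=\frac{\xi^2}{2}(\psi_\epsilon+\psi_0)+\rho\xi f$, and then estimate a boundary term, a Riemann--Stieltjes integral $\int e^{\cdots}\,d\psi_0$, and --- for $H<-1/2$ only --- an additional term $\int e^{\cdots}\epsilon^{-1}\psi_0(s)\,ds$ whose control requires a uniform bound on the ratio $|\psi_0|/(-\Re\psi_0)$, supplied by Lemma~\ref{L:boundedness_gamma_ratio}. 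This is also exactly why the theorem \emph{assumes} $\psi_0$ has bounded variation (to give meaning to $d\psi_0$) and why the sign $\Re\psi_0\le 0$ matters (to bound $\Re\beta_\epsilon\le 0$). Your plan engages with none of this. Moreover, your proposal to first establish $\sup_{\epsilon\le 1}\sup_{t\le T}|\psi_\epsilon(t)|<\infty$ from the Theorem~\ref{existenceTheorem} estimate does not work for $H<-1/2$: that argument yields $|\psi_\epsilon|\le C\epsilon^{H+1/2}$, which diverges in this regime, and the paper does not in fact prove such a uniform bound --- the $\Delta_\epsilon$ change of variables is precisely what makes the argument go through without it.

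Your tightness sketch is sound. The paper takes a marginally shorter route, deducing tightness of the real-valued marginals $\{\bar V^\epsilon_T\}_{\epsilon>0}$ directly from Lévy's continuity theorem (the characteristic functions having already been shown to converge), and then bounding the endpoint oscillation via stochastic continuity of the limit, rather than computing $\mathbb E[\bar V^\epsilon_t]$ explicitly; both routes close the argument.
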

\begin{proof}  The  convergence of the characteristic functional in \eqref{eq:convchar} is established in Section~\ref{ss:proof_convergence_functional} (Lemmas~\ref{L:scaledRiccatiSolutionGoesToZero} and \ref{L:phieps}). This implies the convergence of the finite-dimensional  distributions of 
 $(\log S^{\epsilon},\bar V^{\epsilon})$ as detailed in Section~\ref{ss:rev_hreston_finite_dimensional_laws}. Finally, the weak convergence of $\bar V^{\epsilon}$ on $(D, SM_1)$ is proved in Section~\ref{ss:weak_convergence_V_bar}.
\end{proof}

\begin{remark}[An interesting interpretation]
The  interpretation of the convergence results becomes even more interesting when combined with Section~\ref{hyperRoughToReversionary}. In Section~\ref{hyperRoughToReversionary}, for $H>-1/2$,  the reversionary Heston model $(\log S^\epsilon, \bar V^{\epsilon})$ is constructed as a proxy of rough and hyper-rough Heston models.  Theorem~\ref{T:charfunlimit} shows that the limiting regime for $H>-1/2$ is a (degenerate) Black-Scholes regime, cf.Remark \ref{R:Norma}, whereas,  for $H\leq -1/2$ one obtains the convergence of the reversionary regimes towards (non-degenerate) jump processes with distinct regimes between $H=-1/2$ and $H<-1/2$, see Corollary \ref{Cor:revHestonMarginals} below. This suggests that jump models and (hyper-)rough volatility models are complementary, and do not overlap. For $H>-1/2$ the reversionary model can be interpreted as an engineering proxy of rough and hyper-rough volatility models, while for $H\leq -1/2$ it can be interpreted as {an approximation} of jump models for small enough $\epsilon$.  Asymptotically, jump models actually start at $H=-1/2$ (and below) in the Reversionary Heston model, the very first value of the Hurst index for which hyper-rough volatility models can no-longer be defined.
\end{remark}

\noindent In Figures \ref{F:mechkovimpli} and \ref{F:mechkovskew}, we plot respectively the convergence of the smiles and the skew of the reversionary Heston model $(\log S^{\epsilon}, \bar V^{\epsilon})$ for the case $H=-1/2$ towards the Normal Inverse Gaussian model. The volatility surface is obtained by applying Fourier inversion formulas on  the corresponding characteristic functions.  Similar to Figures~\ref{fig:smilesRoughHestonRevHeston} and \ref{fig:ATMskewsRoughHestonRevHeston}, the graphs show that the fast parametrizations introduced in the Heston model  are able to reproduce very steep skews for the implied volatility surface. 

\begin{figure}[H]
\begin{center}
\includegraphics[width=6 in,angle=0]{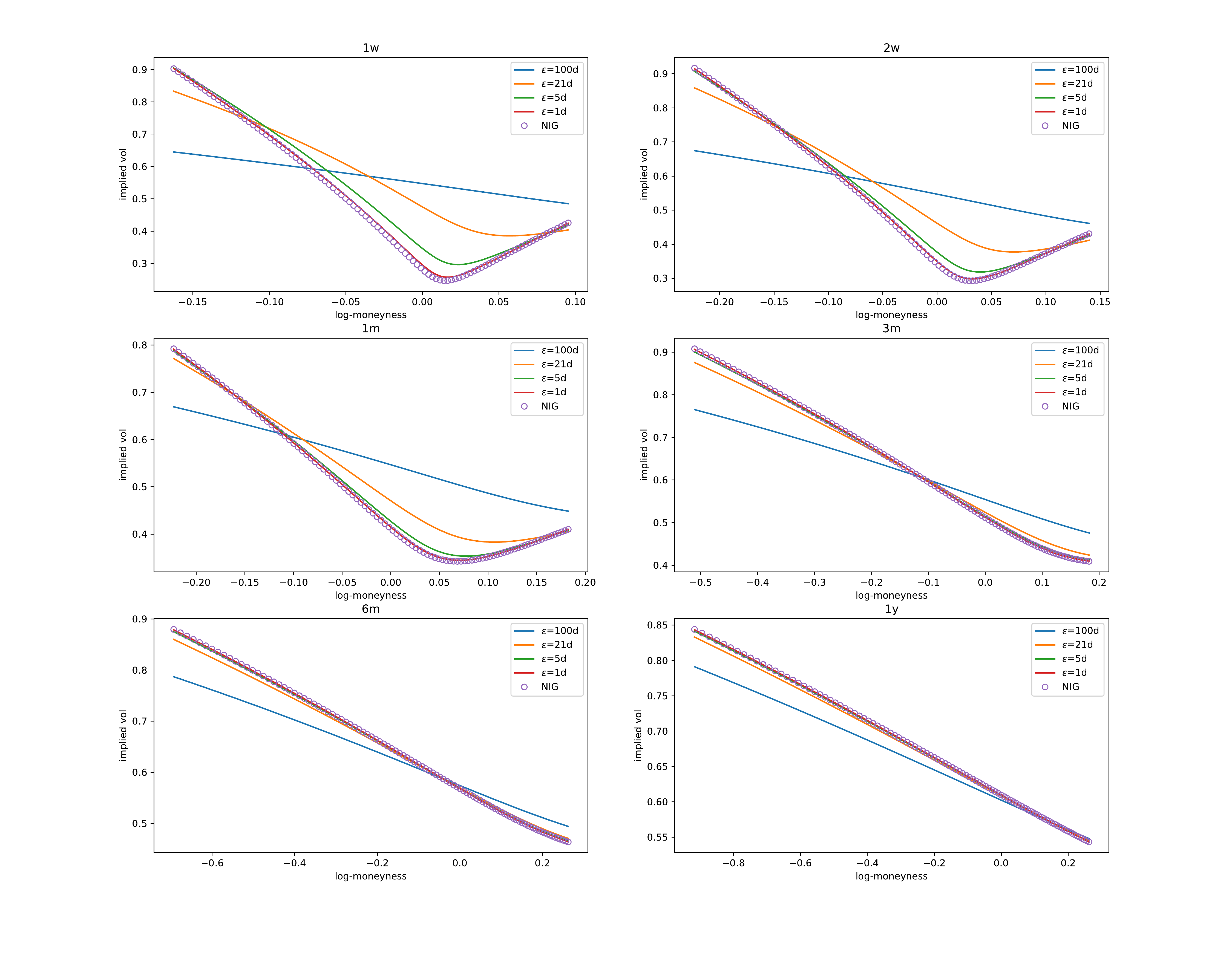}
\vspace*{-0.1in}
\caption{ Smiles of reversionary Heston and its asymptotic NIG law in the regime $H=-0.5$ for different maturities from one week to one year. Parameters are: $S_0=100, \; \rho=-0.7, \; \theta=0.3, \; \xi=0.8, \; V_0=0.3$, and the reversionary time-scale is varied from one hundred days to one day.}
\label{F:mechkovimpli}
\end{center}
\end{figure}

\begin{figure}[H]
\begin{center}
\includegraphics[scale=0.4]{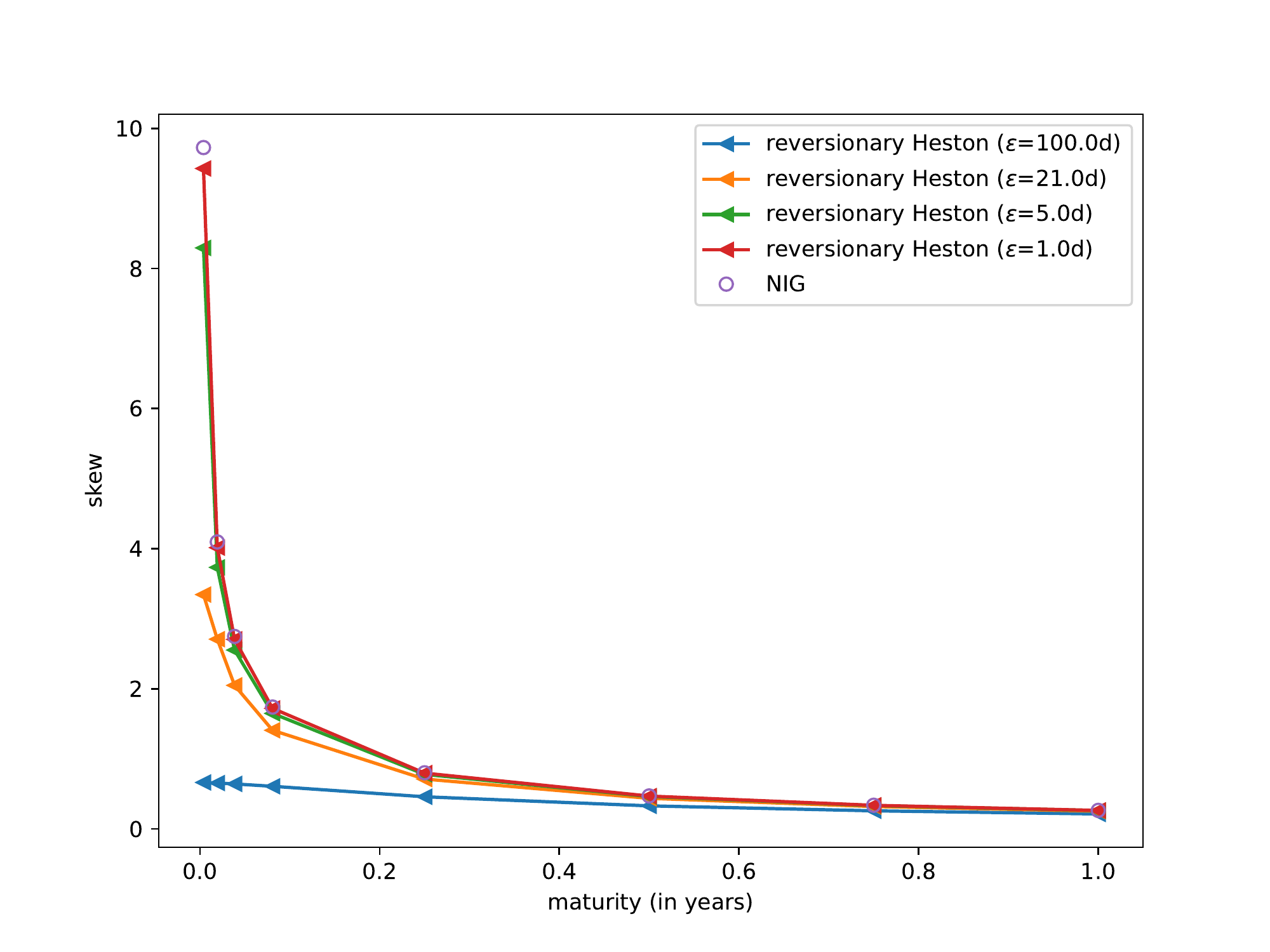}
\caption{At-the-money skew of reversionary Heston and its asymptotic NIG law in the regime $H=-0.5$. Parameters are: $S_0=100, \; \rho=-0.7, \; \theta=0.3, \; \xi=0.8, \; V_0=0.3$ and the reversionary time-scale is varied from one hundred days to one day.}\label{F:mechkovskew}
\end{center}
\end{figure}

\noindent In the case $\left(f,g\right) = \left( iu, iv \right)$, with $u, v \in \mathbb{R}$, the asymptotic marginals of reversionary Heston expressed in Corollary \ref{Cor:revHestonMarginals} below are obtained as a direct consequence of the convergence Theorem \ref{T:charfunlimit}. 

\begin{corollary}[Explicit asymptotic marginals] \label{Cor:revHestonMarginals}
{Based on the value of $H \in \mathbb{R}$, the pair of normalized log price and integrated variance} $\left(\log \frac{S^\epsilon_T}{S_0}, \bar{V}^\epsilon_T \right)$ has distinct asymptotic marginals as the reversionary time-scale $\epsilon$ goes to zero given by:
\begin{enumerate}
    \item $H>-1/2$, i.e.~Black Scholes-type asymptotic regime (BS regime)
    
    \begin{equation} \label{eq:CFBSregime}
    \mathbb E \left[ iu\log \frac{ S^\epsilon_T}{S_0} + iv\bar{V}^\epsilon_T \right] \underset{\epsilon \to 0}{\longrightarrow} \exp \left\{ - \frac{V_0}{2} \left( u^2 - 2i \left( v - \frac{u}{2} \right) \right) T \right\}.
    \end{equation}
    
    \item $H=-1/2$, i.e.~Normal Inverse Gaussian-type asymptotic regime (NIG regime)
    
    \begin{equation} \label{eq:CFNIGregime}
    \mathbb E \left[ iu\log  \frac{ S^\epsilon_T}{S_0}+ iv\bar{V}^\epsilon_T \right] \underset{\epsilon \to 0}{\longrightarrow} \exp \left\{ \left( \theta + V_0 \right) \xi^{-2} \left( 1 - i \rho \xi u -\sqrt{\left( 1 - i \rho \xi u \right)^2 - 2 \xi^2 \left( i v - \frac{u^2 + iu}{2} \right)} \right) T \right\}.
    \end{equation}
    
    \item $H<-1/2$, i.e.~Normal Lévy-type asymptotic regime (NL regime)
    
    \begin{equation} \label{eq:CFNormalLevy}
   \mathbb E \left[ iu\log \frac{ S^\epsilon_T}{S_0} + iv\bar{V}^\epsilon_T \right] \underset{\epsilon \to 0}{\longrightarrow} \exp \left\{ - \theta \xi^{-1} \left(i \rho u + \sqrt{\left( 1 - \rho^2 \right) u^2 - 2i \left( v - \frac{u}{2}\right)} \right) T \right\}.
    \end{equation}
    
\end{enumerate}
\end{corollary}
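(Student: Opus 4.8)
The plan is to read off the three formulas as the specialization of Theorem~\ref{T:charfunlimit} to the constant choice $f(t)\equiv iu$, $g(t)\equiv iv$ with $u,v\in\R$, so the whole argument reduces to checking admissibility, collapsing the telescoping integrals, and simplifying $\phi_0(T)$ in each regime. First I would verify that $f\equiv iu$, $g\equiv iv$ satisfies the hypotheses of Theorem~\ref{T:charfunlimit}: obviously $\Re f=\Re g=0$, and since $f$ and $g$ are constant the function $\psi_0$ of \eqref{eq:psi00} is constant in $t$ in each of the three regimes, hence of bounded variation on $[0,T]$. Next, for constant $f,g$ the stochastic integrals in \eqref{eq:convchar} telescope: $\int_0^T f(T-s)\,d\log S^\epsilon_s=iu\log(S^\epsilon_T/S_0)$ since $S^\epsilon_0=S_0$, and $\int_0^T g(T-s)\,d\bar V^\epsilon_s=iv\,\bar V^\epsilon_T$ since $\bar V^\epsilon_0=0$. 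Hence the left-hand side of \eqref{eq:convchar} is precisely the joint characteristic function $\mathbb E[\exp(iu\log(S^\epsilon_T/S_0)+iv\bar V^\epsilon_T)]$, whose $\epsilon\to 0$ limit Theorem~\ref{T:charfunlimit} already identifies as $\exp(\phi_0(T))$.

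It then remains to evaluate $\phi_0(T)$ from \eqref{eq:phi1}. With $f\equiv iu$ one has $h(s)=iv+\tfrac{(iu)^2-iu}{2}=iv-\tfrac{u^2+iu}{2}$, a constant, so pulling the constants out of the time integrals gives $\phi_0(T)=V_0T\,h$ for $H>-1/2$, $\phi_0(T)=(\theta+V_0)\xi^{-2}T\bigl(1-i\rho\xi u-\sqrt{(1-i\rho\xi u)^2-2\xi^2h}\bigr)$ for $H=-1/2$, and $\phi_0(T)=-\theta\xi^{-1}T\bigl(i\rho u+\sqrt{-\rho^2u^2-2h}\bigr)$ for $H<-1/2$. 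The remaining work is purely algebraic: substituting $h$ and using $-\rho^2u^2-2h=(1-\rho^2)u^2-2i(v-\tfrac u2)$ together with $V_0h=-\tfrac{V_0}{2}\bigl(u^2-2i(v-\tfrac u2)\bigr)$ turns these three expressions into the right-hand sides of \eqref{eq:CFBSregime}, \eqref{eq:CFNIGregime}, and \eqref{eq:CFNormalLevy} respectively.

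Since Theorem~\ref{T:charfunlimit} is taken as given, there is no genuine obstacle; the only delicate point is the branch of the complex square root, which in \eqref{eq:phi1} is the principal one. One must make sure it coincides with the branch selected in the limit, i.e.\ the one for which $\Re(\psi_0)\le 0$ in \eqref{eq:psi00}; this is ensured by Lemma~\ref{L:lemmaExistenceRootWithNegRealPart}, and for the present choice one checks directly that $\Re h=-u^2/2\le 0$, which is exactly the sign condition \eqref{eq:condfg} guaranteeing the principal root is the right one. As a consistency check, in the regimes $H\le -1/2$ one can also confirm that the exponent obtained equals the NIG-IG Lévy exponent $\eta$ of Definition~\ref{D:NIGIG} evaluated at the parameters \eqref{eq:paramlimit}, in agreement with the second assertion of Theorem~\ref{T:charfunlimit}.
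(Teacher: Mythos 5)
Your proof is correct and follows precisely the route the paper intends: the paper declares the corollary a ``direct consequence'' of Theorem~\ref{T:charfunlimit}, and you carry this out by specializing $(f,g)\equiv(iu,iv)$, verifying the bounded-variation hypothesis on $\psi_0$, telescoping the integrals, and simplifying $\phi_0(T)$ in each regime. The algebraic reductions and the branch-of-square-root check are all accurate.
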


\noindent In Figure \ref{F:cf_convergence_all_regimes}, we illustrate numerically the convergence of the characteristic function in all three regimes.

\begin{figure}[H]
\begin{center}
\includegraphics[width=6 in,angle=0]{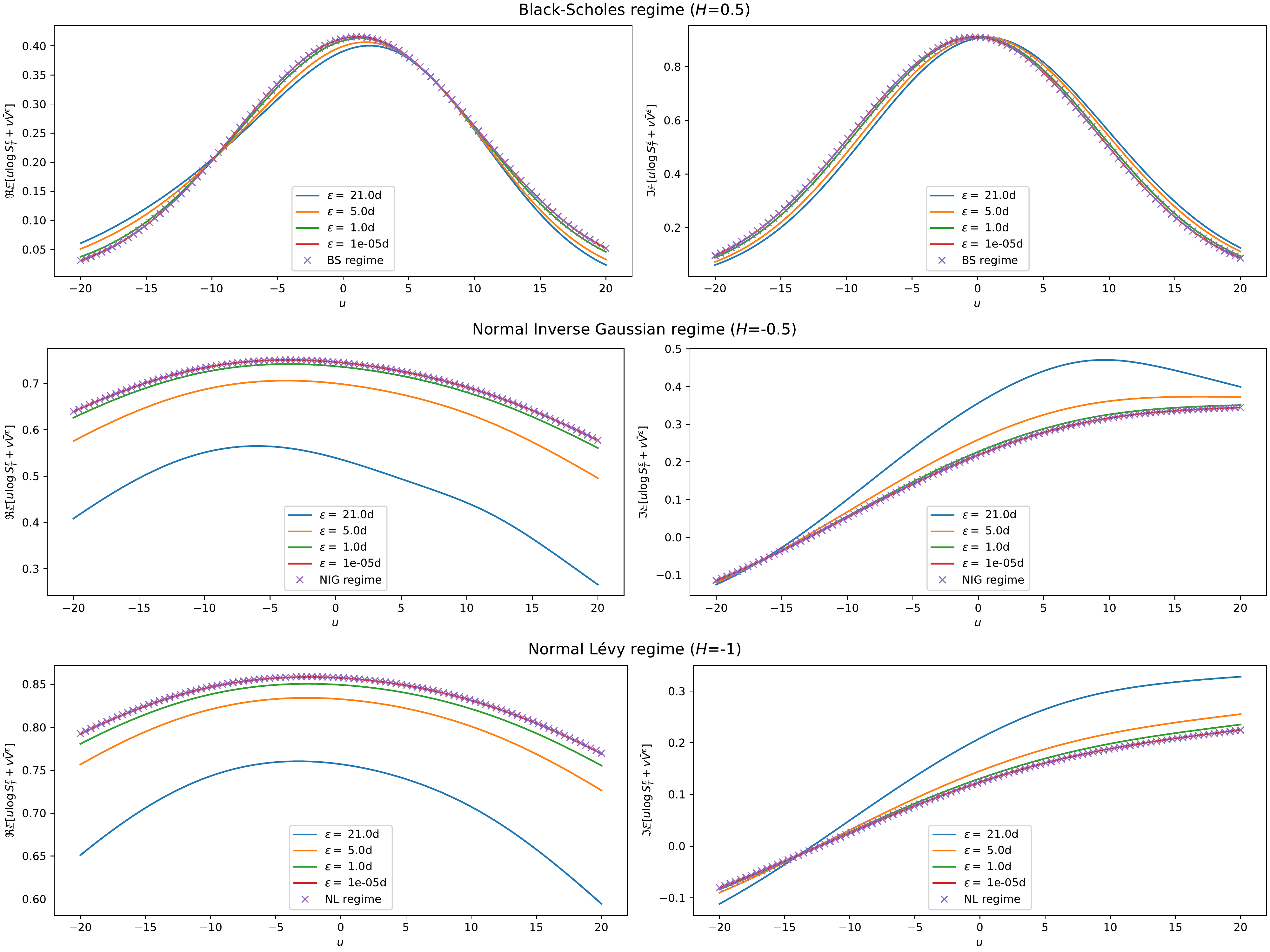}
\vspace*{-0.1in}
\caption{Convergence of reversionary Heston's joint characteristic function \eqref{eq:CHreversionaryHeston} in all three regimes whose asymptotic joint characteristic functions are given respectively in \eqref{eq:CFBSregime} for the Black-Scholes regime, in \eqref{eq:CFNIGregime} for the Normal Inverse Gaussian regime and in \eqref{eq:CFNormalLevy} for the Normal-Lévy regime. $v$ is fixed at $100$ and model parameters are $\rho=-0.7, \; \theta=0.3, \; \xi=0.8, \; V_0=0.3$ and the reversionary time-scale is varied from one $21$ days to $1e-5$ day.}
\label{F:cf_convergence_all_regimes}
\end{center}
\end{figure}

\noindent The rest of the section is dedicated to the proof of Theorem~\ref{T:charfunlimit}. 

\subsection{Convergence of the joint characteristic functional} \label{ss:proof_convergence_functional}
    
In this section we prove the convergence of the characteristic functional of $(\log {\frac{S^{\epsilon}}{S_0}}, \bar V^{\epsilon})$ as $\epsilon$ goes to $0$ stated in Theorem~\ref{T:charfunlimit}. For this, we fix 
$f,g:[0,T]\to \mathbb C$ bounded and measurable such that $\Re f =\Re g = 0$. We note that \eqref{eq:condfg} is trivially satisfied so that an application of Theorem~\ref{T:charfuneps}, with $t=0$, yields that 
\begin{equation}\label{eq:chart0}
          \mathbb{E} \left[ \exp\left(\int_0^T f(T-s) d\log S^\epsilon_s + \int_0^T g(T-s) d \bar{V}_s^\epsilon\right) \right] = \exp\left( \phi_{\epsilon}(T) + \epsilon^{\frac 1 2  - H }\psi_{\epsilon}(T) V_0 \right),
\end{equation}
    where $(\phi_{\epsilon},\psi_{\epsilon})$ solve \eqref{eq:Ric1}-\eqref{eq:Ric2}.
  We start by showing that the second term in the exponential  $\epsilon^{1/2-H}\psi_{\epsilon}$ goes to $0$ for any value of $H\in \R$ in the following lemma. 
  \begin{lemma} \label{L:scaledRiccatiSolutionGoesToZero}
For $\epsilon>0$, let $\psi_\epsilon$ be a solution  the time-dependent Riccati ODE \eqref{eq:Ric2} such that $\Re(\psi^{\epsilon})\leq 0$ with   $f, g:[0,T]\to \mathbb C$  bounded and measurable functions such that $\Re f = \Re g = 0$. Then, 
\begin{equation}
        \left| \psi_\epsilon(t) \right| \le C \epsilon^{H+\frac{1}{2}} \left( 1 - e^{- \epsilon^{-1}t} \right), \quad t\leq T, 
\end{equation}
    for some constant $C$ independent of $\epsilon$. In particular, we have the uniform convergence
    \begin{equation}
    \lim_{\epsilon \to 0} \sup_{t\leq T} \epsilon^{1/2-H}|\psi_\epsilon(t)| = 0, \quad H \in \R.
    \end{equation}
\end{lemma}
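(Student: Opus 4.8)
The plan is to obtain the estimate directly from the \emph{nonlinear} variation-of-constants representation of $\psi_\epsilon$, exploiting the sign conditions $\Re(\psi_\epsilon)\le 0$ and $\Re f=0$; no Gr\"onwall bootstrap on the quadratic term will be needed.

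First I would recall the coefficients $a_\epsilon,b_\epsilon,c_\epsilon$ attached to \eqref{eq:Ric2} in \eqref{eq:abc} and, exactly as in the proof of Theorem~\ref{existenceTheorem}, rewrite \eqref{eq:Ric2} as a linear ODE with the integrable coefficient $u\mapsto a_\epsilon\psi_\epsilon(u)+b_\epsilon(u)$ (integrable since $\psi_\epsilon$, being the global extended solution from Theorem~\ref{existenceTheorem}, is continuous hence bounded on $[0,T]$) and forcing $c_\epsilon$. Since $\psi_\epsilon(0)=0$, the variation-of-constants formula gives
\[
\psi_\epsilon(t)=\int_0^t \exp\!\Big(\int_s^t\big(a_\epsilon\psi_\epsilon(u)+b_\epsilon(u)\big)\,du\Big)\,c_\epsilon(s)\,ds,\qquad t\le T .
\]

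Next I would take moduli and use the hypotheses. Since $a_\epsilon>0$ and $\Re(\psi_\epsilon(u))\le 0$ we have $\Re\big(a_\epsilon\psi_\epsilon(u)\big)\le 0$; and since $\Re f=0$, \eqref{eq:abc} gives $\Re(b_\epsilon(u))=\epsilon^{H-1/2}\rho\xi\,\Re f(u)-\epsilon^{-1}=-\epsilon^{-1}$. Hence $\Re\big(a_\epsilon\psi_\epsilon(u)+b_\epsilon(u)\big)\le -\epsilon^{-1}$ for all $u$, so the integrating factor is bounded in modulus by $e^{-\epsilon^{-1}(t-s)}$. Combining this with the crude bound $|c_\epsilon(s)|\le \epsilon^{H-1/2}C$, where $C:=\|g\|_\infty+\tfrac12(\|f\|_\infty^2+\|f\|_\infty)$ is independent of $\epsilon$, yields
\[
|\psi_\epsilon(t)|\le \epsilon^{H-1/2}C\int_0^t e^{-\epsilon^{-1}(t-s)}\,ds=\epsilon^{H-1/2}C\cdot\epsilon\big(1-e^{-\epsilon^{-1}t}\big)=C\,\epsilon^{H+\frac12}\big(1-e^{-\epsilon^{-1}t}\big),
\]
which is the asserted estimate. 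The ``in particular'' statement follows immediately: multiplying by $\epsilon^{1/2-H}$ gives $\epsilon^{1/2-H}|\psi_\epsilon(t)|\le C\epsilon(1-e^{-\epsilon^{-1}t})\le C\epsilon$ for all $t\le T$ and all $H\in\R$, hence $\sup_{t\le T}\epsilon^{1/2-H}|\psi_\epsilon(t)|\le C\epsilon\to 0$.

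There is essentially no genuine obstacle here; the only point to be careful about is to resist starting from the ``Dirac-kernel'' form \eqref{eq:varconstantpsi}, which would force one to control $|F(s,\psi_\epsilon(s))|$, hence $|\psi_\epsilon(s)|^2$, reintroducing the quadratic nonlinearity and an $\epsilon$-dependent Gr\"onwall constant --- problematic precisely in the regimes $H\le -1/2$ where $\epsilon^{H+1/2}$ does not vanish. Keeping $a_\epsilon\psi_\epsilon$ \emph{inside} the exponent means only its real part matters, and that real part is nonpositive; together with $\Re b_\epsilon=-\epsilon^{-1}$ (this is where $\Re f=0$ is used) the bound becomes a one-line estimate. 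The extra factor $\epsilon$ produced by integrating $e^{-\epsilon^{-1}(t-s)}$ against the $\epsilon^{H-1/2}$ sitting in $c_\epsilon$ is exactly what upgrades the naive $O(\epsilon^{H-1/2})$ bound to the sharp $O(\epsilon^{H+1/2})$, which is what the subsequent convergence analysis in Section~\ref{ss:proof_convergence_functional} requires.
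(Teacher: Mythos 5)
Your proof is correct and takes essentially the same route as the paper: variation of constants applied to \eqref{eq:reformulation_ODE_psi_eps}, the bound $|c_\epsilon|\le C\epsilon^{H-1/2}$, the sign conditions $\Re(a_\epsilon\psi_\epsilon)\le 0$ and $\Re b_\epsilon=-\epsilon^{-1}$ (the latter from $\Re f=0$), and explicit integration of $e^{-\epsilon^{-1}(t-s)}$ to gain the extra factor $\epsilon$. The concluding remark about avoiding the Dirac-kernel form \eqref{eq:varconstantpsi} is a fair observation but not something the paper needs to address, since it never attempts that route for this lemma.
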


\begin{proof} 
 The variation of constants applied to the differential equation \eqref{eq:reformulation_ODE_psi_eps} yields:
\begin{equation} \notag
    \psi_\epsilon(t) = \int_ 0^t c_\epsilon(s)e^{\int_s^t \left( a_\epsilon \psi_\epsilon(u) + b_\epsilon(u) \right) du} ds,
\end{equation}
with $a_{\epsilon}, b_{\epsilon}, c_{\epsilon}$ defined as in \eqref{eq:abc}.  Given that $f$ and $g$ are both bounded on $[0,T]$, we fix $C \ge 0$ such that $\left| c_\epsilon(u) \right| \le C \epsilon^{H-\frac{1}{2}}, u \le T$. Note that $\Re b_\epsilon = - \epsilon^{-1}$, having also $ \Re\left( \psi_\epsilon\right) \le 0$ and $a_\epsilon > 0$, we get consequently by linearity when taking the real part

\begin{align}
    \left| \psi_\epsilon(t) \right| & \le C \epsilon^{H-\frac{1}{2}} \int_0^t e^{a_{\epsilon} \int_s^t \Re\left( \psi_\epsilon (u) \right) du} e^{\Re b_\epsilon (t-s)} ds\\
    & \le C \epsilon^{H-\frac{1}{2}} \int_0^t e^{- \epsilon^{-1}(t-s)} ds
\end{align}
which yields the desired upper bound on the solution when integrating explicitly the exponential. The convergence result follows immediately.
\end{proof}

\noindent The first term $\phi_{\epsilon}$, however,  yields  different limits based on the value of $H$. Consequently, we will study in Lemma~\ref{L:phieps} the convergence of the following quantity
\begin{equation*}
    \phi_\epsilon (t) = \int_0^t \left( \theta + \epsilon^{-H-\frac{1}{2}} V_0 \right) \psi_\epsilon(s) ds,
\end{equation*}
 for different regimes of $H$. 

 \begin{lemma} \label{L:phieps}
    We have the convergence
 \begin{align}
     \lim_{\epsilon\to 0} \phi_{\epsilon}(T) = \phi_0(T),
 \end{align}
 where  $\phi_0(T)$ is given by \eqref{eq:phi1}.
 \end{lemma}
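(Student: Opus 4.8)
The plan is to analyze the quantity $\phi_\epsilon(T) = \int_0^T (\theta + \epsilon^{-H-1/2} V_0)\, \psi_\epsilon(s)\, ds$ by splitting the prefactor and using the variation-of-constants representation \eqref{eq:varconstantpsi} of $\psi_\epsilon$, namely $\psi_\epsilon(t) = \epsilon^{H+1/2}\int_0^t K_\epsilon(t-s) F(s,\psi_\epsilon(s))\, ds$ with $K_\epsilon$ the approximate-identity kernel \eqref{eq:kernelexp}. Multiplying by $\epsilon^{-H-1/2}$ cancels the prefactor exactly, so that $\epsilon^{-H-1/2}\psi_\epsilon(t) = \int_0^t K_\epsilon(t-s) F(s,\psi_\epsilon(s))\, ds$, which, since $\int_0^t K_\epsilon(t-s)\,ds = 1 - e^{-t/\epsilon} \to 1$ and $K_\epsilon$ concentrates near $0$, should converge to $F(t,\psi_0(t))$ pointwise once we know $\psi_\epsilon \to \psi_0$; the remaining $\theta\,\psi_\epsilon$ term contributes only in the $H\le -1/2$ cases where $\psi_\epsilon \to \psi_0 \neq 0$. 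First I would establish the pointwise (or $L^1$) convergence $\psi_\epsilon \to \psi_0$ on $[0,T]$ where $\psi_0$ is given by \eqref{eq:psi00}: for $H>-1/2$ this is immediate from Lemma~\ref{L:scaledRiccatiSolutionGoesToZero} since $\psi_\epsilon = \epsilon^{1/2-H}\cdot(\epsilon^{H-1/2}\psi_\epsilon)$ and $|\psi_\epsilon| \le C\epsilon^{H+1/2}\to 0$; for $H=-1/2$ and $H<-1/2$ one must show $\psi_\epsilon$ converges to the root of the relevant algebraic equation in \eqref{eq:psi0} with non-positive real part, whose explicit form is recorded in \eqref{eq:psi00} and justified by Lemma~\ref{L:lemmaExistenceRootWithNegRealPart}.

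Concretely, for $H>-1/2$ the bound $|\psi_\epsilon(s)| \le C\epsilon^{H+1/2}(1-e^{-s/\epsilon})$ from Lemma~\ref{L:scaledRiccatiSolutionGoesToZero} gives
\[
|\phi_\epsilon(T)| \le \int_0^T |\theta + \epsilon^{-H-1/2}V_0|\,|\psi_\epsilon(s)|\,ds,
\]
but this naive bound is not quite enough for the $H>-1/2$ limit to be $V_0\int_0^T h(s)\,ds$ rather than $0$; instead I would use the identity $\epsilon^{-H-1/2}\psi_\epsilon(t) = \int_0^t K_\epsilon(t-s)F(s,\psi_\epsilon(s))\,ds$ and note that as $\epsilon\to 0$ with $\psi_\epsilon\to 0$, $F(s,\psi_\epsilon(s)) \to F(s,0) = g(s) + \tfrac{f^2(s)-f(s)}{2} = h(s)$, so $V_0\,\epsilon^{-H-1/2}\psi_\epsilon(t) \to V_0\, h(t)$ and the $\theta\psi_\epsilon$ term vanishes; integrating and passing the limit inside (justified by a uniform-boundedness/dominated-convergence argument using $\sup_\epsilon \sup_{t\le T}|F(t,\psi_\epsilon(t))| < \infty$, which follows from $\Re\psi_\epsilon \le 0$ and boundedness of $f,g$) yields $\phi_0(T) = V_0\int_0^T h(s)\,ds$. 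For $H=-1/2$ the prefactor is $\theta + \epsilon^{0}V_0 = \theta + V_0$, so $\phi_\epsilon(T) = (\theta+V_0)\int_0^T \psi_\epsilon(s)\,ds \to (\theta+V_0)\int_0^T \psi_0(s)\,ds$ and substituting the explicit $\psi_0$ from the second line of \eqref{eq:psi00} gives the stated formula. For $H<-1/2$ the term $\epsilon^{-H-1/2}V_0\,\psi_\epsilon$ now has $\epsilon^{-H-1/2}\to\infty$ but $\psi_\epsilon\to\psi_0$ with $0 = F(t,\psi_0(t))$, so one needs the finer statement that $\epsilon^{-H-1/2}\psi_\epsilon \to 0$ still — more precisely, $\epsilon^{-H-1/2}\psi_\epsilon(t) = \epsilon\cdot\epsilon^{-H-3/2}\psi_\epsilon(t)$ and the variation-of-constants expansion forces the singular part to cancel, leaving only the $\theta\psi_\epsilon$ contribution $\theta\int_0^T\psi_0(s)\,ds$, with $\psi_0$ the third line of \eqref{eq:psi00}, giving $-\theta\xi^{-1}\int_0^T(\rho f(s) + \sqrt{\rho^2 f^2(s) - 2h(s)})\,ds$.

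The main obstacle is the careful bookkeeping of the two competing rates in the product $(\theta + \epsilon^{-H-1/2}V_0)\psi_\epsilon$ when $H\le -1/2$: one must control $\psi_\epsilon - \psi_0$ quantitatively (at rate $o(\epsilon^{H+1/2})$ for $H=-1/2$, and with enough precision that $\epsilon^{-H-1/2}(\psi_\epsilon - \psi_0)$ and $\epsilon^{-H-1/2}\psi_0$ are both handled for $H<-1/2$, using $F(s,\psi_0(s))=0$). I expect this to require a contraction/fixed-point or Gronwall estimate on the integral equation $\epsilon^{-H-1/2}\psi_\epsilon = K_\epsilon * F(\cdot,\psi_\epsilon)$, comparing $\psi_\epsilon$ to $\psi_0$ via $\psi_\epsilon(t) - \epsilon^{H+1/2}\int_0^t K_\epsilon(t-s)F(s,\psi_0(s))\,ds$ and exploiting the local Lipschitz property of $F$ together with the a priori bound $\Re\psi_\epsilon\le 0$ to close the estimate; the boundary layer near $t=0$ (where $1-e^{-t/\epsilon}$ is not yet close to $1$) contributes negligibly to the time integral since it has length $O(\epsilon)$, and the assumption that $\psi_0$ has bounded variation ensures the limiting integrals are well-defined. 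Passing to the limit under the integral sign is then justified by dominated convergence with the $\epsilon$-uniform bound on $|F(s,\psi_\epsilon(s))|$.
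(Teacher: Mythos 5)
Your treatment of the case $H>-1/2$ is correct and matches the paper: integrate the variation-of-constants identity $\epsilon^{-H-1/2}\psi_\epsilon(t)=\int_0^t K_\epsilon(t-s)F(s,\psi_\epsilon(s))\,ds$ against $dt$, use Fubini, note $\psi_\epsilon\to 0$ uniformly (Lemma~\ref{L:scaledRiccatiSolutionGoesToZero}), and conclude by dominated convergence.

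For $H\le -1/2$ there are two genuine problems. First, you write that for $H<-1/2$ ``$\epsilon^{-H-1/2}\to\infty$''; in fact $-H-\tfrac12>0$ when $H<-\tfrac12$, so $\epsilon^{-H-1/2}\to 0$. There is no singular part to cancel. The real content of the $H<-1/2$ case is proving $\int_0^T\psi_\epsilon\,dt\to\int_0^T\psi_0\,dt$ (and once that is done, the $V_0$-term disappears for free because the prefactor $\epsilon^{-H-1/2}V_0\to 0$). Your reasoning about forcing ``the singular part to cancel'' is therefore misdirected; as written, with $\psi_\epsilon\to\psi_0\ne 0$ and $\epsilon^{-H-1/2}\to\infty$, the product $\epsilon^{-H-1/2}\psi_\epsilon$ would diverge, so you would have contradicted yourself had the sign been as you state.

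Second, for both $H=-1/2$ and $H<-1/2$ you gesture at ``a contraction/fixed-point or Gronwall estimate'' to control $\psi_\epsilon-\psi_0$, but this is underspecified and not what the paper does. Note also that the correct statement is convergence of $\int_0^T\psi_\epsilon$, not pointwise $\psi_\epsilon\to\psi_0$ (indeed $\psi_\epsilon(0)=0$ while $\psi_0(0)$ is generically nonzero). The paper sets $\Delta_\epsilon(t):=(\psi_\epsilon(t)-\psi_0(t))e^{t/\epsilon}$, derives an ODE for $\Delta_\epsilon$ from the algebraic identity $\psi_0=F(\cdot,\psi_0)$ (resp.~$0=F(\cdot,\psi_0)$), and applies variation of constants with a Riemann--Stieltjes integral against $d\psi_0$, which is precisely why the hypothesis ``$\psi_0$ has bounded variation'' is needed. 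For $H<-1/2$ there is an additional term $\bold{III}_\epsilon$ involving $\epsilon^{-1}\psi_0$, and controlling it requires both the finer bound $\Re\beta_\epsilon\le\tfrac{\xi^2}{2}\Re\psi_0\le 0$ and the nontrivial Lemma~\ref{L:boundedness_gamma_ratio} asserting that $|\psi_0|/(-\Re\psi_0)$ is bounded away from $\infty$ on the set where $(f,g)\ne(0,0)$. None of these ingredients appear in your sketch, so the $H\le -1/2$ cases remain essentially unproved in the proposal.
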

 
 \begin{proof}
\textbf{Case $H>-1/2$.}
    In this case, the solution $\psi_\epsilon$ converges uniformly to zero from the upper bound given in Lemma \ref{L:scaledRiccatiSolutionGoesToZero} which, combined with the expression of $F$ in \eqref{eq:varconstantF},  yields 
\begin{align}
    \lim_{\epsilon\to 0} F(s,\psi_{\epsilon}(s))  = F(s,0) = g(s) + \frac{f^2(s) - f(s)}{2}, \quad s\leq T. 
\end{align}    
    Furthermore, integrating the variation of constants expression \eqref{eq:varconstantpsi} leads to
    \begin{align*}
        \int_0^t \epsilon^{-H-1/2} \psi_\epsilon(u) du & = \int_0^t \left\{ \int_0^u K_\epsilon(u-s) F(s,\psi_{\epsilon}(s)) ds \right\} du\\
        & = \int_0^t \left\{ \int_s^t K_\epsilon(u-s) du \right\} F(s,\psi_{\epsilon}(s)) ds\\
        & = \int_0^t \left( 1 - e^{-\epsilon^{-1}(t-s)} \right) F(s,\psi_{\epsilon}(s)) ds,
    \end{align*}
where we used Fubini for the second equality as the integrated quantity is bounded and measurable. Now, given the function $s \mapsto \left( 1 - e^{-\epsilon^{-1}(t-s)} \right)F(s,\psi_\epsilon(s))$ is uniformly bounded in $\epsilon$ by a constant on $[0,T]$, and that it converges pointwise to $s \mapsto F(0,s)$, then an application of  Lebesgue's Dominated Convergence Theorem yields
\begin{equation*}
    \int_0^t \epsilon^{-H-1/2} \psi_\epsilon(u) du \underset{\epsilon \to 0}{\longrightarrow} \int_0^t \left( g(s) + \frac{f^2(s) - f(s)}{2} \right)ds,
\end{equation*}
hence the resulting convergence 
\begin{equation*}
    \phi_\epsilon (t) \underset{\epsilon \to 0}{\longrightarrow} V_0 \int_0^t \left( g(s) + \frac{f^2(s) - f(s)}{2} \right) ds.
\end{equation*}

\noindent \textbf{Case $H=-1/2$.}
Now fix $\epsilon>0$ and using the second equation in \eqref{eq:psi0}, i.e.~$- F(t,\psi_{0}(t)) + \psi_0(t) = 0$, observe that 
     \begin{align}
        \psi_\epsilon'(t) & = - \epsilon^{-1} \psi_\epsilon(t) + \epsilon^{-1} \left( F(t,\psi_{\epsilon}(t)) - F(t,\psi_{0}(t)) + \psi_0(t) \right)\\
         &= - \epsilon^{-1} \left( \psi_\epsilon(t) - \psi_0(t) \right) + \epsilon^{-1} \beta_\epsilon(t) \left( \psi_{\epsilon}(t) - \psi_{0}(t) \right), \label{eq:psi_diff_expression}\\
         \text{with }\beta_\epsilon(t) & := \frac{\xi^2}{2} \left( \psi_{\epsilon}(t) + \psi_{0}(t) \right) + \rho \xi f(T-t). \label{eq:betavar}
     \end{align}

     \noindent Since $\psi_0$ has bounded variations by assumption, the complex-valued Riemann-Stieltjes integral on continuous functions $h$ against $\psi_0$ is well-defined, see Theorem A.1 from \cite{Montgomery_2007}. Define 
        $\Delta_{\epsilon}(t):=(\psi_{\epsilon}(t)-\psi_0(t))e^{t/\epsilon}$. Then, it follows that 
    \begin{align}
        d\Delta_{\epsilon}(t) &= (\psi'_{\epsilon}(t)dt-d\psi_0(t))e^{t/\epsilon} +  \epsilon^{-1}(\psi_{\epsilon}(t)-\psi_0(t))e^{t/\epsilon}dt\\
        &=  \epsilon^{-1} \beta_\epsilon(t) \Delta_{\epsilon}(t) - e^{t/\epsilon}d\psi_0(t),
    \end{align}
    where we used \eqref{eq:psi_diff_expression} to get the second equality, and applying the variation of constants formula leads to 
    \begin{align*}
        \Delta_{\epsilon}(t) = e^{\epsilon^{-1} \int_0^t\beta_\epsilon(r)dr }  
        \Delta_{\epsilon}(0)  - \int_0^t e^{\epsilon^{-1} \int_s^t\beta_\epsilon(r)dr }e^{\epsilon^{-1}s} d\psi_0(s),
    \end{align*}
    so that, recalling that $\psi_{\epsilon}(0)=0$,
    \begin{align}
        \psi_{\epsilon}(t) - \psi_0(t) &=  -e^{\epsilon^{-1} \int_0^t\beta_\epsilon(r)dr }e^{-\epsilon^{-1}t}  
        \psi_{0}(0)  - \int_0^t e^{\epsilon^{-1} \int_s^t\beta_\epsilon(r)dr}e^{-\epsilon^{-1}(t-s)} d\psi_0(s) \\
        & =: \bold I_{\epsilon}(t) + \bold {II}_{\epsilon}(t) .
     \end{align}
We now prove successively that $|\int_0^T \bold I_{\epsilon}(t)dt| \to 0$ and $|\int_0^T \bold {II}_{\epsilon}(t)dt| \to 0$.

\begin{itemize}
    \item \noindent Given that $\Re \beta_{\epsilon} \leq 0$, then

\begin{equation}
    \left| e^{\epsilon^{H-1/2} \int_0^t\beta_\epsilon(r)dr }e^{-\epsilon^{-1}t}  
 \psi_{0}(0) \right| \leq e^{-\epsilon^{-1}t} \left| \psi_{0}(0) \right| \to 0, \quad t \in (0,T),
\end{equation}
so that $\bold I_{\epsilon}(t)$ converges pointwise to $0$ on $(0,T)$  and is dominated by an integrable function, hence $|\int_0^T \bold I_{\epsilon}(t)dt| \to 0$ by Lebesgue's dominated convergence theorem.\\

    \item Regarding the second term, we have
    \begin{align*}
        \left| \int_0^T \int_0^t e^{\epsilon^{H-1/2} \int_s^t\beta_\epsilon(r)dr}e^{-\epsilon^{-1}(t-s)} d\psi_0(s) dt\right| & \leq \int_0^T \int_0^t e^{-\epsilon^{-1}(t-s)} \left| d\psi_0(s) \right| dt\\
        & = \int_0^T \int_s^T e^{-\epsilon^{-1}(t-s)} dt \left| d\psi_0(s) \right|\\
        & = \int_0^T \epsilon \left( 1 - e^{-\epsilon^{-1} \left( T-s\right)}\right) \left| d\psi_0(s) \right|
    \end{align*}
    where the inequality comes from the positivity of the first integration, an application of Theorem A.4\footnote{Suppose that $g$ has bounded variation, and put $g^*(x) := \underset{\sigma_N([0,x])}{\sup} \sum_{i=1}^N \left| g(x_n)-g(x_{n-1}) \right| $, with $\sigma_N([0,x])$ a $N$-points subdivision of $[0,x]$. Then for $h$ continuous, we have $$\left| \int_0^T h(x) dg(x) \right| \leq \int_0^T \left| h(x) \right| \left| dg(x) \right|,$$ provided that both integrals exist, and where we used the notation $\left| dg(.) \right| := dg^*(.)$.} from \cite{Montgomery_2007} and using again that $\Re \beta_{\epsilon} \leq 0$, with the positive measure on the right-hand side being the total variation measure defined as in Theorem 6.2 from \cite{Rudin_1966}, and we used Fubini-Lebesgue to get the first equality. Noting the point-wise convergence of the function $f_{\epsilon} : s \mapsto\epsilon \left( 1 - e^{-\epsilon^{-1} \left( T-s\right)}\right)$ to zero and its uniform boundedness in $\epsilon$ by a constant on $[0,T]$, the dominated convergence theorem applied to the total variation measure proves the result.
\end{itemize}
Thus, we obtained
\begin{equation*}
    \int_0^T \psi_{\epsilon}(t) dt \underset{\epsilon \to 0}{\longrightarrow} \int_0^T \psi_0(t) dt,
\end{equation*}
where $\psi_0$ satisfies the second equation in \eqref{eq:psi0}, hence we get
\begin{equation*}
    \phi_\epsilon (T) \underset{\epsilon \to 0}{\longrightarrow} \left( \theta + V_0 \right) \int_0^T \psi_0(s) ds,
\end{equation*}
which is the desired convergence.\\

\noindent \textbf{Case $H<-1/2$.} Define $\psi_0$ as the root with non-positive real part from the third equation in \eqref{eq:psi0}, recall Lemma \ref{L:lemmaExistenceRootWithNegRealPart}. Fixing again $\epsilon>0$, we have
\begin{align*}
     \psi_\epsilon'(t) & = - \epsilon^{-1} \psi_\epsilon(t) + \epsilon^{H-1/2} \left( F(t,\psi_{\epsilon}(t)) - F(t,\psi_{0}(t)) \right)\\
     &= - \epsilon^{-1} \psi_\epsilon(t) + \epsilon^{H-1/2} \beta_\epsilon(t) \left( \psi_{\epsilon}(t) - \psi_{0}(t) \right),
\end{align*}
\text{with} $\beta_\epsilon$ given by \eqref{eq:betavar}. Similarly the case $H<-1/2$, computing the differential of $\Delta_{\epsilon}(t):=(\psi_{\epsilon}(t)-\psi_0(t))e^{t/\epsilon}$ and applying the variation of constants formula leads to
\begin{align*}
    \Delta_{\epsilon}(t) = e^{\epsilon^{H-1/2} \int_0^t\beta_\epsilon(r)dr }  
    \Delta_{\epsilon}(0)  - \int_0^t e^{\epsilon^{H-1/2} \int_s^t\beta_\epsilon(r)dr }e^{s/\epsilon} \left(d\psi_0(s) + \epsilon^{-1} \psi_0(s) ds \right),
\end{align*}
 so that 
\begin{align}
     \psi_{\epsilon}(t) - \psi_0(t) &=  -e^{\epsilon^{H-1/2} \int_0^t\beta_\epsilon(r)dr }e^{-\epsilon^{-1}t}  
 \psi_{0}(0)  - \int_0^t e^{\epsilon^{H-1/2} \int_s^t\beta_\epsilon(r)dr}e^{-\epsilon^{-1}(t-s)} \left(d\psi_0(s) + \epsilon^{-1} \psi_0(s) ds \right)\\
 &= -e^{\epsilon^{H-1/2} \int_0^t\beta_\epsilon(r)dr }e^{-\epsilon^{-1}t}  
 \psi_{0}(0)  - \int_0^t e^{\epsilon^{H-1/2} \int_s^t\beta_\epsilon(r)dr}e^{-\epsilon^{-1}(t-s)} d\psi_0(s) \\
 &\quad - \int_0^t e^{\epsilon^{H-1/2} \int_s^t\beta_\epsilon(r)dr}e^{-\epsilon^{-1}(t-s)} 
 \epsilon^{-1} \psi_0(s) ds\\
& =: \bold I_{\epsilon}(t) + \bold {II}_{\epsilon}(t)  + \bold {III}_{\epsilon} (t).
 \end{align}
We already have from the previous case $H=-1/2$ that, as $\epsilon \to 0$, both integrals $\int_0^T \bold I_{\epsilon}(t) dt$, $\int_0^T \bold {II}_{\epsilon}(t) dt$ converge to $0$, all that remains to show consequently is that $\int_0^T \bold {III}_{\epsilon}(t) dt$ converges to $0$ too.
{\begin{itemize}
    \item A finer upper bound on $\Re \beta_{\epsilon}$ is required to deal with this third term. We already know from Theorem \ref{existenceTheorem} that $\Re \psi_{\epsilon} \leq 0$, and by definition of $\psi_0$, we get the following bound
    \begin{equation*}
        \Re \beta_{\epsilon} = \frac{\xi^2}{2} \left( \Re \psi_{\epsilon} + \Re \psi_{0} \right) \leq \frac{\xi^2}{2} \Re \psi_{0} \leq 0.
    \end{equation*}
    Set
    \begin{equation*}
        E := \left\{ s \in [0,T], \left( f(s), g(s) \right) \neq \left( 0, 0 \right) \right\},
    \end{equation*}
    and from \eqref{eq:psi00}, we know that $\psi_{0}=0$ on $[0,T] \backslash E$ while lemma \ref{L:boundedness_gamma_ratio} yields $\Re \psi_0 \neq 0$ on $E$ so that we can  bound $\bold {III}_{\epsilon}$ as follows
    \begin{align*}
        \left| \bold {III}_{\epsilon} (t) \right| & \leq \int_{[0,t] \cap E} \epsilon^{-1} |\psi_0(s)| e^{\epsilon^{H-1/2} \int_s^t\Re \beta_\epsilon(r)dr} e^{-\epsilon^{-1}(t-s)} ds \\
        & \leq \int_{[0,t] \cap E} \epsilon^{-1} \frac{|\psi_0(s)|}{-\epsilon^{H-1/2}\frac{\xi^2}{2} \Re \psi_0(s)} \left( -\epsilon^{H-1/2}\frac{\xi^2}{2}\Re \psi_0(s) e^{\int_s^t \epsilon^{H-1/2} \frac{\xi^2}{2} \Re \psi_0(r) dr} \right) ds \\ 
        & \leq \int_{[0,t] \cap E} \epsilon^{-H-1/2} 2 \xi^{-2} \frac{|\psi_0|}{- \Re \psi_0} \left( -\epsilon^{H-1/2}\frac{\xi^2}{2}\Re \psi_0(s) e^{\int_s^t \epsilon^{H-1/2} \frac{\xi^2}{2} \Re \psi_0(r) dr} \right) ds,
    \end{align*}
    and an application of lemma \ref{L:boundedness_gamma_ratio} yields the existence of a finite positive constant $C>0$ such that
    \begin{equation*}
        \frac{|\psi_0|}{- \Re \psi_0} \leq C, \quad 
 \forall s \in E,
    \end{equation*}
    so that
    \begin{align*}
        \left| \bold {III}_{\epsilon} (t)\right| & \leq C \epsilon^{-H-1/2} \int_{[0,t] \cap E} \left( -\epsilon^{H-1/2}\frac{\xi^2}{2}\Re \psi_0(s) e^{\int_s^t \epsilon^{H-1/2} \frac{\xi^2}{2} \Re \psi_0(r) dr} \right) ds\\
        & \leq C \epsilon^{-H-1/2} \left( 1 - e^{\int_0^t \epsilon^{H-1/2} \frac{\xi^2}{2} \Re \psi_0(r) dr} \right) \underset{\epsilon \to 0}{\longrightarrow} 0,
    \end{align*}
    since $-H-1/2>0$ so that $\epsilon^{-H-1/2} \to 0$ as $\epsilon \to 0$. Thus $\bold {III}_{\epsilon}$ is dominated by a finite constant $C$ independent of $\epsilon$ (which is integrable on $[0,T]$) and Lebesgue's dominated convergence theorem yields that $\int_0^T \bold {III}_{\epsilon}(t) dt$ converges to $0$.
\end{itemize}
Consequently, we obtained
\begin{equation*}
    \int_0^T \psi_{\epsilon}(t) dt \underset{\epsilon \to 0}{\longrightarrow} \int_0^T \psi_0(t) dt,
\end{equation*}
which then yields
\begin{equation*}
    \epsilon^{-H-1/2} \int_0^T \psi_{\epsilon}(t) dt \underset{\epsilon \to 0}{\longrightarrow} 0,
\end{equation*}
and finally we get
\begin{equation*}
    \phi_\epsilon (T) \underset{\epsilon \to 0}{\longrightarrow} \theta \int_0^T \psi_0(s) ds.
\end{equation*}
}
\end{proof}

\subsection{Convergence of the finite-dimensional distributions towards  NIG-IG} \label{ss:rev_hreston_finite_dimensional_laws}

In this section, we prove the  second part of Theorem~\ref{T:charfunlimit}, that is the convergence of the finite-dimensional distributions of $\left( {\log \frac{S^\epsilon}{S_0}}, \bar V^\epsilon \right)$ towards those of a NIG-IG process  $\left( X, Y \right)$ in the sense of Definition~\ref{D:NIGIG}  with parameters $(\alpha, \beta, \mu, \delta,\lambda)$ as in \eqref{eq:paramlimit} depending on the regime of $H$. Let $d \in \mathbb{N}^*$ and take $0 =: t_0 < t_1 < \dots < t_d \leq T$ to be $d$ distinct times of the time interval $\left[0, T\right]$ and $\left( u_k, v_k \right)_{k \in \left\{ 1, \dots, d \right\}} \in \left(\mathbb{R}^{2}\right)^d$. We will prove that 
\begin{equation}\label{eq:convtemp}
     \mathbb{E} \left[ \exp \left( i\sum_{k=1}^d u_k \log {\frac{S^\epsilon_{t_k}}{S_0}} + i\sum_{k=1}^d v_k \bar{V}_{t_k}^\epsilon \right) \right] \underset{\epsilon \to 0}{\longrightarrow} \; \mathbb{E} \left[ \exp \left( i\sum_{k=1}^d u_k X_{t_k} + i\sum_{k=1}^d v_k Y_{t_k} \right) \right].
\end{equation}
{First, we recover the finite-dimensional distributions of $\left(\log \frac{S^\epsilon_T}{S_0}, \bar{V}^\epsilon_T \right)$ }from \eqref{eq:convchar} by setting  the bounded and measurable functions $f$ and $g$ to be equal to
\begin{equation*}
    f(s) := i\sum_{k=1}^d \mathbb{1}_{\left[t_{k-1}, t_k\right)}(T-s) \sum_{k=j}^d u_j \quad \mbox{and} \quad
    g(s)  := i\sum_{k=1}^d \mathbb{1}_{\left[t_{k-1}, t_k\right)}(T-s) \sum_{k=j}^d v_j.
\end{equation*}
 Notice indeed that
\begin{align*}
   i \sum_{k=1}^d u_k \log { \frac{S^\epsilon_{t_k}}{S_0}} + i\sum_{k=1}^d v_k \bar{V}_{t_k}^\epsilon  &=  i\sum_{k=1}^d \left(\log { \frac{S^\epsilon_{t_k}}{S_0}}-\log {\frac{S^\epsilon_{t_{k-1}}}{S_0}}\right) \sum_{j=k}^d u_j + i\sum_{k=1}^d \left(\bar{V}_{t_k}^\epsilon-\bar{V}_{t_{k-1}}^\epsilon\right) \sum_{j=k}^d v_j \\
    &=  \int_0^T f(T-s) d\log {S^\epsilon_s} + \int_0^T g(T-s) d \bar{V}_s^\epsilon, 
    \end{align*} 
and that  the corresponding $\psi_0$ defined in \eqref{eq:psi0} has bounded variations (being piece-wise constant for this choice of $f$ and $g$), so that an application of the convergence of the characteristic functional in \eqref{eq:convchar} yields
\begin{align*}
    \mathbb{E} \left[ \exp \left( i\sum_{k=1}^d u_k \log {\frac{S^\epsilon_{t_k}}{S_0}} + i\sum_{k=1}^d v_k \bar{V}_{t_k}^\epsilon \right) \right] {\longrightarrow} \exp \left( \phi_0(T) \right),
\end{align*}
with
     \begin{align}
    \phi_0(T) 
    & = \begin{cases}
    V_0 \sum_{k=1}^d (t_k -  t_{k-1}) (i\bar v_{k}- \frac{\bar u_{k}^2 + i\bar u_{k}}{2}) , \quad &\mbox{if } H>-1/2,\\
    \left( \theta + V_0 \right) \xi^{-2} \sum_{k=1}^d(t_k -  t_{k-1}) \left( 1 - \rho \xi i\bar u_{k} + \sqrt{\left( 1 - \rho \xi i\bar u_{k}\right)^2 - 2 \xi^2 (i\bar v_{k}- \frac{\bar u_{k}^2+i\bar u_{k}}{2})} \right),  &\mbox{if } H=-1/2, \\
    -\theta \xi^{-1} \sum_{k=1}^d (t_k -  t_{k-1}) \left( \rho i\bar u_{k} + \sqrt{-\rho^2 \bar u_{k}^2 - 2 (i\bar v_{k}- \frac{\bar u_{k}^2+i\bar u_{k}}{2})} \right), \quad &\mbox{if } H<-1/2,
    \end{cases}
    \end{align}
where we defined $\bar u_{k} := \sum_{j=k}^d u_j$, $\bar v_{k} := \sum_{j=k}^d v_j$.\\

\noindent Second, we identify such $\phi_0(T)$ with the corresponding finite-dimensional distributions of the NIG-IG process $\left( X, Y \right)$ with parameters $(\alpha, \beta, \mu, \delta,\lambda)$ as in \eqref{eq:paramlimit} depending on the regime of $H$. We denote by   $\eta$ its Lévy exponent, recall  \eqref{eq:eta_definition}, and we write 
\begin{align}
    \mathbb{E} \left[ \exp \left( i\sum_{k=1}^d u_k X_{t_k} + i\sum_{k=1}^d v_k Y_{t_k} \right) \right] & = \mathbb{E} \left[ \exp \left( i\sum_{k=1}^d \left(X_{t_k}-X_{t_{k-1}}\right) \sum_{j=k}^d u_j + i\sum_{k=1}^d \left(Y_{t_k}-Y_{t_{k-1}}\right) \sum_{j=k}^d v_j\right) \right]\\
    & = \prod_{k=1}^d \mathbb{E} \left[ \exp \left( \left(X_{t_k}-X_{t_{k-1}}\right) i\sum_{j=k}^d u_j + \left(Y_{t_k}-Y_{t_{k-1}}\right) i\sum_{j=k}^d v_j\right) \right]\\
    & = \prod_{k=1}^d \mathbb{E} \left[ \exp \left( X_{t_k-t_{k-1}} i\sum_{j=k}^d u_j + Y_{t_k-t_{k-1}} i\sum_{j=k}^d v_j\right) \right]\\
    & = \exp \left( \sum_{k=1}^d \left( t_k - t_{k-1} \right) \eta \left( \bar u_k, \bar v_k \right) \right)
\end{align}
using respectively telescopic summation, the independence of increments, the fact that $\left( X_{t_2}, Y_{t_2} \right) - \left( X_{t_1}, Y_{t_1} \right) \overset{\textit{law}}{=} \left( X_{t_2-t_1}, Y_{t_2-t_1} \right)$ and the definition of the characteristic function of $\left( X_{t_{k}-t_{k-1}}, Y_{t_{k}-t_{k-1}} \right)$ for all $k \in 1, \dots, d$ to get the successive equalities. Using the parameters  $(\alpha, \beta, \mu, \delta,\lambda)$ as in \eqref{eq:paramlimit}  it is immediate to see that  
$$\sum_{k=1}^d \left( t_k - t_{k-1} \right) \eta \left( \bar u_k, \bar v_k \right) = \phi_0(T),$$
hence the desired convergence \eqref{eq:convtemp}.

\subsection{Weak-convergence of the integrated variance process for the $M_1$ topology} \label{ss:weak_convergence_V_bar}

In this section, we prove the weak convergence stated in Theorem~\ref{T:charfunlimit} of the integrated variance process $\bar{V}^\epsilon$ with sample paths in ${C}\left([0,T], \mathbb{R}_+\right)$ to the Lévy process $\bar{V}^0$ whose Lévy exponent is given by $v \mapsto \eta(0,v)$ with $\eta$ defined in \eqref{eq:eta_definition} and with sample paths in the real-valued càdlàg functional space $D$ endowed with Skorokhod's Strong M1 ($SM_1$) topology. There will be two subsections: first, in Section \ref{ss:SM1_topology_remainder} we recall briefly the definition of the Strong M1 ($SM_1$) topology as well as some associated convergence results, and then we prove the tightness of the integrated variance in Section~\ref{ss:conv}.

\subsubsection{Reminder on the $SM_1$ topology and conditions for convergence} \label{ss:SM1_topology_remainder}

\noindent We recall succinctly key definitions and convergence theorems for the strong $M_1$ topology, denoted $SM1$. We refer the reader to the key reference book \citet[Chapter 12]{Whitt_Ward_m1} for more details.  For $x \in D$, we define the thin graph of $x$ as
\begin{equation*}
    \Gamma_{x} := \left\{ (z,t) \in \mathbb{R} \times \left[ 0, T \right]: z \in \left[ x(t^{-}), x(t) \right] \right\},
\end{equation*}
where, for $t \in \left[ 0,T \right]$, $\left[ x(t^{-}), x(t) \right]$ denotes the standard segment $\left\{ \alpha x(t^{-}) + \left( 1 - \alpha \right)  x(t), 0 \leq \alpha \leq 1 \right\}$, which is different from a singleton at discontinuity points of the càdlàg sample trajectory $x$. We denote $Disc(x)$ the set of such instants. Define on $\Gamma_{x}$ the strong order relation as follows: $\left(z_{1},t_{1}\right) \leq \left(z_{2},t_{2}\right)$ if either $t_1<t_2$ or $t_1=t_2$ and $\left|x(t_{1}^-)-z_1\right| \leq \left|x(t_{1}^-)-z_2\right|$. Furthermore define a strong parametric representations of $x$ as a continuous non-decreasing (with respect to the previous order relation) function $\left( \hat x, \hat r \right)$ mapping $\left[ 0,1 \right]$ into $\Gamma_{x}$ such that $\left( \hat x(0), \hat r(0) \right) = \left( x(0),0 \right)$ and $\left( \hat x(T), \hat r(T) \right) = \left( x(T),T \right)$. We say the component $\hat r$ scales the time interval $\left[ 0,T \right]$ to $\left[ 0,1 \right]$ while $\hat x$ time-scales $x$, and we denote by $\Pi_{s}(x)$ the set of all strong parametric representations of $x$. Finally, the SM1 topology is the one induced by the metric $d_{s}$ defined as
\begin{equation}
    d_{s} \left( x_{1}, x_{2} \right) := \underset{\left( \hat x^{i}, \hat r^{i} \right) \in \Pi_{s}(x^{j}), j=1,2}{\inf} \left\{ \left| \hat x^{1} - \hat x^{2} \right| \vee \left| \hat r^{1} - \hat r^{2} \right| \right\}.
\end{equation}

\noindent Below we mention briefly five theorems in a row that eventually yields criteria to prove the desired convergence.

\begin{theorem}
    $D$ endowed with the $SM_1$ topology is \textit{Polish}, i.e.~metrizable as a complete separable metric space.
\end{theorem}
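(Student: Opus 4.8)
The plan is to verify, in turn, that $d_s$ is a genuine metric on $D$, that $(D,d_s)$ is separable, and that the induced topology admits a complete metric; this is the content of \cite[Chapter 12]{Whitt_Ward_m1}, and I only sketch the steps. \textbf{That $d_s$ is a metric.} Symmetry is built into the definition, and the triangle inequality follows by concatenating strong parametric representations: given near-optimal representations for the pairs $(x_1,x_2)$ and $(x_2,x_3)$, one re-parametrizes so that the two representations of $x_2$ coincide — possible because both are continuous non-decreasing maps of $[0,1]$ onto the same monotone thin graph $\Gamma_{x_2}$ — after which the representations of $x_1$ and $x_3$ become directly comparable. The only delicate point is positivity: if $d_s(x_1,x_2)=0$ one extracts strong parametric representations of $x_1$ and $x_2$ that are uniformly close, so the thin graphs $\Gamma_{x_1}$ and $\Gamma_{x_2}$ coincide as subsets of $\R\times[0,T]$; since a càdlàg path is recovered from its thin graph together with the right-continuity convention, $x_1=x_2$.

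\textbf{Separability.} Let $\mathcal D_0\subset D$ be the countable family of functions that are piecewise linear between finitely many rational time points and take rational values there, possibly with jump discontinuities at those points. Given $x\in D$ and $\delta>0$, fix a rational partition $0=s_0<\dots<s_m=T$ fine enough that on each cell $[s_{i-1},s_i)$ away from the finitely many jumps of size larger than $\delta$ the oscillation of $x$ is at most $\delta$, and let $x_\delta\in\mathcal D_0$ interpolate rational approximations of $x(s_i^-)$ and $x(s_i)$. One bounds $d_s(x,x_\delta)$ by a constant times $\delta$ by exhibiting parametric representations that track $x$ on the low-oscillation cells and traverse each large jump of $x$ monotonically; the unmatched-jump flexibility of the $M_1$ order relation is exactly what makes this bound work, so $\mathcal D_0$ is $SM_1$-dense.

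\textbf{Completeness.} This is the substantive point, and the familiar subtlety from the $J_1$ case reappears: $d_s$ as written is not itself complete, so one passes to a topologically equivalent complete metric. The cleanest route mirrors Billingsley's device for $J_1$: modify $d_s$ by replacing, inside the infimum, $|\hat r^1-\hat r^2|$ by a quantity that also controls the modulus of continuity of the parametric time change (its least concave majorant), obtaining $d_s^0\ge d_s$, equivalent to $d_s$, and complete. Alternatively one argues directly: a $d_s$-Cauchy sequence $(x_n)$ admits a subsequence with $d_s(x_n,x_{n+1})<2^{-n}$; choosing strong parametric representations of $x_n$ and $x_{n+1}$ that are $2^{-n}$-close and chaining them consistently yields Cauchy sequences in $C([0,1];\R)$ and in $C([0,1];[0,T])$, whose uniform limits $(\hat x,\hat r)$ remain continuous and non-decreasing (both properties being closed under uniform convergence); one then recovers $x\in D$ from $(\hat x,\hat r)$ via $x(t):=\hat x(\sup\{s:\hat r(s)\le t\})$ with the càdlàg convention, and checks $d_s(x_n,x)\to 0$.

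\textbf{Main obstacle.} The only real difficulty is completeness: as for $J_1$, the naive metric fails to be complete, so one must either invoke the Billingsley-type re-metrization or run the parametric-representation chaining argument with care — in particular verifying that the uniform limit of strong parametric representations is again a strong parametric representation, i.e.\ that the limiting $\hat r$ is still non-decreasing onto $[0,T]$ and the limiting pair still traces a thin graph. Since all of this is classical, in the paper one simply cites \cite[Chapter 12]{Whitt_Ward_m1}.
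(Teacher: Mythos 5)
The paper gives no proof of this theorem: it is stated as background recalled from \cite[Chapter 12]{Whitt_Ward_m1}, and you are therefore supplying details the paper omits. Your sketch correctly mirrors Whitt's structure — metric axioms, separability via piecewise-linear rational-valued approximants, and a delicate completeness step — so the route is the right one and the citation is the right one.

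Two corrections of emphasis. First, your triangle-inequality argument assumes that any two strong parametric representations of the same thin graph coincide up to a continuous monotone re-parametrization of $[0,1]$. This is not quite true: one representation may pause (be constant in the graph variable) on a sub-interval of $[0,1]$ where the other does not, so no continuous monotone bijection of $[0,1]$ can match them, and Whitt instead relies on a separate refinement lemma that builds a common parametric representation rather than transporting one to the other. Second, your completeness discussion pulls against itself: you assert that $d_s$ is not complete, but the direct chaining argument you then sketch would — if it closed — produce a limit $x\in D$ with $d_s(x_n,x)\to 0$ and hence establish exactly that $d_s$ is complete. The obstruction, which you correctly flag at the end, is that a uniform limit of strong parametric representations need not lie in $\Pi_s(x)$ for any $x\in D$ (one loses control of the time-change's modulus of continuity and of the monotone-on-segments structure), and that is precisely what the Billingsley-type modified metric or Whitt's version of it is designed to repair; it is not an optional alternative to the chaining argument but the ingredient that makes a chaining-type argument valid. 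Since the paper itself defers wholly to Whitt, the level of detail you give is appropriate; these are refinements, not gaps.
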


\begin{theorem} [Prohorov's theorem]
    Let $(S,m)$ be a metric space. If a subset $A$ in $\mathcal{P}(S)$ is tight, then it's relatively compact. On the other hand, if the subset $A$ is relatively compact and the topological space is Polish, then $A$ is tight.
\end{theorem}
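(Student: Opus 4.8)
This is Prohorov's classical theorem; I would prove the two implications separately, using throughout the Portmanteau characterisation of weak convergence in a metric space ($P_n \Rightarrow Q$ iff $\limsup_n P_n(F)\le Q(F)$ for all closed $F$, equivalently $\liminf_n P_n(G)\ge Q(G)$ for all open $G$), the fact that on a \emph{separable} metric space the weak topology of the space of Borel probabilities is metrizable (so that there relative compactness and relative sequential compactness agree), and the homeomorphic embedding of any separable metric space into the compact metric space $[0,1]^{\mathbb N}$.

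\textbf{Tight $\Rightarrow$ relatively compact.} From tightness, fix compacts $K_1\subseteq K_2\subseteq\cdots$ with $\inf_{P\in A}P(K_k)\ge 1-1/k$, and set $S_0:=\overline{\bigcup_k K_k}$, a closed, separable subspace of $S$ with $P(S_0)=1$ for every $P\in A$. Since $S_0$ is closed, $\mathcal P(S_0)$ is homeomorphic to the closed subset $\{\mu:\mu(S_0)=1\}$ of $\mathcal P(S)$, so it suffices to show that every sequence $(P_n)\subseteq A$ has a subsequence converging weakly \emph{in} $\mathcal P(S_0)$; metrizability of $\mathcal P(S_0)$ then promotes this to relative compactness of $A$, in $\mathcal P(S_0)$ and hence in $\mathcal P(S)$. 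To get the subsequence I would embed $S_0\hookrightarrow[0,1]^{\mathbb N}$, let $\widehat S_0$ be its closure there (compact, metric), and view each $P_n$ as a Borel probability on $\widehat S_0$ via $B\mapsto P_n(B\cap S_0)$. As $C(\widehat S_0)$ is a separable Banach space, the probabilities on $\widehat S_0$ form a weak$^*$ sequentially compact set (Banach--Alaoglu and Riesz representation), so along a subsequence $P_{n_j}\Rightarrow Q$ in $\mathcal P(\widehat S_0)$. Portmanteau on the closed sets $K_k$ gives $Q(K_k)\ge\limsup_j P_{n_j}(K_k)\ge 1-1/k$, hence $Q\big(\bigcup_k K_k\big)=1$, so $Q$ is carried by $\bigcup_k K_k\subseteq S_0$ and defines an element of $\mathcal P(S_0)$. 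Finally $P_{n_j}\Rightarrow Q$ in $\mathcal P(S_0)$: a relatively open $G\subseteq S_0$ is $\widetilde G\cap S_0$ with $\widetilde G$ open in $\widehat S_0$, and since all measures in sight are carried by $\bigcup_k K_k\subseteq S_0$ one has $\liminf_j P_{n_j}(G)=\liminf_j P_{n_j}(\widetilde G)\ge Q(\widetilde G)=Q(G)$, which is the open-set form of Portmanteau on $S_0$.

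\textbf{Relatively compact $+$ Polish $\Rightarrow$ tight.} Let $S$ be Polish, fix a countable dense set $\{x_i\}_{i\ge1}$ and $\eta>0$. The crux is: for each $k$ there is $n_k$ with $\inf_{P\in A}P\big(\bigcup_{i\le n_k}\bar B(x_i,1/k)\big)\ge 1-\eta2^{-k}$. If this failed for some $k$, I would pick $P_n\in A$ with $P_n(G_n)\ge\eta2^{-k}$, where $G_n:=S\setminus\bigcup_{i\le n}\bar B(x_i,1/k)$ is open and decreasing; by relative compactness and metrizability of $\mathcal P(S)$, a subsequence has $P_{n_j}\Rightarrow Q$. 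The decisive point is that $\overline{G_n}\subseteq S\setminus\bigcup_{i\le n}B(x_i,1/k)$ (a limit of points at distance $>1/k$ from $x_i$ stays at distance $\ge1/k$), so $\bigcap_n\overline{G_n}\subseteq S\setminus\bigcup_i B(x_i,1/k)=\emptyset$ by density, whence $Q(\overline{G_n})\downarrow0$. Yet for each fixed $m$ and all $n_j\ge m$, $G_{n_j}\subseteq\overline{G_m}$, so $\eta2^{-k}\le P_{n_j}(G_{n_j})\le P_{n_j}(\overline{G_m})$, and the closed-set Portmanteau inequality forces $\eta2^{-k}\le Q(\overline{G_m})$ for all $m$ — a contradiction. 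Granting the claim, set $K:=\bigcap_k\bigcup_{i\le n_k}\bar B(x_i,1/k)$: it is closed and totally bounded, hence compact because $S$ is complete, and $P(S\setminus K)\le\sum_k P\big(S\setminus\bigcup_{i\le n_k}\bar B(x_i,1/k)\big)\le\sum_k\eta2^{-k}=\eta$ for every $P\in A$. Since $\eta$ was arbitrary, $A$ is tight.

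\textbf{Expected main obstacle.} The heavier half is the first implication, and within it the subtlety is not getting \emph{a} subsequential limit $Q$ on the compactification (Banach--Alaoglu does that) but ensuring $Q$ is carried by $S$ and that the convergence descends to $\mathcal P(S)$; both follow from tightness — via Portmanteau on the compacts $K_k$ and on traces of open sets — together with routine bookkeeping about Borel subspaces and measures carried by $\bigcup_k K_k$. In the converse, the only delicate step is the finite-cover claim, which relies on the geometric fact that the closures $\overline{G_n}$ cannot recapture open balls of radius $1/k$, so that $\overline{G_n}\downarrow\emptyset$; completeness of $S$ is invoked exactly once, to upgrade ``closed and totally bounded'' to ``compact''.
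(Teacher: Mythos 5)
The paper does not prove this statement at all: it is quoted as the classical Prohorov theorem, serving only as background for the $SM_1$ tightness argument in Section 3.3, so there is no in-paper proof to compare against. Your sketch is the standard Billingsley-style argument and is sound in both directions; in particular you handle correctly the two delicate points — descending the subsequential limit from the compactification $\widehat S_0$ back to $\mathcal P(S_0)$ via the Portmanteau bounds $Q(K_k)\ge 1-1/k$, and, in the converse, using the closures $\overline{G_m}$ (with $\overline{G_n}\subseteq S\setminus\bigcup_{i\le n}B(x_i,1/k)$, so $\bigcap_m\overline{G_m}=\emptyset$) so that the closed-set Portmanteau inequality can be applied, with completeness invoked only to upgrade total boundedness of $K$ to compactness.
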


\noindent The first two theorems ensure that, since $\left(D,SM_1 \right)$ is Polish, proving relative compactness of any family of probability measures on such space is sufficient to ensure the existence of a convergent sub-sequence. Finally the convergence of finite-dimensional laws allows to uniquely determine its limit as formulated in the following theorem.

\begin{theorem}[Criteria for convergence in distribution in $\left(D,SM_1\right)$] \label{T:CriteriaForConvergence}Let $\left( ( X_n)_{n\in \mathbb N}, X \right)$ be random functions defined on a common filtered probability space $\left( \Omega, \left( \mathcal{F}_t \right)_t, \mathbb{P}\right)$. Then, $X_n \Rightarrow X$ in $D$ for the $SM_1$ topology if the following conditions hold:
\begin{itemize}
        \item $(X_n)_{n\in \mathbb N}$ is tight with regards to the $SM_1$ topology.
        \item The finite dimensional distributions of $\left( X_n \right)$ converge to those of $X$ on $T_X$, where:
        \begin{equation}
            T_X := \left\{ t > 0, \mathbb{P} \left(t \in Disc(X) \right) = 0 \right\} \cup \{T\}.
        \end{equation}
        
    \end{itemize}
\end{theorem}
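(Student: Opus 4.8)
The plan is to carry out the classical two-step identification of a weak limit on a Polish space --- relative compactness via Prohorov, then determination of the limit through finite-dimensional distributions --- while accommodating the feature proper to the $M_1$ topology that the coordinate evaluations are not globally continuous on $D$ but only continuous at paths which are themselves continuous at the evaluation instant. I would rely on two structural facts about $(D,SM_1)$, both contained in \cite[Chapters~11--12]{Whitt_Ward_m1}. \emph{(a)} For each $t\in[0,T]$ the evaluation map $\pi_t\colon(D,SM_1)\to\mathbb R$, $x\mapsto x(t)$, is continuous at every $x\in D$ that is continuous at $t$, and $\pi_0$ and $\pi_T$ are continuous at every $x\in D$. \emph{(b)} The Borel $\sigma$-field of $(D,SM_1)$ is generated by $\{\pi_t:t\in S\}$ for any dense $S\subseteq[0,T]$ with $T\in S$; consequently a Borel probability measure on $(D,SM_1)$ is determined by the family of its finite-dimensional distributions at time-tuples drawn from such an $S$.

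First I would note that, since $(D,SM_1)$ is Polish and $(X_n)_n$ is tight, Prohorov's theorem (stated above) shows that $\{\mathcal L(X_n):n\in\mathbb N\}$ is relatively compact for weak convergence. It then suffices to prove that every weak subsequential limit of $(\mathcal L(X_n))_n$ coincides with $\mathcal L(X)$, for together with relative compactness this forces $X_n\Rightarrow X$. So I fix a subsequence along which $X_{n'}\Rightarrow Q$ for some $Q\in\mathcal P(D)$; the goal is to show $Q=\mathcal L(X)$.

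Next I would introduce, in analogy with $T_X$, the set $T_Q:=\{t\in(0,T):Q(\{x\in D:x(t)\neq x(t^-)\})=0\}\cup\{T\}$. Since every càdlàg path on $[0,T]$ is continuous off an at most countable set, Tonelli's theorem gives $\int_0^T Q(\{x:x\text{ discontinuous at }t\})\,dt=0$, so $[0,T]\setminus T_Q$ is Lebesgue-null; hence $T_Q$ is dense in $[0,T]$, as is $T_X$ by hypothesis, and $T\in T_X\cap T_Q$. Fix $m\ge1$ and $t_1<\dots<t_m$ in $T_X\cap T_Q$. By fact \emph{(a)}, $Q$-almost every path is continuous at each $t_i$, so the vector evaluation $\pi_{(t_1,\dots,t_m)}\colon(D,SM_1)\to\mathbb R^m$ is continuous $Q$-a.s., and the continuous mapping theorem yields that $(X_{n'}(t_1),\dots,X_{n'}(t_m))$ converges in law to the image of $Q$ under $\pi_{(t_1,\dots,t_m)}$. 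On the other hand, since $t_1,\dots,t_m\in T_X$, the finite-dimensional convergence hypothesis gives $(X_{n'}(t_1),\dots,X_{n'}(t_m))\Rightarrow(X(t_1),\dots,X(t_m))$. By uniqueness of weak limits in $\mathbb R^m$, the finite-dimensional distribution of $Q$ at $(t_1,\dots,t_m)$ equals that of $X$. Since $m$ and the chosen times in $T_X\cap T_Q$ were arbitrary, and $T_X\cap T_Q$ is dense in $[0,T]$ and contains $T$, fact \emph{(b)} gives $Q=\mathcal L(X)$, which completes the argument.

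The genuinely delicate point is the passage from the $\mathcal L(X)$-almost sure continuity of $\pi_t$ --- which is exactly what the definition of $T_X$ encodes --- to the $Q$-almost sure continuity of $\pi_t$ for an arbitrary subsequential limit $Q$; this is handled by shrinking $T_X$ to the still Lebesgue-co-null, hence dense, set $T_X\cap T_Q$, and then using that a law on $(D,SM_1)$ is already pinned down by its finite-dimensional distributions over a dense set of times containing the endpoint $T$. Both ingredients are the structural facts about the $M_1$ topology recorded in \cite{Whitt_Ward_m1}; the point of working with $M_1$ rather than the finer $J_1$ (or the uniform) topology is precisely that it makes $\pi_t$ continuous at every continuity point of $x$ without any uniform matching of jump positions, which is what will allow the continuous approximants $\bar V^{\epsilon}$ of Theorem~\ref{T:charfunlimit} to converge to a genuinely càdlàg Lévy limit.
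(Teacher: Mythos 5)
Your proof is correct and complete. After invoking Prohorov's theorem for relative compactness on the Polish space $(D,SM_1)$, you identify every subsequential weak limit $Q$ with $\mathcal L(X)$ by matching finite-dimensional distributions on the dense set $T_X\cap T_Q$ of common almost-sure continuity times, relying on (a) continuity of the evaluation maps $\pi_t$ at paths continuous at $t$ in $(D,SM_1)$, so the continuous mapping theorem applies, and (b) the fact that a Borel probability law on $(D,SM_1)$ is determined by its finite-dimensional distributions over any dense subset of $[0,T]$ containing $T$. That is the correct mechanism.

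Be aware, however, that the paper itself supplies no proof of this statement: it is one of the results recalled without argument in Section~3.3.1, with the reader referred to \cite[Chapter~12]{Whitt_Ward_m1}, so there is no in-paper proof to compare against --- what you have reconstructed is essentially the standard argument from that reference. One small inaccuracy in your write-up: the density of $T_X$ in $[0,T]$ is not a hypothesis of the theorem; it follows automatically by the very Tonelli argument you apply to $T_Q$, since $X$ too is c\`adl\`ag and thus has at most countably many discontinuity times almost surely. This does not affect the validity of your argument, but the phrase ``as is $T_X$ by hypothesis'' should read ``as is $T_X$ by the same reasoning.''
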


\noindent To conclude this section, we recall a characterization of tightness for a sequence of probability measures.

\begin{theorem}[Characterization of tightness]\label{T:characterization_tightness}
The sequence of probability measures $\left\{ \mathbb{P}_n \right\}_{n \ge 1}$ on $\left(D,SM_1\right)$ is tight if and only if:
\begin{itemize}
    \item[(i)] $\forall \bar \varepsilon>0, \exists c < \infty, \forall n \ge 1, \mathbb{P}_n \left( \left\{ x \in D: ||x|| > c \right\} \right) < \bar \varepsilon$
    \item[(ii)] $\forall \bar \varepsilon>0, \forall \eta > 0, \exists \delta>0, \forall n \ge 1, \mathbb{P}_n \left( \left\{ x \in D: w'(x,\delta) \ge \eta \right\} \right) < \bar \varepsilon$
\end{itemize}
Where we defined for $x \in D$, $t\in [0,T]$ and $\delta>0$:
\begin{align}
    ||x|| & := \underset{t \in [0,T]}{\text{sup}} |x_t|,\\
    w'(x,\delta) &:= w(x,\delta) \vee \bar{v}(x,0,\delta) \vee \bar{v}(x,T,\delta) ,\\ \notag
    w(x,\delta) &:= \underset{t \in [0,T]}{\text{sup}} w_S(x,t,\delta), \\ \notag
    w_S(x,t,\delta) &:= \underset{0 \vee t-\delta \le t_1 < t_2 < t_3 \le (t+\delta) \wedge T}{\text{sup}} \left| x(t_2)-\left[ x(t_1), x(t_3)\right] \right|, \\ \notag
    \bar{v}(x,t,\delta) &:= \underset{0 \vee t-\delta \le t_1 \le t_2 \le (t+\delta) \wedge T}{\text{sup}} \left| x(t_1)-x(t_2) \right|.
\end{align}
\end{theorem}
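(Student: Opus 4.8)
The plan is to obtain this tightness characterization as a consequence of the Arzelà--Ascoli-type description of relatively compact subsets of $(D, SM_1)$ combined with Prohorov's theorem, in the spirit of \cite[Chapter 12]{Whitt_Ward_m1}. The structural input I would take as the starting point is the $M_1$ analogue of the classical modulus-of-continuity criterion: a set $A \subseteq D$ has compact closure in the $SM_1$ topology if and only if
\[
\sup_{x \in A} \|x\| < \infty \qquad \text{and} \qquad \lim_{\delta \downarrow 0} \sup_{x \in A} w'(x,\delta) = 0,
\]
with $\|\cdot\|$ and $w'$ exactly as in the statement; here the $J_1$ oscillation is replaced by the segment-based oscillation $w_S$, which is precisely what prevents an isolated large jump from inflating the modulus. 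Everything below is the standard passage from this compactness criterion to a tightness criterion.

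First I would prove the ``only if'' direction. If $\{\mathbb{P}_n\}_{n \ge 1}$ is tight, then for each $\bar\varepsilon > 0$ there is a set $K \subseteq D$ with compact closure and $\mathbb{P}_n(K) \ge 1 - \bar\varepsilon$ for all $n$. By the compactness criterion, $c := \sup_{x \in K} \|x\| < \infty$, giving (i); and $\sup_{x \in K} w'(x,\delta) \to 0$ as $\delta \downarrow 0$, so for any $\eta > 0$ one may pick $\delta > 0$ with $\sup_{x \in K} w'(x,\delta) < \eta$. Since then $\{x : w'(x,\delta) \ge \eta\} \subseteq K^c$, this yields (ii). The one point needing care is Borel measurability in $(D,SM_1)$ of the sets $\{x : \|x\| > c\}$ and $\{x : w'(x,\delta) \ge \eta\}$, which follows from the (lower semi-)continuity of $x \mapsto \|x\|$ and $x \mapsto w'(x,\delta)$ for the $SM_1$ topology — a fact extracted from the definition of $d_s$ through strong parametric representations.

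For the ``if'' direction I would construct a relatively compact set explicitly. Given $\bar\varepsilon > 0$, use (i) with $\bar\varepsilon/2$ to get $c < \infty$, and for each $k \ge 1$ use (ii) with $\eta = 1/k$ and $\bar\varepsilon\,2^{-k-1}$ to get $\delta_k > 0$ with $\mathbb{P}_n(\{x : w'(x,\delta_k) \ge 1/k\}) < \bar\varepsilon\,2^{-k-1}$ for all $n$. Set
\[
K := \{x \in D : \|x\| \le c\} \cap \bigcap_{k \ge 1} \{x \in D : w'(x,\delta_k) \le 1/k\}.
\]
Since $w'(x,\cdot)$ is nondecreasing, for $\delta \le \delta_k$ and $x \in K$ we have $w'(x,\delta) \le 1/k$, so $\sup_{x \in K}\|x\| \le c$ and $\sup_{x \in K} w'(x,\delta) \to 0$ as $\delta \downarrow 0$; hence $\overline{K}$ is compact by the criterion above. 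A union bound gives $\mathbb{P}_n(K^c) \le \bar\varepsilon/2 + \sum_{k \ge 1} \bar\varepsilon\,2^{-k-1} = \bar\varepsilon$, so $\mathbb{P}_n(\overline{K}) \ge 1 - \bar\varepsilon$ for all $n$, and tightness follows.

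The main obstacle — indeed the only genuinely delicate ingredient — is establishing (or correctly quoting from Whitt) the compactness characterization of $(D,SM_1)$ and the attendant measurability/semicontinuity of $\|\cdot\|$ and $w'$, since $SM_1$ is not the uniform topology and one must argue through strong parametric representations and completed thin graphs rather than pointwise evaluation. Once these are in hand the rest is the routine Prohorov argument above; a secondary technical check is that the endpoint oscillation terms $\bar v(x,0,\delta)$ and $\bar v(x,T,\delta)$ appearing in $w'$ are included, as these are exactly what control mass escaping into a jump located at $0$ or $T$ and make the constructed set $K$ behave well at the boundary of $[0,T]$.
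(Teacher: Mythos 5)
Your derivation is correct, but it is worth being clear that the paper does not prove this statement at all: Theorem~\ref{T:characterization_tightness} is quoted verbatim from \citet[Chapter 12]{Whitt_Ward_m1} (Theorem~12.12.3 there), and the paper's ``proof'' is just that citation. What you have written out is essentially the standard argument that Whitt himself gives, namely the passage from an Arzel\`a--Ascoli-type compactness characterization of $(D,SM_1)$ (Whitt's Theorem~12.12.2: $A$ has compact closure iff $\sup_{x\in A}\|x\|<\infty$ and $\lim_{\delta\downarrow 0}\sup_{x\in A}w'(x,\delta)=0$) to a tightness characterization via Prohorov's theorem, which the paper records separately just above this statement. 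Both directions of your argument are sound: in the ``only if'' direction you correctly read (i) and (ii) off a single near-full-measure compact set, and in the ``if'' direction the set $K$ built from the countable family $\{\delta_k\}$ together with monotonicity of $\delta\mapsto w'(x,\delta)$ and a union bound is exactly the right construction. You are also right that the only genuinely nontrivial inputs are (a) the compactness criterion itself, which requires working through strong parametric representations and the completed graph rather than pointwise evaluation, and (b) the Borel measurability of $x\mapsto\|x\|$ and $x\mapsto w'(x,\delta)$ on $(D,SM_1)$; if you wanted a self-contained account rather than a citation these are precisely the pieces you would need to fill in, but as a blind reconstruction of the theorem the paper merely quotes, your argument is the correct one.
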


\subsubsection{Convergence of the integrated variance process}\label{ss:conv}

We already proved in Section \ref{ss:rev_hreston_finite_dimensional_laws} the convergence of finite dimensional distributions as $\epsilon$ goes to zero of $\bar{V}^\epsilon$ toward those of {either} the deterministic linear, or the Inverse Gaussian, or the Lévy process {denoted $Y$} depending respectively on the value of the {parameter $H$} with Lévy exponent $\eta(0,\cdot)$ from \eqref{eq:eta_definition} with the respective parameters given in Theorem \ref{T:charfunlimit}. Consequently, all that remains to prove is the tightness of the family of processes $\left( \bar{V}^\epsilon \right)_{\epsilon>0}$ for the $SM_1$ topology to get the desired convergence result as a direct consequence of Theorem \ref{T:CriteriaForConvergence}. We will apply the characterization Theorem \ref{T:characterization_tightness} of tightness in $SM_1$ to conclude, and more precisely, we will see that the criteria of tightness within the $SM_1$ topology simplifies greatly for almost surely non-decreasing and continuous stochastic processes in general.\\

% \noindent Indeed, {\ma we first note that for a non-decreasing continuous function $x$, 
% \begin{align}
%     \| x\|  = x(T), 
% \end{align}
% and 
% the oscillation function $$w'(x,\delta) =  0 \vee |x(\delta)| \vee (x(T) - x(T-\delta))  ?  $$
% so that \fbox{complete argument} to simplify conditions (i)-(ii) are simplified. 
% }

\noindent Fix $\bar \varepsilon > 0$. Since, for all $\epsilon>0$, $\bar{V}^\epsilon$ is almost surely non-decreasing and non-negative, we have that, for all $\omega \in \Omega$
\begin{align*}
||\bar{V}^\epsilon(\omega)|| &= \bar V_T^{\epsilon}(\omega), \\
w'(\bar{V}^\epsilon(\omega), \delta) &= 0\vee \bar V^{\epsilon}_{\delta}(\omega) \vee \left( \bar V^{\epsilon}_{T}(\omega) - \bar V^{\epsilon}_{T-\delta}(\omega)\right), \quad \delta >0.
\end{align*}
This yields that, for a threshold $c>0$ big enough, the probability in condition $(i)$ on the measures $\mathbb{P}_\epsilon := \mathbb{P}_{\left(\bar{V}_.^\epsilon\right)^{-1}}$ reduces to
\begin{equation*}
    \mathbb{P}_\epsilon \left( \left\{ x \in D, ||x|| > c \right\} \right) = \mathbb{P} \left( \left\{ \omega: ||\bar{V}_.^\epsilon(\omega)||>c \right\} \right) = \mathbb{P} \left( \bar{V}_T^\epsilon>c \right) \le \sup_{\epsilon>0} \mathbb{P} \left( \bar{V}_T^\epsilon>c \right) < \bar \varepsilon,
\end{equation*}
where the last inequality is satisfied by tightness of the family $\left( \bar{V}_T^\epsilon \right)_{\epsilon>0}$ of random variables in $\R$ which is obtained as a direct consequence of Lévy's continuity theorem, recall that  $\left( \bar{V}_T^\epsilon \right)_{\epsilon>0}$ has been shown to converge in Section \ref{ss:proof_convergence_functional}. This yields $(i)$. In addition, regarding the second condition $(ii)$, set an arbitrary $\eta>0$, and take $\delta$ small enough such that
\begin{align*}
    \mathbb{P}_{\epsilon} \left( \left\{ x \in D: w'(x,\delta) \ge \eta \right\} \right) &  = \mathbb{P} \left( 0 \vee \left|\bar{V}_{\delta}^\epsilon \right| \vee \left|\bar{V}_T^\epsilon - \bar{V}_{T-\delta}^\epsilon\right| \ge \eta \right) \leq \sup_{\epsilon>0} \mathbb{P} \left( 0 \vee \left|\bar{V}_{\delta}^\epsilon \right| \vee \left|\bar{V}_T^\epsilon - \bar{V}_{T-\delta}^\epsilon\right| \ge \eta \right) < \bar \varepsilon,
\end{align*}
where the last but one inequality holds by tightness of the family $\left( \bar{V}_{\delta}^\epsilon, \bar{V}_{T-\delta}^\epsilon \right)_{\epsilon > 0}$ while the last one is justified by stochastic continuity of $\bar{V}^\epsilon$ for any $\epsilon > 0$ and of its limit $Y$.
\begin{remark}
    Since $\log S^\epsilon = - \frac{1}{2} \bar V^\epsilon + \rho W_{\bar V^\epsilon} + \sqrt{1-\rho^2} W^\perp_{\bar V^\epsilon}$, and composition is not continuous in $\left( D, M_1 \right)$ (see \cite[Section 3.5]{McCrickerd_2021}), we cannot expect the tightness of the log price within $SM_1$.
\end{remark}

\appendix 

\section{Some lemmas}

\begin{lemma} \label{L:lemmaExistenceRootWithNegRealPart}
    (Uniqueness of the complex root with a non-positive real part) 
    Take $\xi>0$. For all $f$, $g$ bounded and measurable with $\Re f = \Re g = 0$, $t \in \left[ 0,T \right]$ and $\rho \in \left[ -1,1 \right]$, both polynomials 
    
    \begin{align*}
        P(X) & := \frac{\xi^2}{2} X^2 - \left( 1 - \rho \xi f(t) \right) X + g(t) + \frac{f^2(t) - f(t)}{2}\\
        Q(X) & := \frac{\xi^2}{2} X^2 + \rho \xi f(t) X + g(t) + \frac{f^2(t) - f(t)}{2}
    \end{align*}
    admit exactly two roots with respective real parts of strict opposite signs if $\left( f(t), g(t) \right) \neq \left( 0, 0 \right)$, and if $\left( f(t), g(t) \right) = \left( 0, 0 \right)$, then the polynomial $P$ has roots $0$ and $\frac{2}{\xi^2}$, while $Q$ has $0$ as a double root.
\end{lemma}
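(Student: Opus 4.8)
The plan is to reduce the whole statement to elementary properties of the principal square root. Since $\Re f=\Re g=0$, I would write $f(t)=iu$ and $g(t)=iv$ with $u,v\in\R$, and abbreviate $c:=g(t)+\tfrac{f^2(t)-f(t)}{2}=-\tfrac{u^2}{2}+i\bigl(v-\tfrac{u}{2}\bigr)$, so $\Re c\le 0$. The degenerate case $(f(t),g(t))=(0,0)$ is then immediate: $c=0$, hence $P(X)=X\bigl(\tfrac{\xi^2}{2}X-1\bigr)$ has roots $0$ and $2/\xi^2$, and $Q(X)=\tfrac{\xi^2}{2}X^2$ has $0$ as a double root. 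So from now on one assumes $(f(t),g(t))\neq(0,0)$ and it only remains to locate the real parts of the roots.

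The first step is to compute the two discriminants. Expanding with $f(t)=iu$, $g(t)=iv$,
\[
\Delta_P:=(1-\rho\xi f(t))^2-2\xi^2 c=\bigl(1+\xi^2u^2(1-\rho^2)\bigr)+i\bigl(\xi^2u-2\rho\xi u-2\xi^2 v\bigr),
\]
\[
\Delta_Q:=(\rho\xi f(t))^2-2\xi^2 c=\xi^2u^2(1-\rho^2)+i\,\xi^2(u-2v),
\]
so $\Re\Delta_P=1+\xi^2u^2(1-\rho^2)\ge 1$ and $\Re\Delta_Q=\xi^2u^2(1-\rho^2)\ge 0$. The roots of $P$ and $Q$ are $\tfrac{(1-\rho\xi f(t))\pm\sqrt{\Delta_P}}{\xi^2}$ and $\tfrac{-\rho\xi f(t)\pm\sqrt{\Delta_Q}}{\xi^2}$; since $\Re(1-\rho\xi f(t))=1$ and $\Re(\rho\xi f(t))=0$ (because $\Re f=0$), their real parts equal $\tfrac{1\pm\Re\sqrt{\Delta_P}}{\xi^2}$ and $\pm\tfrac{\Re\sqrt{\Delta_Q}}{\xi^2}$ respectively. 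Thus the claim reduces to the two inequalities $\Re\sqrt{\Delta_Q}>0$ and $\Re\sqrt{\Delta_P}>1$, together with $\Delta_P,\Delta_Q\neq 0$ (which gives the ``exactly two roots'' part).

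To establish these I would use $\bigl(\Re\sqrt z\bigr)^2=\tfrac{|z|+\Re z}{2}$ for the principal branch. For $Q$: since $\Re\Delta_Q\ge 0$, $\Delta_Q$ is never a strictly negative real, so $\Re\sqrt{\Delta_Q}>0$ unless $\Delta_Q=0$; but $\Delta_Q=0$ forces $u^2(1-\rho^2)=0$ and $u=2v$, which for $\rho\in(-1,1)$ gives $u=v=0$, contradicting $(f(t),g(t))\neq(0,0)$. For $P$: $\bigl(\Re\sqrt{\Delta_P}\bigr)^2=\tfrac{|\Delta_P|+\Re\Delta_P}{2}\ge\Re\Delta_P\ge 1$, with equality only if $\Delta_P$ is a positive real with $\Re\Delta_P=1$, i.e.\ only if $\Delta_P=1$; inspecting the imaginary part, $\Delta_P=1$ forces $u^2(1-\rho^2)=0$ and $\xi^2u-2\rho\xi u-2\xi^2 v=0$, again impossible for $\rho\in(-1,1)$ unless $(f(t),g(t))=(0,0)$. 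Hence $\Re\sqrt{\Delta_Q}>0$ and $\Re\sqrt{\Delta_P}>1$, so the roots of $P$ have real parts $\tfrac{1+\Re\sqrt{\Delta_P}}{\xi^2}>0$ and $\tfrac{1-\Re\sqrt{\Delta_P}}{\xi^2}<0$, and the roots of $Q$ have real parts $\pm\tfrac{\Re\sqrt{\Delta_Q}}{\xi^2}$ — of strictly opposite signs in both cases — and $\Delta_P,\Delta_Q\neq 0$ makes them distinct.

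The one delicate point — and where I expect the hypothesis really needs to be $\rho\in(-1,1)$ rather than the stated $[-1,1]$ — is precisely this strictness. When $|\rho|=1$ one has $\Re\Delta_Q=0$ and $\Re\Delta_P=1$, so the arguments above only yield $\Re\sqrt{\Delta_Q}\ge 0$ and $\Re\sqrt{\Delta_P}\ge 1$, with equality on the codimension-one loci $\{f(t)=2g(t)\}$ and $\{\xi^2u-2\rho\xi u-2\xi^2 v=0\}$ respectively, where the relevant root acquires a vanishing real part. I would therefore state the conclusion for $\rho\in(-1,1)$ (consistently with Theorem~\ref{T:charfunlimit}), or else weaken ``strict'' to ``non-strict'' on that negligible set of $(f(t),g(t))$; apart from that, the proof is just a routine expansion of squares and sign bookkeeping.
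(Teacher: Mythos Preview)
Your proof follows essentially the same route as the paper's: both apply the quadratic formula and reduce the claim to $\Re\sqrt{\Delta_P}>1$ and $\Re\sqrt{\Delta_Q}>0$; the paper argues via $a^2-b^2=\Re\Delta$ and $2ab=\Im\Delta$ for $\sqrt\Delta=a+ib$, which is equivalent to your identity $(\Re\sqrt z)^2=\tfrac{|z|+\Re z}{2}$. Your caveat about $|\rho|=1$ is well-founded and in fact sharper than the paper: the statement is false at the endpoints --- for instance $\rho=\xi=1$, $f(t)=i$, $g(t)=-i/2$ gives $\Delta_P=1$ and a root $-i$ of $P$ with vanishing real part --- and the paper's own assertion that ``$b$ cannot be zero'' in that case does not hold (the imaginary part $\rho\xi\Im f+\xi^2(\Im g-\tfrac{\Im f}{2})$ can vanish for nontrivial $(\Im f,\Im g)$). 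The applications in the paper (Theorem~\ref{T:charfunlimit} and Lemma~\ref{L:boundedness_gamma_ratio}) only use $\rho\in(-1,1)$, exactly as you suggest.
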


\begin{proof}
    Let us detail the proof for $P$, similar arguments will apply to $Q$. By d'Alembert-Gauss theorem, the polynomial $P$ admits exactly two roots expressed as:
    
    \begin{equation} \notag
        \left\{ \xi^{-2} \left( 1 - \rho \xi f(t) \pm \sqrt{\left( 1 - \rho \xi f(t) \right)^2 - 2 \xi^2 \left( g(t) + \frac{f^2(t)-f(t)}{2} \right)} \right) \right\},
    \end{equation}
    where we take the principal square root in the expression above, i.e.with non-negative real-part. Consequently, the roots have real parts
    
    \begin{equation} \notag
        \left\{ \xi^{-2} \left( 1 \pm \Re \left(\sqrt{\left( 1 - \rho \xi f(t) \right)^2 - 2 \xi^2 \left( g(t) + \frac{f^2(t)-f(t)}{2} \right)}\right) \right) \right\},
    \end{equation}
    so that it remains to show $\left| \Re \left(\sqrt{\left( 1 - \rho \xi f(t) \right)^2 - 2 \xi^2 \left( g(t) + \frac{f^2(t)-f(t)}{2} \right)}\right) \right| > 1$.\\
    
    \noindent Denote $\delta=a+ib$, $a,b \in \mathbb{R}$ such that $\delta^2 = \left( 1 - \rho \xi f(t) \right)^2 - 2 \xi^2 \left( g(t) + \frac{f^2(t)-f(t)}{2} \right)$, then it follows that $a$ and $b$ satisfy
    
    \begin{equation} \label{inequalitiesRealImParts}
        \begin{cases}
            a^2 - b^2 & = 1 + \left( 1 - \rho^2 \right) \left( \xi \Im f(t) \right)^2, \\
            ab & = - \left( \rho \xi \Im f(t) + \xi^2 \left( \Im g(t) - \frac{\Im f(t)}{2} \right) \right).
        \end{cases}
    \end{equation}
    
    \noindent If $\rho \neq \pm 1$, then the result is immediate from the first inequality in \eqref{inequalitiesRealImParts}, while if $\rho = \pm 1$, then $|a| = \sqrt{1+b^2}$ and $b$ cannot be zero, otherwise $\delta^2 = a^2 = 1 - 2i\left( \rho \xi \Im f(t) + \xi^2 \left( \Im g(t) - \frac{\Im f(t)}{2} \right) \right)$ which cannot be the case, since $a \in \mathbb{R}$ and $\left(\Im f(t), \Im g(t)\right) \neq \left( 0, 0 \right)$.
\end{proof}

{\begin{lemma} \label{L:boundedness_gamma_ratio}
    Let $f$ and $g$ be bounded measurable functions such that $\Re f = \Re g = 0$. Then there exists a finite positive constant $C$ such that the ratio
    \begin{equation*}
        \gamma(s) := \frac{|\psi_0(s)|}{- \Re \psi_0(s)} \leq C, \forall s \in E,
    \end{equation*}
    where the set $E$ is given by
    \begin{equation*}
        E := \left\{ s \in [0,T], \left( f(s), g(s) \right) \neq \left( 0, 0 \right) \right\},
    \end{equation*}
    and $\psi_0$ is given in \eqref{eq:psi00} in the case $H<-1/2$.
\end{lemma}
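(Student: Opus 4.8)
The plan is to compute $\gamma(s)$ explicitly in terms of the real and imaginary parts of the square root appearing in the formula \eqref{eq:psi00} for $\psi_0$ in the regime $H<-1/2$, and then to bound those two parts by the two algebraic identities they must satisfy. Throughout, since $\Re f=\Re g=0$ I write $f(s)=ip(s)$ and $g(s)=iq(s)$ with $p:=\Im f$, $q:=\Im g$ bounded and real-valued, and I work with $\rho\in(-1,1)$ as in Theorem~\ref{T:charfunlimit} (the boundary case $|\rho|=1$ is already excluded there, the parameters $\alpha,\beta$ in \eqref{eq:paramlimit} blowing up as $|\rho|\to1$). First, for $s\in E$, Lemma~\ref{L:lemmaExistenceRootWithNegRealPart} applied to the polynomial $Q=F(s,\cdot)$ shows that the two roots of $Q$ have real parts of strictly opposite signs, so the root $\psi_0(s)$ selected in \eqref{eq:psi00} satisfies $\Re\psi_0(s)<0$; hence $\gamma(s)$ is a well-defined finite number for every $s\in E$, and the content of the lemma is the \emph{uniform} bound.

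Next I would set $\delta(s):=\sqrt{f(s)\bigl(1-(1-\rho^2)f(s)\bigr)-2g(s)}$ (principal square root), so that \eqref{eq:psi00} reads $\psi_0(s)=-\xi^{-1}\bigl(\rho f(s)+\delta(s)\bigr)$. Writing $\delta(s)=a(s)+ib(s)$ with $a(s)=\Re\delta(s)\ge0$ and substituting $f=ip$, $g=iq$ gives $\delta^2(s)=(1-\rho^2)p^2(s)+i\bigl(p(s)-2q(s)\bigr)$, hence
\[
a^2(s)-b^2(s)=(1-\rho^2)p^2(s),\qquad 2a(s)b(s)=p(s)-2q(s).
\]
Because $\rho f(s)$ is purely imaginary, $-\Re\psi_0(s)=\xi^{-1}a(s)$ and $|\psi_0(s)|^2=\xi^{-2}\bigl(a^2(s)+(b(s)+\rho p(s))^2\bigr)$, so that
\[
\gamma^2(s)=1+\frac{(b(s)+\rho p(s))^2}{a^2(s)},\qquad s\in E,
\]
the quotient being legitimate since $a(s)>0$ on $E$ by the previous paragraph.

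Finally, for the estimate: from $a^2-b^2=(1-\rho^2)p^2\ge0$ one gets $b^2/a^2\le1$, and from $a^2=b^2+(1-\rho^2)p^2\ge(1-\rho^2)p^2$ one gets $p^2/a^2\le(1-\rho^2)^{-1}$; hence, using $(b+\rho p)^2\le2b^2+2\rho^2p^2$,
\[
\gamma^2(s)\le1+\frac{2b^2}{a^2}+\frac{2\rho^2p^2}{a^2}\le1+2+\frac{2\rho^2}{1-\rho^2}=1+\frac{2}{1-\rho^2},
\]
so that $\gamma(s)\le C:=\sqrt{1+2/(1-\rho^2)}<\infty$ for all $s\in E$, independently of $s$. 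The only real subtlety is the interplay between the constraint $\rho^2<1$ and the identity $a^2-b^2=(1-\rho^2)p^2$: this is precisely what forces both $b^2/a^2$ and $p^2/a^2$ to be bounded, and it degenerates when $|\rho|=1$ (then $a^2=b^2$ and $p^2/a^2$ is unbounded, e.g. $p$ large with $q\approx p/2$), consistently with the exclusion of that case. So the ``hard part'' is merely recognising the two-term form of $\gamma^2$ and matching it against the Vieta-type relations for $(a,b)$; the remainder is a one-line completion-of-squares bookkeeping.
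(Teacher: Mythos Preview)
Your proof is correct and in fact cleaner than the paper's. The setup is identical: both arguments write $\delta=a+ib$ for the principal square root, obtain $\Re\psi_0=-\xi^{-1}a$, $\Im\psi_0=-\xi^{-1}(b+\rho p)$, and reduce to bounding $(b+\rho p)^2/a^2$. The paper then solves the biquadratic for $a$ explicitly, writes $\tilde\gamma=\rho p/a+b/a=:\mathbf A+\mathbf B$, and bounds each piece by a case analysis ($\rho=0$ or not, $p-2q=0$ or not) using continuity on compacta away from the origin together with computation of the limits of $\mathbf A$ and $\mathbf B$ as $(p,q)\to(0,0)$. You bypass all of this by reading the two inequalities $b^2\le a^2$ and $(1-\rho^2)p^2\le a^2$ directly off the single identity $a^2-b^2=(1-\rho^2)p^2$, which immediately yields the explicit constant $C=\sqrt{1+2/(1-\rho^2)}$. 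Your remark that the argument collapses precisely when $|\rho|=1$ is also sharper than the paper's implicit handling of that exclusion. The only point to be careful about is the claim $a>0$ on $E$: you appeal to Lemma~\ref{L:lemmaExistenceRootWithNegRealPart}, but it is actually quicker to note (as you implicitly do) that $a=0$ forces $\delta^2\le0$ real, whereas $\delta^2=(1-\rho^2)p^2+i(p-2q)$ has nonnegative real part for $\rho\in(-1,1)$, so $a=0$ implies $p=q=0$.
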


\begin{proof}
    \noindent We start by explicitly computing the real and imaginary parts of $\psi_0$ in the case $H<-1/2$, whose expression is given anew by
    \begin{equation*}
        \psi_0(s) = - \xi^{-1} \left( \rho f(s) + \sqrt{f(s) \left( 1 - \left( 1- \rho^2 \right) f(s) \right) - 2 g(s)} \right), \quad s \in [0,T].
    \end{equation*}
    Set the real functions $a$ and $b$ such that, for any $s \in [0,T]$
    \begin{equation*}
        a(s) + i b(s) = \sqrt{f(s) \left( 1 - \left( 1- \rho^2 \right) f(s) \right) - 2 g(s)},
    \end{equation*}
    square the above equality, identify the real and imaginary parts, find $a$ unambiguously on $E$ (which imposes $a \neq 0$) as a root to a fourth-degree polynomial knowing the square roots in \eqref{eq:psi0} are principal (i.e.~with positive real parts), then deduce the associated $b$, such that
    \begin{align*}
        a & = \sqrt{\frac{1}{2} \left( \left( 1 - \rho^2 \right)\left( \Im f \right)^2 + \sqrt{\left( 1 - \rho^2 \right)^2\left( \Im f \right)^4 + \left( \Im f - 2 \Im g\right)^2} \right)}, \\
        b & = \frac{\frac{1}{2}\Im f - \Im g}{a}.
    \end{align*}

    \noindent Consequently, we get explicitly on $E$
    \begin{equation*}
        \psi_0 = \Re \psi_0 + i \Im \psi_0,
    \end{equation*}
    with
    \begin{align*}
        \Re \psi_0 & = -\xi^{-1} a,\\
         \Im \psi_0 & = -\xi^{-1}\left( \rho \Im f + \frac{\frac{1}{2}\Im f - \Im g}{a} \right).
    \end{align*}
    
    \noindent Rewrite $\gamma$ as
    \begin{equation*}
        \gamma = \sqrt{ 1 + \tilde \gamma^2}, \quad \tilde \gamma := \frac{\Im \psi_0}{\Re \psi_0},
    \end{equation*}
    and we can readily discard the case $\rho=0$ and $\Im f - 2 \Im g=0$, since $\Im \psi_0 = 0$ in that case and the ratio simplifies into $1$ which yields the result. Assume from now on that $\rho \neq 0$ or $\Im f - 2 \Im g \neq 0$. We can write
    \begin{equation*}
        \tilde \gamma = \bold{A}\left( \Im f, \Im g \right) + \bold{B}\left( \Im f, \Im g \right),
    \end{equation*}
    with
    \begin{align*}
        \bold{A}\left( \Im f, \Im g \right) & := \frac{\rho \Im f}{\sqrt{\frac{1}{2} \left( \left( 1 - \rho^2 \right)\left( \Im f \right)^2 + \sqrt{\left( 1 - \rho^2 \right)^2\left( \Im f \right)^4 + \left( \Im f - 2 \Im g\right)^2} \right)}},\\
        \bold{B}\left( \Im f, \Im g \right) & := \frac{\frac{1}{2}\Im f - \Im g}{\frac{1}{2} \left( \left( 1 - \rho^2 \right)\left( \Im f \right)^2 + \sqrt{\left( 1 - \rho^2 \right)^2\left( \Im f \right)^4 + \left( \Im f - 2 \Im g\right)^2} \right)},
    \end{align*}
    so that there are three remaining cases for which it is sufficient to show that both $\bold{A}\left( \Im f, \Im g \right)$ and $\bold{B}\left( \Im f, \Im g \right)$ are bounded to conclude the proof.
    \begin{itemize}
        \item Case $\rho \neq 0$ and $\Im f - 2 \Im g = 0$, then
        \begin{align*}
            \bold{A}\left( \Im f, \Im g \right) & = \frac{\rho}{\sqrt{1-\rho^2}},\\
            \bold{B}\left( \Im f, \Im g \right) & = 0,
        \end{align*}
        are both bounded, recall $\rho \in (-1,1)$.

        \item Case $\rho = 0$ and $\Im f - 2 \Im g \neq 0$, then
        \begin{align*}
            \bold{A}\left( \Im f, \Im g \right) & = 0,\\
            \bold{B}\left( \Im f, \Im g \right) & = \frac{\frac{1}{2}\Im f - \Im g}{\frac{1}{2} \left( \left( \Im f \right)^2 + \sqrt{\left( \Im f \right)^4 + \left( \Im f - 2 \Im g\right)^2} \right)},
        \end{align*}
        and $\bold{B}\left( \Im f, \Im g \right)$ is bounded since $\left( x, y \right) \mapsto \bold{B}\left( x, y \right)$ is continuous on any compact set of $\mathbb{R}^2\backslash \left\{(0,0)\right\}$ (recall both $\Im f$ and $\Im g$ are bounded) and has a finite limit at $\left(0,0\right)$, valued $1$, indeed
        \begin{equation*}
            \bold{B}\left( x, y \right) = \frac{1}{\left( 1-\rho^2\right) \frac{x^2}{x-2y} + \sqrt{\left( 1-\rho^2\right)^2\frac{x^4}{\left( x-2y \right)^2}+1}} \underset{\underset{x-2y \neq 0}{\left(x,y\right) \to \left(0,0\right)}}{\longrightarrow} 1.
        \end{equation*}

        \item Case $\rho \neq 0$ and $\Im f - 2 \Im g \neq 0$, then
        \begin{align*}
            \bold{A}\left( \Im f, \Im g \right) & = \frac{\rho \Im f}{\sqrt{\frac{1}{2} \left( \left( 1 - \rho^2 \right)\left( \Im f \right)^2 + \sqrt{\left( 1 - \rho^2 \right)^2\left( \Im f \right)^4 + \left( \Im f - 2 \Im g\right)^2} \right)}},\\
            \bold{B}\left( \Im f, \Im g \right) & = \frac{\frac{1}{2}\Im f - \Im g}{\frac{1}{2} \left( \left( 1 - \rho^2 \right)\left( \Im f \right)^2 + \sqrt{\left( 1 - \rho^2 \right)^2\left( \Im f \right)^4 + \left( \Im f - 2 \Im g\right)^2} \right)},
        \end{align*}
        are both bounded by continuity of $\left( x, y \right) \mapsto \bold{A}\left( x, y \right)$ and $\left( x, y \right) \mapsto \bold{B}\left( x, y \right)$ 
        on any compact set of $\mathbb{R}^2\backslash \left\{(0,0)\right\}$ (recall both $\Im f$ and $\Im g$ are bounded) and both functions have a finite limit at $\left(0,0\right)$, valued $0$ and $1$ respectively, obtained with similar arguments as in the previous case.
    \end{itemize}
\end{proof}
}

\begin{definition} \label{D:definitions_ig_levy_nig_ig}
(Inverse Gaussian, Lévy and NIG-IG distributions)
    \begin{itemize}
        \item We say X follows a Normal Inverse distribution, denoted $X \hookrightarrow \textit{IG}\left( \mu, \lambda \right)$ if its probability density writes
    \begin{equation*}
        f(x) = \sqrt{\frac{\lambda}{2\pi x^3}} \exp \left( -\frac{\lambda \left( x-\mu \right)^2}{2 \mu^2 x}\right),
    \end{equation*}
    where $\mu \in \mathbb{R}$, $\lambda>0$, or equivalently if the following equality holds true
    \begin{equation*}
        \mathbb{E} \left[ \exp \left( w X \right) \right] = \exp \left( \frac{\lambda}{\mu} \left( 1 - \sqrt{1-\frac{2\mu^2 w}{\lambda}} \right) \right), \quad w \in \mathbb{C}, \quad \Re w \leq 0.
    \end{equation*}

    \item We say $\tau$ follows a Lévy distribution, denoted $\tau \hookrightarrow \text{Lévy} \left( \mu, c\right)$, if its probability density writes
    \begin{equation*}
        \sqrt{\frac{c}{2\pi}}\frac{e^{-\frac{c}{2(x-\mu)}}}{\left( x-\mu \right)^{3/2}},
    \end{equation*}
    where $\mu \in \mathbb{R}$, $c>0$, or equivalently if the following equality holds true
    \begin{equation*}
        \mathbb E\left[\exp \left( w \tau \right)\right] = \exp \left( \mu w - \sqrt{-2c w} \right), \quad w \in \mathbb{C}, \quad \Re w \leq 0.
    \end{equation*}

    \item We say $\left( X, Y \right)$ follows a Normal Inverse Gaussian - Inverse Gaussian distribution, denoted $\left( X, Y \right) \hookrightarrow \text{NIG-IG} \left( \alpha, \beta, \mu, \delta, \lambda \right)$, if its characteristic function writes
    \begin{equation*}
        \mathbb E \left[ \exp \left( iu X + iv Y \right) \right] = \exp \left[ i \mu u + \delta \left( \sqrt{\alpha^2 - \beta^2} -\sqrt{\alpha^2 - 2i\lambda v -\left( \beta + iu \right)^2} \right) \right], \quad u, v \in \mathbb{R},
    \end{equation*}
    \end{itemize}
\end{definition}

\section{Additional plots} \label{ss:additional_plots}

\begin{figure}[H]
\begin{center}
\includegraphics[scale=0.3,angle=0]{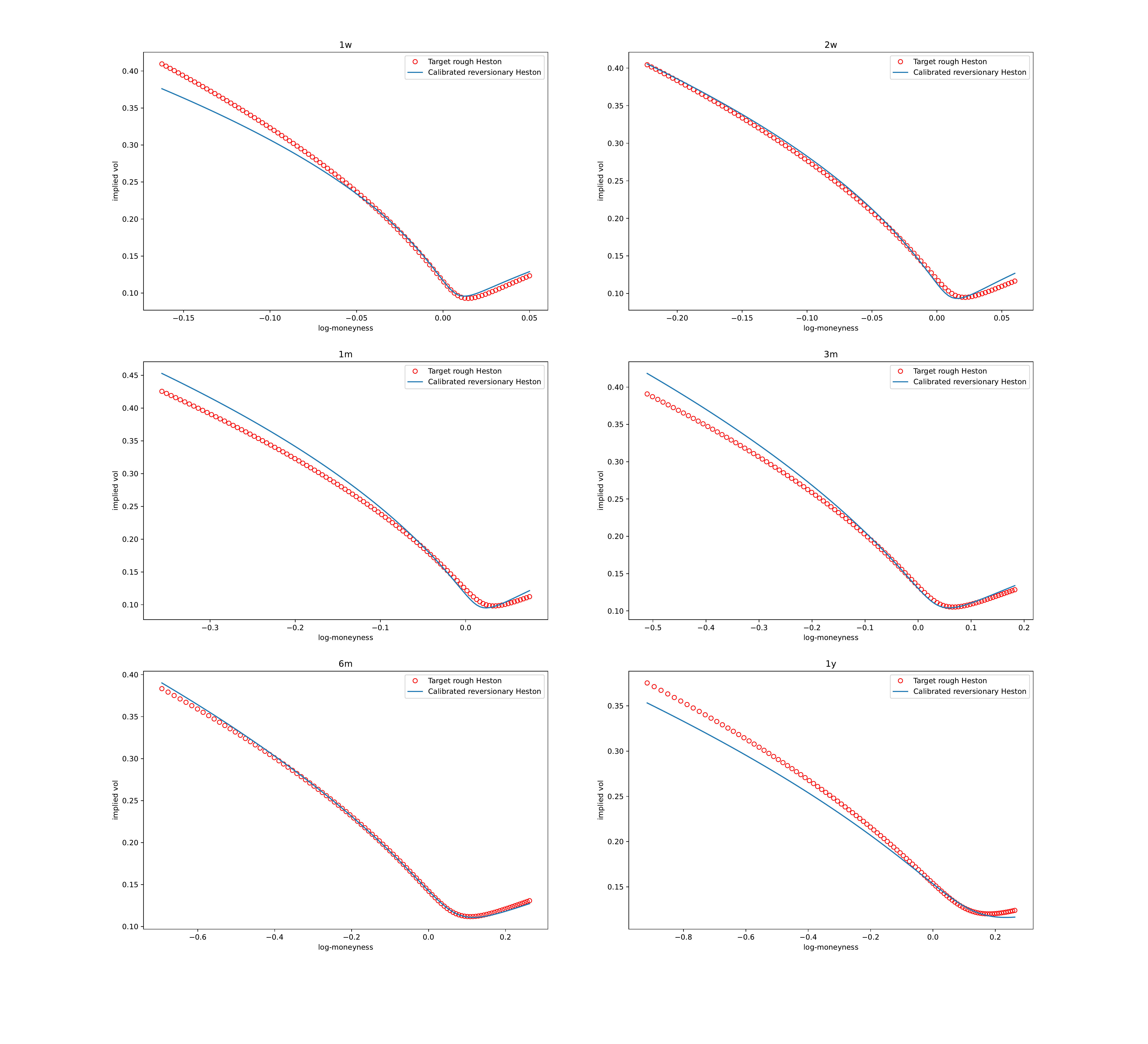}
\vspace*{-0.2in}
\caption{Smiles comparison between target rough Heston with parameters \eqref{eq:roughHestonParams}, with $H=0$, and reversionary Heston with calibrated parameters from the second row of Table \eqref{tab:calibrated_H_eps} for different maturities from one week to one year.}
\label{fig:smilesRoughHestonH0RevHeston}
\end{center}
\end{figure}

\begin{figure}[H]
\begin{center}
\includegraphics[scale=0.3,angle=0]{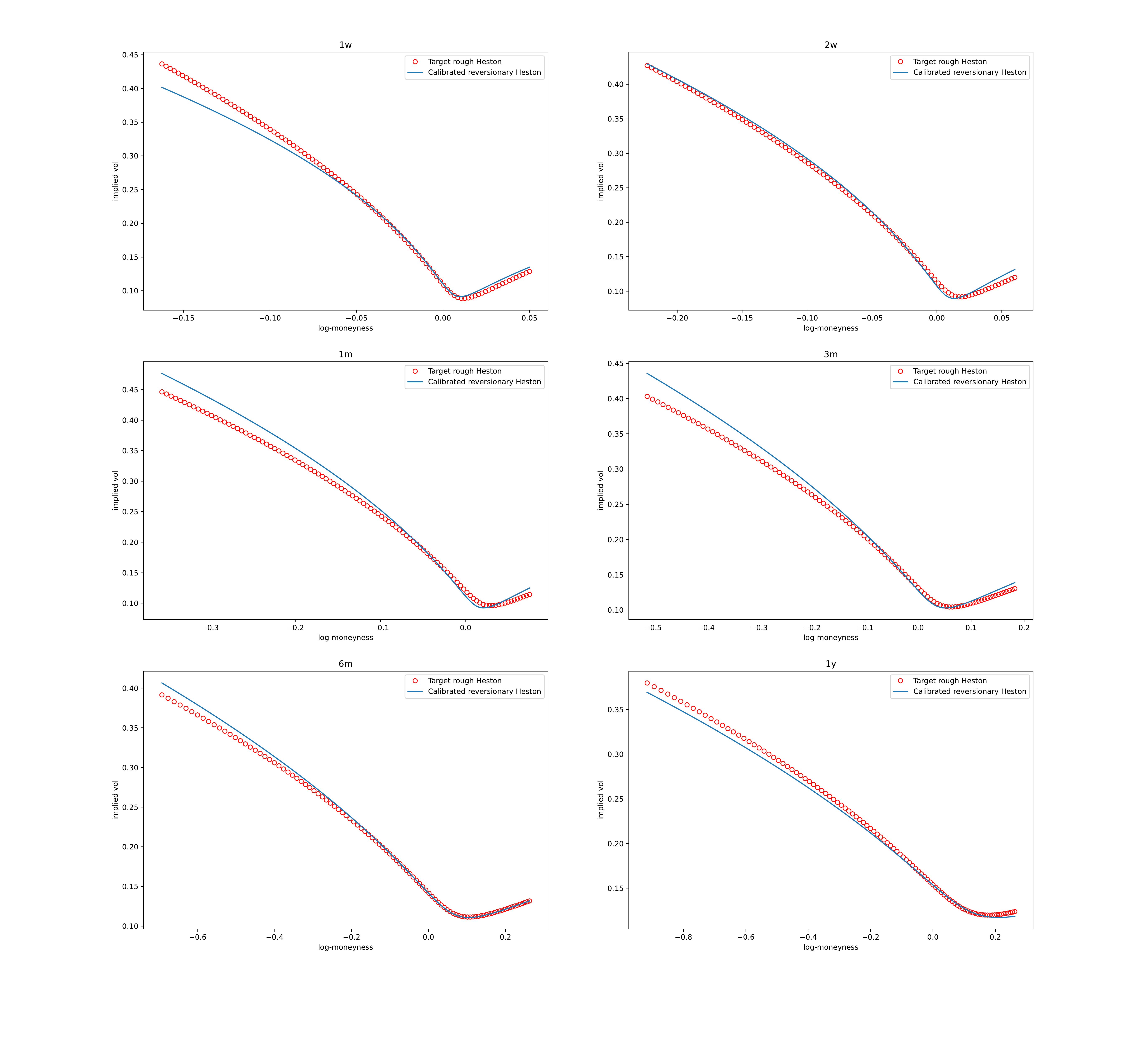}
\vspace*{-0.2in}
\caption{Smiles comparison between target rough Heston with parameters \eqref{eq:roughHestonParams}, with $H=-0.05$, and reversionary Heston with calibrated parameters from the third row of Table \eqref{tab:calibrated_H_eps} for different maturities from one week to one year.}
\label{fig:smilesRoughHestonHminus0dot05RevHeston}
\end{center}
\end{figure}

\bibliographystyle{plainnat}
\bibliography{bibl.bib}

\end{document}